\documentclass{article}
\usepackage{authblk} 

\usepackage[margin=1in]{geometry}
\usepackage{graphicx,url} 
\usepackage{amsmath,amssymb,amsthm,amsfonts}
\newcommand{\R}{{\mathbb R}}\newcommand{\N}{{\mathbb N}}
\newcommand{\Z}{{\mathbb Z}}

\newcommand{\sech}{\mathrm{sech}}

\DeclareMathOperator{\const}{\mathrm{const.}}
\DeclareMathOperator{\sinc}{\mathrm{sinc}}

\usepackage{times,fourier}
\usepackage{color}
\usepackage{cancel}
\usepackage[normalem]{ulem}
\usepackage{todonotes}
\usepackage[colorlinks=true,citecolor=blue]{hyperref}
\usepackage{mathtools}

\def\dn{\mathop{\rm dn}\nolimits}

\DeclareMathOperator{\range}{Rg}
\numberwithin{equation}{section}
\def\@#1{\mathbf{#1}}

\newtheorem{theorem}{Theorem}[section]

\newtheorem{lemma}[theorem]{Lemma} 

\newtheorem{remark}[theorem]{Remark}

\def\bse{\begin{subequations}}
\def\ese{\end{subequations}}

\title{Traveling and Dispersive Shock Waves in a Two-Dimensional Fermi-Pasta-Ulam-Tsingou Lattice}

\author[1]{Christopher Chong}

\author[2,3,4]{Panayotis G. Kevrekidis}

\author[5]{Gino Biondini}

\author[6]{Wolfgang Reichel}

\affil[1]{Department of Mathematics, Bowdoin College, Brunswick, Maine 04011}
\affil[2]{Department of Mathematics and Statistics, University of Massachusetts, Amherst, Massachusetts 01003}
\affil[3]{Department of Physics, University of Massachusetts Amherst, Massachusetts 01003,USA}
\affil[4]{Department of Mechanical Engineering, Seoul National University,
1 Gwanak-ro, Gwanak-gu, Seoul 08826, South Korea}

\affil[5]{Department of Mathematics, State University of New York at Buffalo, Buffalo, New York 14260}
\affil[6]{Institute for Analysis, Karlsruhe Institute of Technology (KIT), D-76128 Karlsruhe, Germany.}
\date{\small\today}

\begin{document}

\maketitle

\begin{abstract}
In the present work we analyze traveling and dispersive shock waves of a two-dimensional Fermi-Pasta-Ulam-Tsingou lattice. 
In the first part of the paper, using variational techniques we prove the existence of both periodic and solitary traveling waves for convex potentials. 
In the case of unimodal profiles we are able to remove the assumption of convexity. 
The variational formulation also provides a natural algorithm for the numerical computation of traveling waves, which we use to explore both solitary and periodic traveling waves.
The numerical computations are compared with analytical approximations based on the derivation of the KdV equation for quasi-one-dimensional propagation.
In the  second  part of the paper, we focus on dispersive shock waves (DSWs), which are expanding modulated waves that connect states of different amplitude.
In particular, we focus on line DSWs, which are constant along one direction and propagate in the direction orthogonal to which  it is constant. 
Such solutions form when subject to  quasi-one-dimensional jump initial data.
We find that while the shape of the DSW depends on the direction of travel, properties such as the speed and amplitude do not.
The systematic numerical study of the line~DSWs is then compared to those predicted by the KdV equation along the line of propagation.
Key characteristics of the DSWs, such as the speeds of the trailing and leading edges, are investigated for various jump heights,  yielding good agreement between simulation and KdV approximation in the limit of vanishing jump height.
Finally, we apply the DSW fitting method to study the trailing
and leading edge characteristics of the DSW, finding even better
agreement to the numerics when compared to the KdV prediction.
The KdV prediction and DSW fitting predictions agree in the limit of small jump height.

\end{abstract}

\section{Introduction}

A dispersive shock wave (DSW) is a type of nonlinear, expanding wave structure that arises in dispersive media, where a sharp gradient in initial conditions evolves into a region of rapid, modulated oscillations due to the interplay of nonlinearity and dispersion. 
The study of DSWs has gained considerable recent attention in a variety of settings, including, but not limited to, mechanical metamaterials, superfluids and atomic
gases, nonlinear optics, water waves, and plasmas, as summarized, e.g., in~\cite{scholar,Mark2016,Whitham74}. 
Much of the research on DSWs has focused on continuous systems,
although discrete systems have also gained some attention. For example, granular media offered some of the early experimental realizations of {\it discrete}
DSWs, e.g., in the works of~\cite{Herbold07}  and~\cite{Molinari2009}.
More recent examples of discrete DSWs include those
in chains of hollow elliptic cylinders~\cite{HEC_DSW},
as well as in tunable magnetic lattices~\cite{talcohen}.

Dispersive shock waves in two-dimensional (2D) systems are far less explored, even in continuous media. 
Some relevant examples include the study of DSWs in the  Kadomtsev-Petviashvili (KP)  equation, in which models
like the cylindrical  Korteweg-de Vries (KdV)  equation \cite{PHYSD333p84}
or  modified KdV  equation \cite{2DmKdV_mod2023} were derived 
to analyze the resulting DSWs.
Full modulation theory in higher-dimensional models has also been explored \cite{modulation2D}, although not explicitly for the study of DSWs. 

Recently, a general approach to modulation theory for systems in multiple spatial dimensions was formulated in \cite{Ablowitz2017},
where the Whitham modulation equations for the KP equation were studied.
The resulting equations were then used in \cite{Ablowitz2017} to study the transverse stability of the elliptic traveling wave solutions of the KP equation.
Suitable reductions of the modulation equations were also used in \cite{JFM2021v909pA24,NLTY2021v34p3583,KPII_Mach2022} to study 
a variety of dynamical problems for the KP equation, including the Mach expansion of solitons and the oblique interactions between solitons, DSWs and rarefaction waves. 
A similar methodology was then used to derive modulation equations and study the stability of periodic waves 
in the 2D Benjamin-Ono equation \cite{PRE2017v96p032225}, the 2D NLS equation \cite{2DNLS_mod2022,2DNLS_mod2023} and the 2D Zakharov–Kuznetsov equation \cite{2DZK_mod2023}
 among others.   

 On the other hand, 
while periodic waves \cite{Bak2011,Feckan2007} and solitons \cite{Bak2011, Friesecke2003,Wattis94, jpa2010,Herrmann2018} have been
explored in various 2D lattice settings,
to the best of the authors' knowledge, there are no studies of DSWs in 2D lattices. 
In the present paper, we explore DSWs in a scalar  2D  analog of
the classical Fermi-Pasta-Ulam-Tsingou (FPUT) equation. 
Since a DSW is a periodic wave that is modulated in such a way as to connect states of different amplitude, it is crucial to first have an understanding of traveling solitary and periodic waves of the underlying model.
Indeed, the former appear at the front or rear of the DSWs
(depending on the DSW orientation), while the latter are modulated
throughout the DSW.
For this reason, the first part of our paper is dedicated to traveling waves. 
 Specifically, in section~\ref{sec:TWs} 
we prove the existence of periodic and solitary waves using a variational approach and develop a numerical fixed-point algorithm built upon that. 
We then derive a  KdV equation approximation  to analytically describe such solutions 
in the limit of small amplitude. 
The second part of the paper is dedicated to the study of DSWs using systematic numerical simulations
and an analytical approximation using the KdV equation. 
 In section~\ref{sec:DSWs}, we explore  
the spatial profiles and leading and trailing edge characteristics of the DSWs  for various  values of the  parameters, showing good agreement between simulation and asymptotic approximation. 
This being an early study of 2D lattice DSWs, we restrict our attention to ''line" solutions, namely ones that travel along a single propagation direction along the lattice and are constant in the direction orthogonal to the propagation. While properties such as the amplitude and speed do not depend on the angle of propagation, we find that the wavenumber and frequency do. 
 The remainder of this section contains the detailed problem statement and model equations,
while 
section~\ref{s:conclusions} offers some concluding remarks and possible future directions for further work. 
Various technical results are relegated to an appendix.  

\subsection*{Problem Statement and Model Equations}

\begin{figure}[t!] %
    \centerline{
   \begin{tabular}{@{}p{0.42\linewidth}@{}p{0.42\linewidth}@{}}
     \rlap{\hspace*{5pt}\raisebox{\dimexpr\ht1-.1\baselineskip}{\bf (a)}}
 \includegraphics[height=5cm]{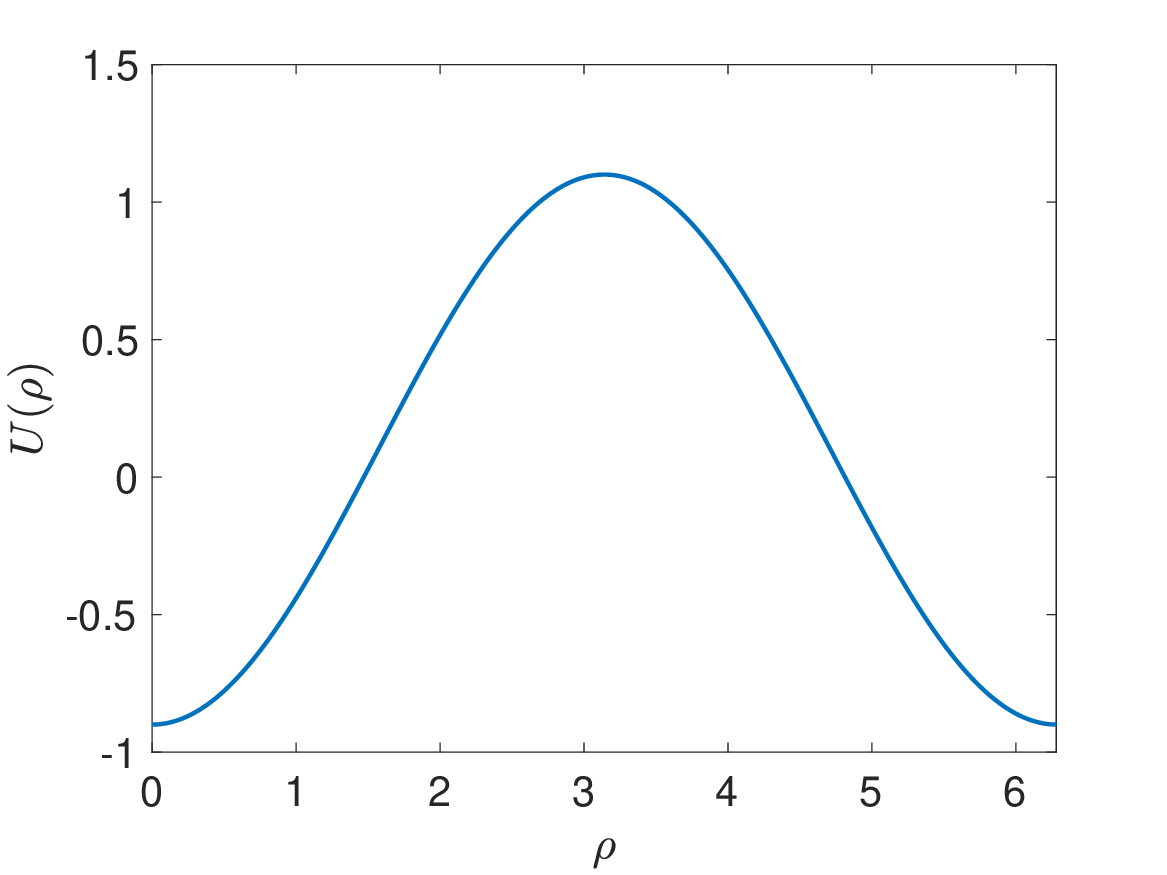}  &
   \rlap{\hspace*{5pt}\raisebox{\dimexpr\ht1-.1\baselineskip}{\bf (b)}}
 \includegraphics[height=5cm]{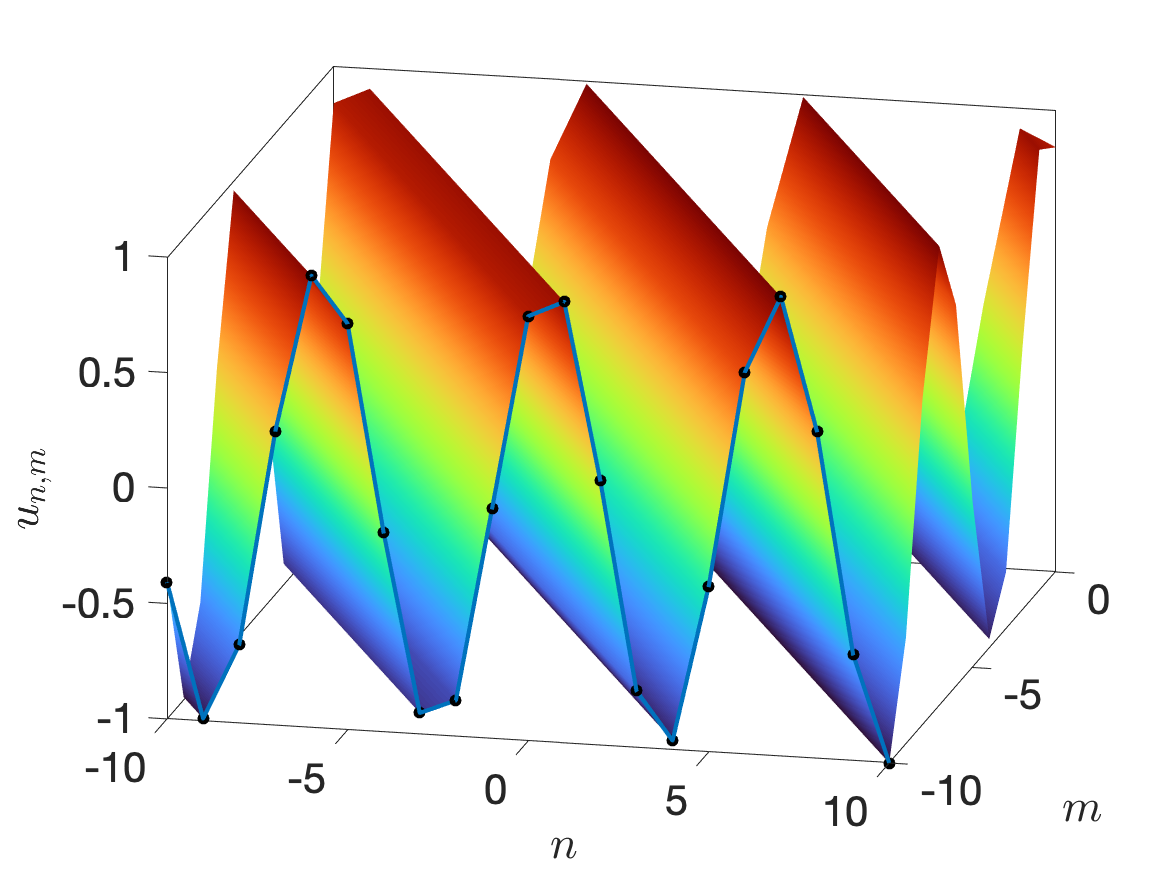} 
  \end{tabular}
  }
   \caption{\textbf{(a)} Example of a unimodal periodic traveling wave profile
plotted against the coordinate $\rho$.  Both 1D lattice and 2D lattice
travelings are fully described by the profile as a function of just
$\rho.$ \textbf{(b)} 
   The traveling wave shown in (a) in a 2D representation with
   the wavevector chosen such that $r=1$ and $k_2/k_1=1$.
   }
   \label{fig:TWexamples}
\end{figure}

The 1D FPUT equation \cite{FPU55} written in the difference form is 
\begin{equation} \label{eq:model1D}
    \ddot{u}_{n} = \Phi'(u_{n-1}) - 2 \Phi'(u_n) + \Phi'(u_{n+1}), \quad  n\in \Z.
\end{equation}
The classical situation is where the potential function ${\Phi: \R\to\R}$ has the form
\begin{equation}  \label{eq:def_phi}
\Phi(u) = \frac{a_1}{2} u^2 + \frac{a_2}{3} u^3 + \frac{a_3}{4} u^4. 
\end{equation}
This model is relevant in a host of settings, including
phononic, electrical, and biological systems (among others) \cite{Berman2005}.
 The linearized
equations have plane wave solutions of the form $u_n(t) = e^{i(kn-\omega t)}$
where the dispersion relation is given by $$\omega^2(k) = 4a_1\sin^2(k/2),$$
where $k$ is the wavenumber, $\omega$ is the frequency and $a_1$ is
the linear coefficient of $\Phi'(u)$. Note that the sound speed
is $\sqrt{a_1}$. More interesting is the plethora of nonlinear behaviors that
can be found in the FPUT model. This includes the classical 
recurrence phenomenon \cite{Berman2005,FPUreview}, breathers \cite{Flach2007},
traveling waves \cite{VAINCHTEIN2022,pegof2,pegogf1,pegof3,pegof4} and
DSWs \cite{Holian81,ruffo}. Of the extensive list of such behavior,
the latter two are most relevant for the present paper. We
briefly touch on those topics now.

Traveling waves of Eq.~\eqref{eq:model1D}
have the form 
\begin{equation}\label{eq:ansatz1D}
u_n(t) = U(\rho) + \bar{u}, \qquad \rho = kn - \omega t
\end{equation}
where $k$ is the wavenumber, $\omega$ is the frequency, and $\bar{u}$ is the mean.
The wave profile
$U$ satisfies the advance delay equation
\begin{equation} 
\label{eq:advdelay1D}
     \omega^2 \frac{d^2U}{d\rho^2}(\rho)  = \Phi'(\bar{u}+
   U(\rho-k))-2\Phi'(\bar{u}+
   U(\rho)) + \Phi'(\bar u
   +U(\rho+k)), 
\end{equation} 
which is obtained upon substituting Eq.~\eqref{eq:ansatz1D} into Eq.~\eqref{eq:model1D}.
Thus, while traveling waves depend on space and time, they can be fully described
by a single dependent variable by virtue of Eq.~\eqref{eq:advdelay1D}. 
See Fig.~\ref{fig:TWexamples}(a) for example.
The existence of traveling waves has been proven for a large class
of lattice models, including Eq.~\eqref{eq:model1D}, see for example \cite{Wattis,MacKay99,Venakides99,DHM06,Stefanov,PankovFPU}.

Dispersive shock waves, on the other hand, will form in Eq.~\eqref{eq:model1D}
when initialized with Riemann (step) initial data. For example,
\begin{equation}\label{eq:step1}
u_n(0) = \begin{cases}
    \delta{\sqrt{a_1}}/{a_2}, \quad & n \leq 0,\\
    0, \quad & n > 0,
    \end{cases}
\end{equation}
where $\delta {\sqrt{a_1}}/{a_2}$ is the jump height and the initial velocity is chosen appropriately;
$\dot{u}_n=0$ is one natural choice.
The factor ${\sqrt{a_1}}/{a_2}$ will 
simplify notation later on.
Figure~\ref{fig:DSWexamples}(a) shows an example
of an initial condition that leads to the formation of a DSW, shown in 
Fig.~\ref{fig:DSWexamples}(b). The oscillatory region of the DSW
is often referred to as the ``core" of the DSW. Where the oscillatory
region meets the constant state at the  back (around $u=0.2$ 
in the example) is called the trailing edge, while
the peak at the front of the DSW (located near lattice site $n=120$
in the example) is called the leading edge, which is
actually well described by a solitary wave solution
of the lattice. In particular, it is a traveling wave with zero wavenumber.
Indeed, in each small window of time and space (not just at the front)
the DSW shown in Fig.~\ref{fig:DSWexamples}(b) appears to be 
a traveling wave of the form shown in Eq.~\eqref{eq:ansatz1D}, but with
 values of the underlying parameters (e.g., $k,\omega,\bar{u}$)
depending on the spatio-temporal window. Hence,
to describe a DSW, one often seeks equations describing
how those wave parameters vary across the core of the DSW. The resulting
equations are called Whitham modulation equations \cite{Whitham74}.
While they are well-known for the 1D model \cite{Venakides99,DHM06,blochkodama} the equations
themselves can be cumbersome to work with. Despite this,
there has been much recent progress on the study of DSWs in 1D
lattices. For example, recent work has leveraged integrable models such as the Korteweg-de Vries
(KdV) equation and the Toda lattice~\cite{Ari2024,Gino2024}
to approximate 1D DSWs. Asymptotic techniques such
as the DSW fitting method of~\cite{El2005} have also been used
to quantitatively characterize problems in such 1D discrete~\cite{Sprenger2024} 
or/and metamaterial settings~\cite{yang2024regularizedcontinuummodeltraveling}.
Data-driven methods including dimension-reduction have
also been brought to bear to produce effective,
as well as quasi-continuum variants of the relevant
lattice models~\cite{CHONG2022133533}.

\begin{figure}[t!] 
    \centerline{
   \begin{tabular}{@{}p{0.42\linewidth}@{}p{0.42\linewidth}@{}}
     \rlap{\hspace*{5pt}\raisebox{\dimexpr\ht1-.1\baselineskip}{\bf (a)}}
 \includegraphics[height=5cm]{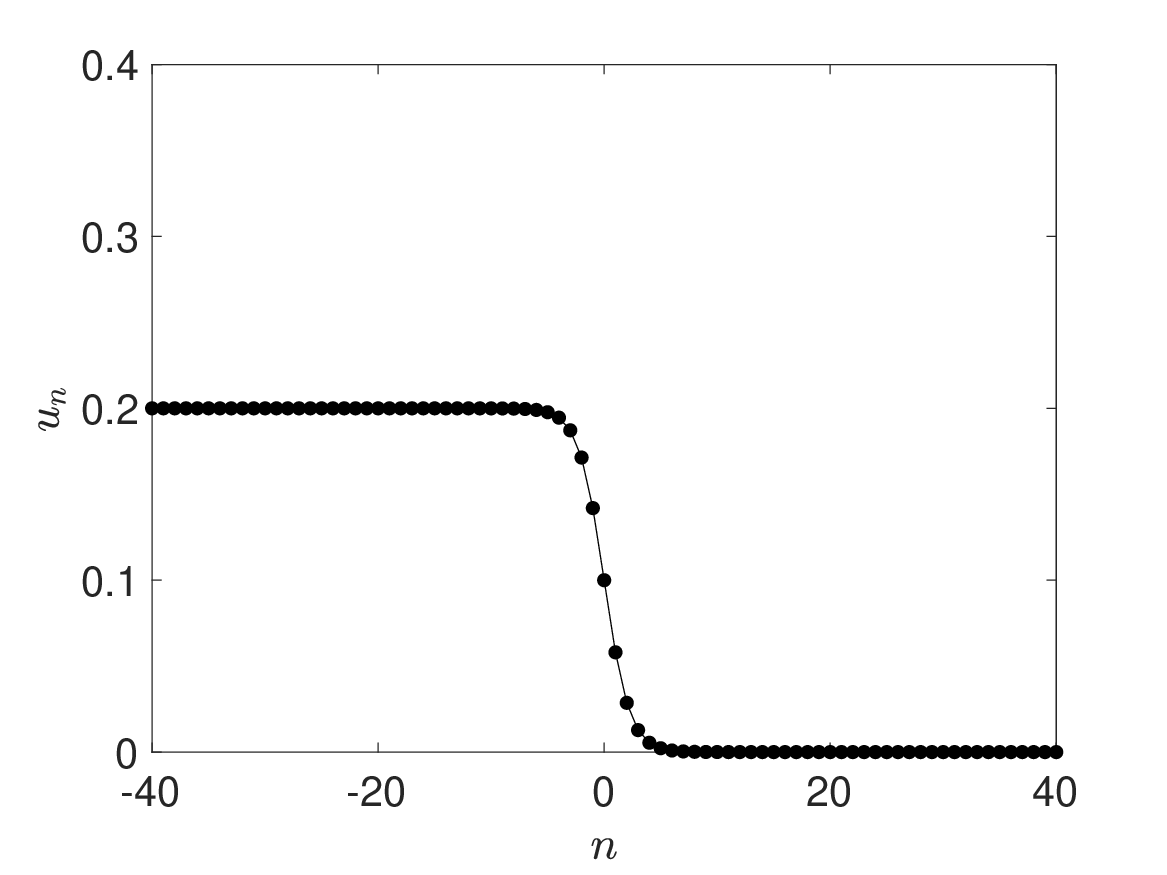}  &
   \rlap{\hspace*{5pt}\raisebox{\dimexpr\ht1-.1\baselineskip}{\bf (b)}}
 \includegraphics[height=5cm]{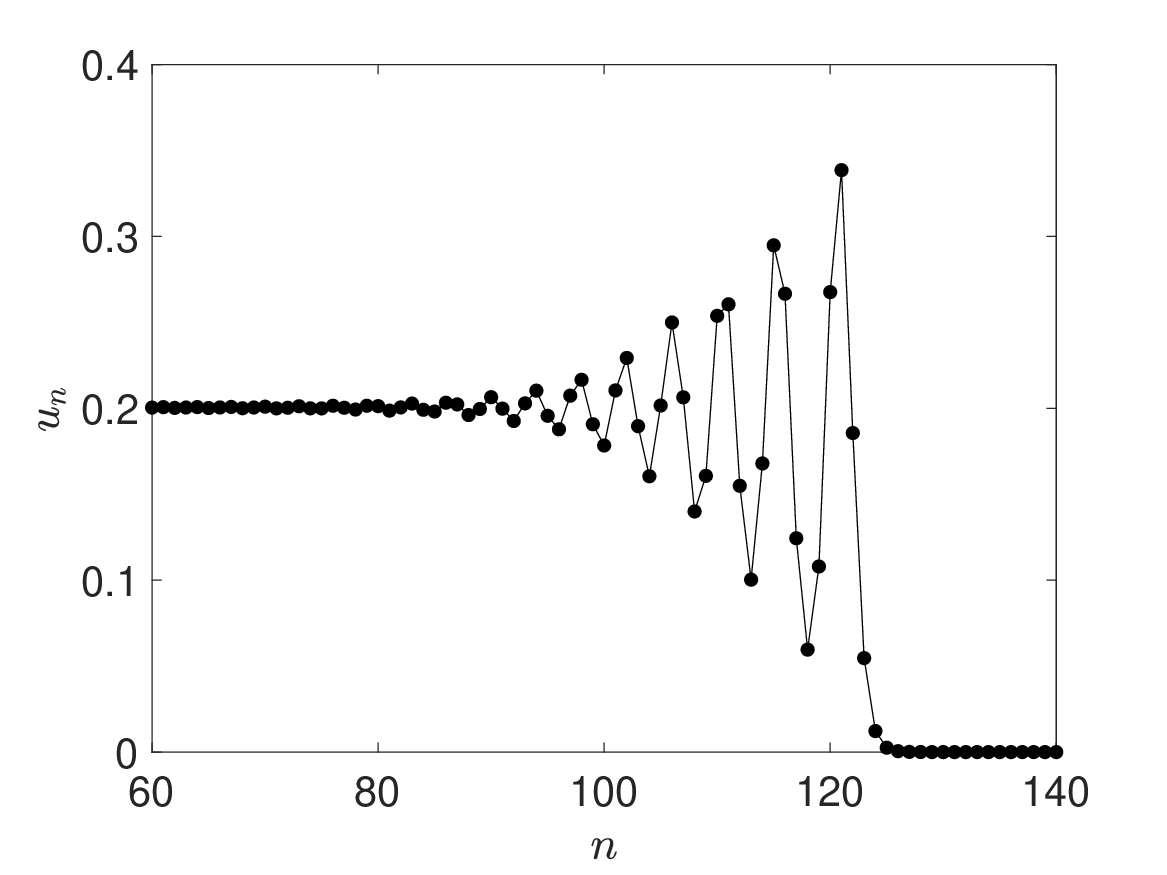} 
  \end{tabular}
  }
      \centerline{
   \begin{tabular}{@{}p{0.42\linewidth}@{}p{0.42\linewidth}@{}}
     \rlap{\hspace*{5pt}\raisebox{\dimexpr\ht1-.1\baselineskip}{\bf (c)}}
 \includegraphics[height=5cm]{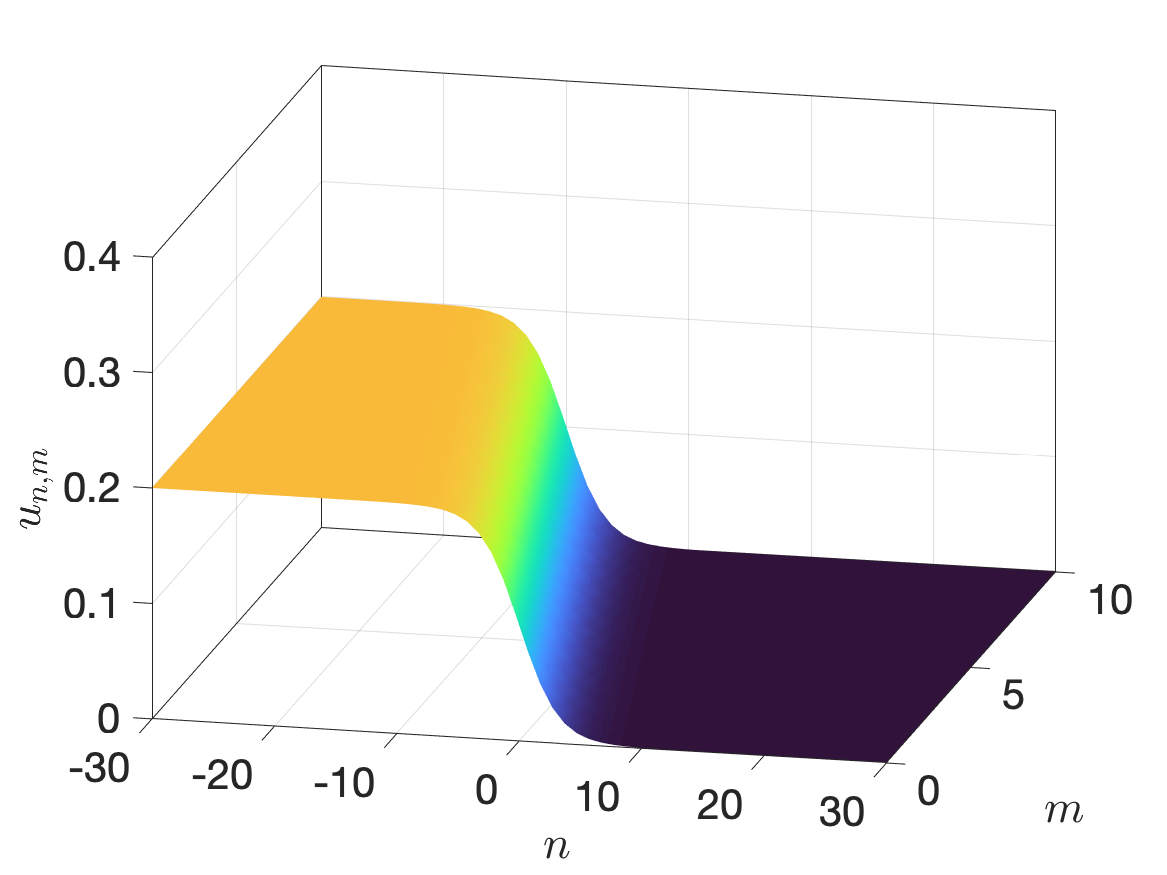}  &
   \rlap{\hspace*{5pt}\raisebox{\dimexpr\ht1-.1\baselineskip}{\bf (d)}}
 \includegraphics[height=5cm]{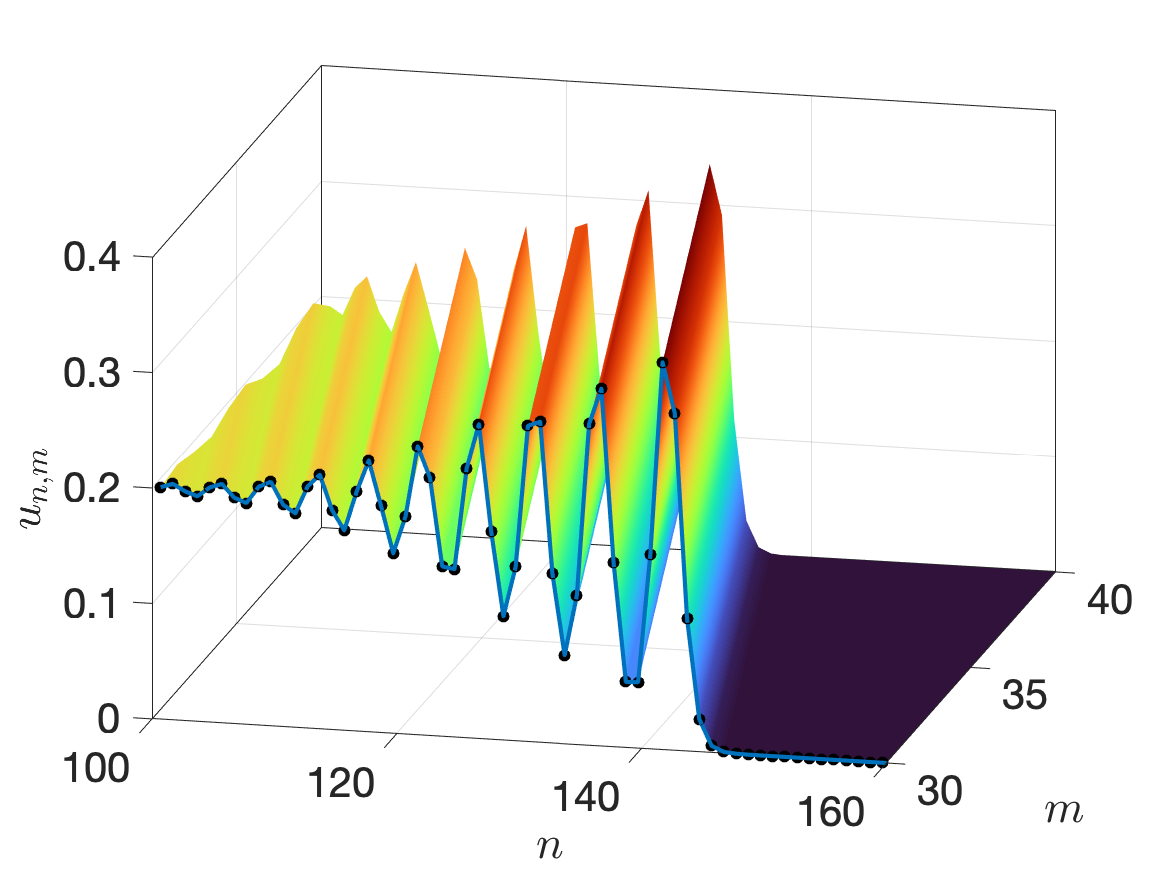} 
  \end{tabular}
  }
   \caption{\textbf{(a)} Possible initial condition that leads to the formation of
   a DSW in the 1D lattice with jump height $\delta=0.2$. The example shown is a smoothed variant of the Riemann data in Eq.~\eqref{eq:step1}. Such smoothing is necessary for asymptotic approximations, to be described later in the text. \textbf{(b)}
   The lattice DSW that forms from the initial data given in panel (a). 
\textbf{(c)} Possible initial condition that leads to the formation of
   a DSW in the 2D lattice with jump height $\delta=0.2$. Here the slope of the wavevector is $k_2/k_1=1$. The example shown is a smoothed variant of the Riemann data in Eq.~\eqref{eq:step2}. \textbf{(d)}
   The lattice DSW that forms from the initial data given in panel (c).
   The solution propagates along the lattice in the direction given by
   $(k_1,k_2)$. 
   }  
    \label{fig:DSWexamples}
\end{figure}

In the present paper, our focus is the study of traveling and dispersive shock
waves ${  u:\Z^2\times\R\to\R}$ in a scalar two-dimensional (2D) analog of
the FPUT model, namely
\begin{equation} \label{eq:model}
    \ddot{u}_{n,m} = \Delta_{d} \Phi'(u), \qquad 
    (n,m) \in \Z^2,
\end{equation}
where $\Delta_{d} u = u_{n+1,m}+u_{n-1,m}+u_{n,m+1}+u_{n,m-1} - 4 u_{n,m} $.
Besides being a model for a 2D (conservative variant of a) transmission line \cite{Butt_2006},
Eq.~\eqref{eq:model} has the additional benefit of having a Hamiltonian structure. 
The dispersion
relationship for Eq.~\eqref{eq:model} is obtained
by substituting the plane wave $u_{n,m}(t)=e^{i(k_1n + k_2m - \omega t)}$ into the linearized equations of motion, leading to
\begin{equation} \label{eq:disp}
   \omega^2(\mathbf{k}) = 4a_1( \sin^2(k_1/2) + \sin^2(k_2/2)) 
\end{equation}
where once again $\omega$ is the frequency, $a_1$ is the linear coefficient of $\Phi'$
and now the single wavenumber has been replaced by a wave vector $\mathbf{k} = (k_1,k_2)$.
It will be convenient to define $k_1 = r \cos(\theta)$, $k_2 = r \sin(\theta)$ 
such that $r^2=k_1^2 + k_2^2$. Then, fixing $\theta$, we have from Eq.~\eqref{eq:disp} that
$$ \lim_{r \rightarrow 0} \omega'(r) = \sqrt{a_1} $$
such that the sound speed is $c_s = \sqrt{a_1}$. Similar to the 1D model,
the linear situation is fairly straightforward. The presence of nonlinearity allows
for more interesting dynamics. For example, traveling waves of Eq.~\eqref{eq:model} have the form
\begin{equation} \label{eq:tw}
    u_{n,m}(t) = u(\rho) = U(\rho) + \bar{u}
\end{equation}
where the phase variable is now $\rho = k_1 n + k_2 m - \omega t$.
For periodic traveling waves we have
$U(\rho) = U(\rho + 2\pi)$ and $(k_1,k_2) \in (0,\pi) \times(0,\pi)$.
Solitary waves are obtained by replacing the $2\pi$ periodicity 
property with $\displaystyle\lim_{|\rho|\rightarrow \infty} U(\rho) = 0$, where
$\omega/r$ represents the solitary wave speed. It is assumed that
$$\bar{U}={  \frac{1}{2\pi}}\int_0^{2\pi} U(\rho) d\rho= 0$$
such that $\bar{u}$ represents the mean of the wave. Upon substitution of Eq.~\eqref{eq:tw} into Eq.~\eqref{eq:model}
we find that
\begin{equation} \label{eq:advdelay}
    \omega^2 \frac{d^2 U}{d\rho^2} = \Delta_\mathbf{k} \Phi'(\bar{u}+U)
\end{equation}
where 
$$  \Delta_\mathbf{k} U{  (\rho)} = U(\rho + k_1) + U(\rho - k_1) + U(\rho + k_2) + U(\rho - k_2) -4 U(\rho). $$
Upon obtaining such a wave profile $U$, one returns to Eq.~\eqref{eq:tw} to obtain the 2D lattice
solution. Fig.~\ref{fig:TWexamples}(a) shows an example of a traveling wave profile $U(\rho)$ and its
2D representation in Fig.~\ref{fig:TWexamples}(b).  By construction, the 2D solution
(shown in (b)) will be constant along vectors orthogonal to the wavevector $(k_1,k_2)$.
Hence, if one were to cut the lattice along the wavevector $(k_1,k_2)$
the resulting profile would look like a 1D traveling wave. For this reason,
the solutions can be considered line-solutions.
Such traveling waves will be the first main focus of the paper, see Sec.~\ref{sec:TWs}.

With regards to the second focal point of this paper, namely DSWs, a natural starting point would be to consider the 2D analog of the initial data given in Eq.~\eqref{eq:step1}, such as  
\begin{equation} \label{eq:step2}
 u_{n,m}(0) = \begin{cases}
    \delta{\sqrt{a_1}}/{a_2}, \quad & k_1 n + k_2 m \leq 0 \\
    0, \quad & k_1 n + k_2 m > 0,
\end{cases}
\end{equation}
see Fig.~\ref{fig:DSWexamples}(c) for example. Here the slope of the wavevector $(k_1,k_2)$ defines
a natural parameter that could, in principle, yield DSWs with properties depending on that slope. One may expect  the DSW to propagate in the direction of $(k_1,k_2)$,
(namely along the so-called observation direction). A simulation of the 2D lattice
is shown in Fig.~\ref{fig:DSWexamples}(d), which is an example of a 2D lattice DSW. If one
were to slice this 2D DSW along the observation direction, the resulting spatial profile
would resemble the 1D DSW shown in panel (b). These  ``line~DSWs"  are arguably the most
natural starting point for the study of lattice 2D DSWs,  given their
quasi-1D nature and will be examined in
section~\ref{sec:DSWs}.

\section{Traveling Waves} \label{sec:TWs}

We first give an existence proof of traveling waves for two cases.
In one, convexity of the potential $\Phi$ is assumed and in the
other the convexity assumption is lifted for the case of unimodal waves. The key point is to rewrite Eq.~\eqref{eq:advdelay} as a fixed point problem which has an additional variational structure. The variational structure allows us to solve it via constraint maximization where the frequency $\omega$ is found as a Lagrange parameter. The fixed point formulation is the basis for our subsequent discussion of a numerical algorithm to solve Eq.~\eqref{eq:advdelay}. Finally, using the KdV approximation of solitary and periodic waves, we demonstrate that they compare favorably to the numerically obtained ones.

\medskip

{
In 1D settings, variational approaches have been successfully used both for periodic and solitary waves. For FPUT chains with displacements as unknowns, monotone waves were found in \cite{Wattis} by constraint minimization and in \cite{  PankovFPU, Smets_Willem, Schwetlick_Zimmer} by use of the mountain pass method. Relative displacements as unknowns opened another way to existence using a fixed-point formulation and constraint maximization, cf. \cite{Venakides99, pego1, Herrmann10b, herrmann_peridynamics}, from which also relevant numerical algorithms were derived, including in the far
earlier work of~\cite{Hochstrasse}. For 2D settings, we are only aware of non-variational methods in the slightly supersonic regime \cite{Friesecke2003}, formal approximations \cite{Wattis94, Butt_2006}  or variational methods for chains of oscillators with only local forces \cite{Bak2011,Feckan2007}. In our approach for traveling waves in 2D we first find the suitable fixed point formulation. Similar to \cite{Venakides99, Herrmann10b, herrmann_peridynamics} we then solve the fixed point equation by constraint maximization. Unlike for periodic waves, maximization in the context of solitary waves is obstructed by a lack of weak continuity of the functional. In our approach this is overcome by using symmetrically decreasing rearrangements and the general Riesz rearrangement inequality as a new tool, cf. Lemma~\ref{lem:rearrange} below for details. While methodologically different, our main result for unimodal solitary waves (cf. Theorem~\ref{thm:ex2} below) compares with that in the 1D setting from \cite{Herrmann10b}. In particular the issue of strict superquadraticity of the potential $\Phi$ is overcome similarly as in \cite{Herrmann10b,herrmann_peridynamics}.
}

\subsection{Existence of periodic waves} \label{sec:periodic_waves}

{  Clearly $U=0$ trivially solves \eqref{eq:advdelay}. Our main result in this section is the following existence result for non-trivial solutions with non-vanishing wave speed.

\begin{theorem} \label{thm:ex1} Suppose that $\Phi:\R\to\R$ is strictly convex. For every fixed $R>0$ and $\bar{u}\in \R$ there is a non-trivial $2\pi$-periodic traveling wave $u_{n,m}(t)=U(k_1 n + k_2m-\omega t)+\bar u$ of Eq.~\eqref{eq:model} where the profile $U$ satisfies Eq.~\eqref{eq:advdelay} and $\|U\|_\infty \leq \sqrt{\max\{k_1,k_2\}}R, \|U\|_{L^2}\leq \max\{k_1,k_2\}R$. The a-priori unknown parameter $\omega^2 >0$ occurs as a Lagrange multiplier where $\omega$ is the frequency. 
\end{theorem}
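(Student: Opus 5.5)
The plan is to recast Eq.~\eqref{eq:advdelay} as a fixed-point problem with a gradient structure, then solve it by constrained maximization on an $L^2$-sphere of radius $R$; the multiplier will be $\omega^2$.

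\textbf{Step 1: averaging-operator form.} For $j=1,2$ let
\[
A_jU(\rho)=\frac{1}{k_j}\int_{-k_j/2}^{k_j/2}U(\rho+s)\,ds .
\]
Each $A_j$ is self-adjoint on $L^2_{per}(0,2\pi)$, has $\|A_j\|\le 1$, and commutes with $\partial_\rho$. Then
\[
U(\rho+k_j)-2U(\rho)+U(\rho-k_j)=k_j^2\,\partial_\rho^2A_j^2U,
\qquad\text{so}\qquad
\Delta_{\mathbf k}=\partial_\rho^2\sum_j k_j^2A_j^2 .
\]
Integrating Eq.~\eqref{eq:advdelay} twice over periodic functions turns it into
\[
\omega^2U=\sum_j k_j^2A_j^2\bigl(\Phi'(\bar u+U)-c\bigr).
\]
Here $c$ is the mean of $\Phi'(\bar u+U)$, and $U$ has mean zero. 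The linear integration constant vanishes by periodicity.

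Work in the mean-zero space $L^2_0$. Define $B:L^2_0\to L^2_0\times L^2_0$ by $BU=(k_1A_1U,k_2A_2U)$. Then $B^*V=k_1A_1V_1+k_2A_2V_2$ and $B^*B=\sum_jk_j^2A_j^2$. Look for $U=B^*V$ with $V\in L^2_0\times L^2_0$.

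\textbf{Step 2: constrained maximization.} Set
\[
P(V)=\int_0^{2\pi}\bigl[\Phi(\bar u+B^*V)-\Phi(\bar u)-\Phi'(\bar u)B^*V\bigr]\,d\rho ,
\]
and maximize $P$ over $\|V\|_{L^2}\le R$.

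\emph{Pointwise bound.} By Cauchy--Schwarz, $|A_jV_j(\rho)|\le k_j^{-1/2}\|V_j\|_{L^2}$. Hence
\[
|B^*V(\rho)|\le\sum_j k_j^{1/2}\|V_j\|\le\sqrt{\max\{k_1,k_2\}}\,\sqrt2\,\|V\|.
\]
The constant needs tidying to match the stated $\sqrt{\max\{k_1,k_2\}}R$, for instance by normalizing $B$ or using the Cauchy--Schwarz pairing more carefully.

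\emph{$L^2$ bound.} $\|B^*V\|_{L^2}\le\max_jk_j\,\|V\|$.

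\emph{Existence of a maximizer.} $A_j$ maps bounded sets of $L^2$ into uniformly bounded, equicontinuous families. So $B^*:L^2\to C_{per}$ is compact, and $P$ is weakly sequentially continuous on the closed ball. A maximizer therefore exists.

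\emph{Non-triviality and the boundary.} By strict convexity the integrand is $\ge0$ and vanishes only where $B^*V=0$. Since $B^*\neq0$ on $L^2_0$, we get $\max P>0$, so the maximizer $V$ has $U=B^*V\neq0$. The map $t\mapsto P(tV)$ is convex with zero derivative at $t=0$ and not constant, hence strictly increasing for $t>0$. So the maximizer lies on the sphere $\|V\|=R$.

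\textbf{Step 3: Lagrange condition.} On the sphere,
\[
B\bigl(\Phi'(\bar u+U)-\Phi'(\bar u)\bigr)=\lambda V
\]
in $L^2_0\times L^2_0$, where $B$ already projects to mean zero. Pairing with $V$ gives
\[
\lambda R^2=\int\bigl(\Phi'(\bar u+U)-\Phi'(\bar u)\bigr)U\,d\rho>0
\]
by strict monotonicity of $\Phi'$ and $U\ne0$. Set $\omega^2=\lambda$. Applying $B^*$ gives $\omega^2U=B^*B\bigl(\Phi'(\bar u+U)\bigr)$ up to the mean, which is the integrated form from Step 1.

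\textbf{Step 4: regularity.} $U=B^*V$ is continuous, so $\Phi'(\bar u+U)$ is continuous. The operator $A_j^2$ maps $C$ to $C^2$: one $A_j$ gives $C^1$ with derivative a difference quotient, and the second gives $C^2$. Hence $U\in C^2$, and differentiating twice recovers Eq.~\eqref{eq:advdelay}.

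The main obstacle is likely bookkeeping rather than analysis: handling the mean-zero constraint consistently, and getting the exact constants in the $\|U\|_\infty$ and $\|U\|_{L^2}$ bounds. The compactness of $B^*$ is what makes the periodic case straightforward.
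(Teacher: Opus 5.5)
Your argument is, up to notation, the paper's proof. On mean-free functions your $B$ and $B^\ast$ coincide with $\partial_\rho^{-1}D$ and $-D^\ast\partial_\rho^{-1}P$ of Lemma~\ref{lem:prop_op}, and your functional $P(V)$ is the paper's $J(V)$ minus the constant $2\pi\Phi(\bar u)$. Existence by compactness, placing the maximizer on the sphere by convexity, and obtaining $\omega^2$ as Lagrange multiplier all run as in Lemma~\ref{lem:ex}. Your proof of $\omega^2>0$ (pair the multiplier equation with $V$ and use strict monotonicity of $\Phi'$) is a slightly more direct substitute for the paper's Jensen equality-case argument. Your regularity step fills in something the paper leaves implicit.

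The one genuine problem is the two norm bounds, and a sharper Cauchy--Schwarz step will not fix it. Your estimate $\|B^\ast V\|_2\le\max_j k_j\,\|V\|_2$ is false. Take $k_1=k_2=1$ and $V_1=V_2=\cos\rho$. Then $k_jA_j\cos\rho=2\sin(k_j/2)\cos\rho$, so $B^\ast V=4\sin(\tfrac12)\cos\rho$ and $\|B^\ast V\|_2/\|V\|_2=2\sqrt2\sin\tfrac12\approx1.36>1$.

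The $\sqrt2$ in your $L^\infty$ estimate is also not an artifact of a crude step. Take $V_j=\chi_{[-k_j/2,k_j/2]}-k_j/(2\pi)$, extended periodically. Then $B^\ast V(0)/\|V\|_2=\bigl(k_1+k_2-(k_1^2+k_2^2)/(2\pi)\bigr)^{1/2}$, which is about $1.30>1=\sqrt{\max_j k_j}$ for $k_1=k_2=1$. What actually holds is $\|B^\ast V\|_2\le\sqrt{k_1^2+k_2^2}\,\|V\|_2$ and $\|B^\ast V\|_\infty\le\sqrt{k_1+k_2}\,\|V\|_2$. Lemma~\ref{lem:prop_op}(iii) in the paper asserts the same too-strong constants for the sum of the two integral terms, so the paper's route has the same slip.

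The stated $L^2$ bound can even fail for the maximizer on the sphere of radius $R$. For $\Phi(u)=u^2/2$ one has $P(V)=\tfrac12\|B^\ast V\|_2^2$. With $k_1=k_2=1$, every maximizer then satisfies $\|U\|_2=\|B^\ast\|\,R\ge2\sqrt2\sin(\tfrac12)\,R>\max_j k_j\,R$.

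The fix is your ``normalizing'' option, made precise. The theorem only asserts existence of some nontrivial wave obeying the bounds, so maximize over the ball of radius $R/\sqrt2$ instead. Then $\|U\|_\infty\le\sqrt{k_1+k_2}\,R/\sqrt2\le\sqrt{\max_j k_j}\,R$ and $\|U\|_2\le\sqrt{k_1^2+k_2^2}\,R/\sqrt2\le\max_j k_j\,R$. The rest of your proof is unchanged.
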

}

{  The proof of Theorem~\ref{thm:ex1} is a consequence of Lemma~\ref{lem:ex} (for the existence) and Lemma~\ref{lem:prop_op}(iii) (for the estimates on $U$). We first formulate Eq.~\eqref{eq:advdelay} as a fixed point problem. This will be the source both for the rigorous establishment of traveling waves and for the numerical algorithm which provides approximations of them.} We begin with the observation that 
$$
\Delta_\mathbf{k} = -D^\ast D
$$
with the operator 
$$
Du(\rho) = \begin{pmatrix} u(\rho+\frac{k_1}{2}) - u(\rho-\frac{k_1}{2}) \\ u(\rho+\frac{k_2}{2}) - u(\rho-\frac{k_2}{2})
\end{pmatrix}
$$
and its formal $L^2$-adjoint 
$$
D^\ast w(\rho) = -\Bigl(w_1(\rho+\frac{k_1}{2}) - w_1(\rho-\frac{k_1}{2}) + w_2(\rho+\frac{k_2}{2}) - w_2(\rho-\frac{k_2}{2})\Bigr), 
$$
where $u:\R\to\R$ is scalar-valued and $w:\R\to\R^2$ is vector-valued. {  One can think of $D, D^\ast$ as discrete versions of grad and div. Note that $\langle Du,w\rangle_{L^2([0,2\pi]; \R^2)} =\langle u,D^\ast w\rangle_{L^2([0,2\pi];\R)}$ for $u\in L^2([0,2\pi];\R)$ and $w\in L^2([0,2\pi]; \R^2)$ and that $(D^\ast)^\ast=D$.} If we assume that $U=-D^\ast \partial_\rho^{-1} z$ for a {  $2\pi$-}periodic, mean-zero function $z\in L^2([0,2\pi]; \R^2)$, then \eqref{eq:advdelay} can be written as 
$$
\omega^2 D^\ast \partial_\rho z = D^\ast D \Phi'(\bar u- D^\ast \partial_\rho^{-1} z).
$$
Removing $D^\ast$ and inverting $\partial_\rho$ yields the fixed point problem
\begin{equation} \label{eq:fixed_point}
z  =\frac{1}{\omega^2} \partial_\rho^{-1} D\Phi'(\bar u -D^\ast \partial_\rho^{-1} z)
\end{equation}
for the vector-valued mean-free {  $2\pi$-periodic} function $z:\R\to \R^2$. 
{  We note that $\partial_\rho^{-1}$ can be understood as Fourier-multiplier $-\frac{i}{k}$ which is defined for mean-free $2\pi$-periodic functions.} 
{  Since} $D\Phi'(\ldots)$ is mean-free by the definition of $D$, {  Eq.~\eqref{eq:fixed_point} is a well-defined fixed point problem in the space of mean-free $2\pi$-periodic functions.}  We can extend this approach to functions $z$ with non-zero mean by defining the projection $Pz := z - \bar z$ with $\bar z := \frac{1}{2\pi}\int_0^{2\pi} z(s)\,ds$ for an arbitrary function $z\in L^2([0,2\pi])$. Then we can seek solutions $z\in L^2([0,2\pi])$ of 
\begin{equation} \label{eq:fixed_point2}
z  =\frac{1}{\omega^2} \partial_\rho^{-1} D\Phi'(\bar u -D^\ast \partial_\rho^{-1}P z)
\end{equation}
{  and since the right-hand side is already mean-free, any solution $z$ of Eq.\eqref{eq:fixed_point2} will be mean-free and thus it will solve the original fixed-point problem Eq.~\eqref{eq:fixed_point}.}

\medskip

Next we determine the precise form of $D^\ast\partial_\rho^{-1}P$ and $\partial_\rho^{-1} D$. With $L^2_0([0,2\pi])$ we denote the subspace of $L^2([0,2\pi])$ of functions with mean zero. 

\begin{lemma} \label{lem:prop_op} The operator $D^\ast \partial_\rho^{-1}P: L^2([0,2\pi])\times L^2([0,2\pi])\to L^2([0,2\pi])$ is bounded, compact and given by 
\begin{equation} \label{eq:def_D_drhoinv}
D^\ast \partial_\rho^{-1} Pz(\rho) = -\left(\int_{\rho-\frac{k_1}{2}}^{\rho+\frac{k_1}{2}} z_1(s)\,ds -k_1 \bar z_1+ \int_{\rho-\frac{k_2}{2}}^{\rho+\frac{k_2}{2}} z_2(s)\,ds -k_2 \bar z_2\right).
\end{equation}
The operator $\partial_\rho^{-1} D: L^2([0,2\pi])\to L^2([0,2\pi))\times L^2([0,2\pi])$ is bounded, compact and given by \begin{equation}
\label{eq:def_drhoinv_D}
\partial_\rho^{-1} D u(\rho) = \begin{pmatrix} \int_{\rho-\frac{k_1}{2}}^{\rho+\frac{k_1}{2}} u(s)\,ds-k_1 \bar u  \\
\int_{\rho-\frac{k_2}{2}}^{\rho+\frac{k_2}{2}} u(s)\,ds -k_2 \bar u 
\end{pmatrix}.
\end{equation}
With $k=\max\{k_1,k_2\}$ the operators have the following properties:
\begin{itemize}
\item[(i)] $\range \partial_\rho^{-1}D \subset L^2_0([0,2\pi])\times L^2_0([0,2\pi])$ and $\range D^\ast \partial_\rho^{-1}P \subset L^2_0([0,2\pi])$.
\item[(ii)] $(D^\ast\partial_\rho^{-1}P)^\ast = -\partial_\rho^{-1}D$ on $L^2([0,2\pi])$ and $(D^\ast\partial_\rho^{-1})^\ast = -\partial_\rho^{-1}D$ on $L^2_0([0,2\pi])$. 
\item[(iii)] $\|D^\ast\partial_\rho^{-1}P z\|_2 \leq k\|z\|_2$, $\|D^\ast\partial_\rho^{-1}P z\|_\infty \leq \sqrt{k} \|z\|_2$ for all $z\in L^2([0,2\pi])\times L^2([0,2\pi])$.
\item[(iv)] $\|\partial_\rho^{-1} Du\|_2 \leq k\|u\|_2$, $\|\partial_\rho^{-1} Du\|_\infty \leq \sqrt{k} \|u\|_2$ for all $u\in L^2([0,2\pi])$. 
\end{itemize}
\end{lemma}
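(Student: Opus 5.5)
The natural route is through Fourier series on $[0,2\pi]$. On $e_j(\rho)=e^{ij\rho}$, $j\neq 0$, the shift-difference operator $D$ acts componentwise by the multipliers $2i\sin(jk_l/2)$, $l=1,2$, and $\partial_\rho^{-1}$ acts by $-i/j$. So $\partial_\rho^{-1}D$ has the componentwise multiplier
\[
m_l(j)=\frac{2\sin(jk_l/2)}{j}, \qquad m_l(0):=0 .
\]
The averaging operator $A_l u(\rho)=\int_{\rho-k_l/2}^{\rho+k_l/2}u(s)\,ds$ has exactly the multiplier $m_l(j)$ for $j\neq0$ and $k_l$ for $j=0$. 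Subtracting $k_l\bar u$ kills the zero mode, and this gives \eqref{eq:def_drhoinv_D}.

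For the other operator, $P$ removes the mean of $z$. On mean-free functions, $D^\ast$ has multipliers $-2i\sin(jk_l/2)$ (the conjugate of the $D$ multiplier), and composing with $-i/j$ gives $-m_l(j)$ per component, summed over $l$. This is \eqref{eq:def_D_drhoinv}, where the terms $-k_l\bar z_l$ account for $A_l Pz_l = A_l z_l - k_l\bar z_l$.

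Item (i) is immediate because all multipliers vanish at $j=0$. Alternatively, integrate $A_l u$ over a period with Fubini to get $k_l\cdot 2\pi\bar u$.

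For item (ii), the multipliers $m_l(j)$ are real and even in $j$, so each $A_l-k_l\bar{(\cdot)}$ is self-adjoint on $L^2$. Directly, $\langle A_l u,v\rangle=\int\int \chi_{[-k_l/2,k_l/2]}(\rho-s)u(s)v(\rho)$ with an even kernel. The operator $D^\ast\partial_\rho^{-1}P$ maps $(z_1,z_2)\mapsto -\sum_l (A_l z_l - k_l\bar z_l)$, so its adjoint is $u\mapsto -(A_lu-k_l\bar u)_l = -\partial_\rho^{-1}Du$. The version on $L^2_0$ follows by restriction, since $P$ is the identity there.

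Boundedness and compactness also follow from the multipliers. The $m_l(j)$ are real, decay like $2/|j|$, and satisfy $|m_l(j)|\le k_l$ because $|\sin x|\le |x|$. A Fourier multiplier tending to zero at infinity is a norm limit of finite-rank truncations, hence compact.

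The $L^2$ bounds in (iii) and (iv) come from Plancherel plus $|m_l|\le k_l$. In (iv) this gives $\|\partial_\rho^{-1}Du\|_2\le k\|u\|_2$ componentwise on the vector norm. The same bound follows from Young's inequality with kernel $\chi$ of $L^1$-norm $k_l$, applied to $u-\bar u$, since $\|Pu\|_2\le\|u\|_2$. For (iii) there is a cross-term from summing two components: $\|\sum_l T_l z_l\|_2\le \sum_l k_l\|z_l\|_2$. To get the stated $k\|z\|_2$ instead of $\sqrt2\,k\|z\|_2$, I plan to apply Cauchy–Schwarz on the Fourier side to $(m_1(j),m_2(j))\cdot(\hat z_1,\hat z_2)$, which yields the factor $\sqrt{m_1^2+m_2^2}$. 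Bounding this by $k$ needs more care, and this is the step I expect to be the main obstacle. Using the constant $\sqrt{k_1^2+k_2^2}$ may be what actually results, and I would check whether the claimed constant requires that refinement.

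The $L^\infty$ bound comes from Cauchy–Schwarz on the integral: $\big|\int_{\rho-k_l/2}^{\rho+k_l/2}(u-\bar u)\big|\le \sqrt{k_l}\,\|u-\bar u\|_2\le\sqrt{k}\|u\|_2$, where $\|\cdot\|_2$ is over one period, which fits since $k_l<\pi$. In (iii) the two components combine by Cauchy–Schwarz in $l$, with the same caveat on the constant as for the $L^2$ bound.
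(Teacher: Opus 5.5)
Your Fourier-multiplier argument is correct for every part of the lemma that is actually true, and it takes a different route from the paper. The paper verifies the two formulas by direct integration: it computes $\partial_\rho^{-1}u=\int_0^\rho u\,ds+\frac1{2\pi}\int_0^{2\pi}su(s)\,ds$ for mean-free $u$ and applies this to $Du$. It gets the bounds from Cauchy--Schwarz and Fubini for the single operator $u\mapsto\int_{\cdot+a}^{\cdot+b}(u-\bar u)\,ds$, and proves compactness with Fr\'echet--Kolmogorov--Riesz. Your multipliers $m_l(j)=2\sin(jk_l/2)/j$ give the formulas, (i), (ii) and compactness at once. They also give the exact norm $\sup_{j\neq0}(m_1(j)^2+m_2(j)^2)^{1/2}$, and that settles the step you left open.

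That step cannot be completed, because the constants in (iii) are false for the product norm $\|z\|_2^2=\|z_1\|_2^2+\|z_2\|_2^2$. This is the norm the paper uses for the constraint set and the Lagrange multiplier. Counterexamples:
\begin{itemize}
\item[(a)] $L^2$ bound: take $k_1=k_2=1$ and $z=(\cos\rho,\cos\rho)$. Then $D^\ast\partial_\rho^{-1}Pz=-4\sin(\tfrac12)\cos\rho$, so $\|D^\ast\partial_\rho^{-1}Pz\|_2/\|z\|_2=2\sqrt2\sin(\tfrac12)\approx1.356>k$.
\item[(b)] Sup bound: take $k_1=k_2=k$ and $z_1=z_2=w:=\chi_{[-k/2,k/2]}-\frac{k}{2\pi}$, periodically extended. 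Then $D^\ast\partial_\rho^{-1}Pz(0)=-2\|w\|_2^2$ and $\|z\|_2=\sqrt2\|w\|_2$. The ratio is $\sqrt{2k-k^2/\pi}>\sqrt k$ for every $k\in(0,\pi)$.
\end{itemize}
What is true is $\|D^\ast\partial_\rho^{-1}Pz\|_2\le\sqrt{k_1^2+k_2^2}\,\|z\|_2$, which is asymptotically sharp for small $k_l$ via the mode $j=1$. Likewise $\|D^\ast\partial_\rho^{-1}Pz\|_\infty\le\sqrt{k_1+k_2}\,\|z\|_2$. So the claims hold only up to a factor $\sqrt2$.

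Your claim that (iv) holds with constant $k$ has the same defect. By (ii), which you prove, $\partial_\rho^{-1}D=-(D^\ast\partial_\rho^{-1}P)^\ast$, so the two operators have equal $L^2$ norms. Indeed $u=\cos\rho$ gives the same ratio $2\sqrt2\sin(\tfrac12)$. Componentwise you only get $k_l\|u\|_2$ per entry, hence $\sqrt{k_1^2+k_2^2}\,\|u\|_2$ for the pair. The sup bound in (iv) is fine if $\|\cdot\|_\infty$ of a pair means the larger component.

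The paper's proof does not close this gap either. It treats one integral operator and says this ``explains'' both operators. Summing the two components only yields $k_1\|z_1\|_2+k_2\|z_2\|_2\le\sqrt2\,k\|z\|_2$, and $\sqrt{2k}\,\|z\|_2$ for the sup norm. The paper's own Appendix computes $\sup_{\|z\|_2\le R}\int_\R(Az)^2\,d\rho=r^2R^2$ with $r=\sqrt{k_1^2+k_2^2}>k$. This contradicts the case $p=2$ of Lemma~\ref{lem:prop_op_2}(ii).

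Boundedness, compactness and Lemma~\ref{lem:ex} need only some finite constant, so they are unaffected. The explicit bounds $\|U\|_{L^2}\le kR$ and $\|U\|_\infty\le\sqrt k\,R$ in Theorem~\ref{thm:ex1} should instead read $\sqrt{k_1^2+k_2^2}\,R$ and $\sqrt{k_1+k_2}\,R$. The same correction applies to Lemma~\ref{lem:prop_op_2}(ii)--(iii) and Theorem~\ref{thm:ex2}. With (iii)--(iv) restated this way, your argument is complete.
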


\begin{proof}
\textit{Verification of Eq.~\eqref{eq:def_D_drhoinv} and Eq.~\eqref{eq:def_drhoinv_D}:} Since $D^\ast$ eliminates integration constants we obtain 
$$
D^\ast \partial_\rho^{-1} Pz(\rho) = D^\ast \int_0^\rho \begin{pmatrix} z_1(s)-\bar z_1 \\ z_2(s)-\bar z_2 \end{pmatrix} \,ds = -\left(\int_{\rho-\frac{k_1}{2}}^{\rho+\frac{k_1}{2}} z_1(s)\,ds -k_1 \bar z_1+ \int_{\rho-\frac{k_2}{2}}^{\rho+\frac{k_2}{2}} z_2(s)\,ds -k_2 \bar z_2\right)
$$
and hence we have verified Eq.~\eqref{eq:def_D_drhoinv}. Next we compute $\partial_\rho^{-1} u$ for a mean-zero function $u$. Obviously, $\partial_\rho^{-1}u(\rho) = \int_0^\rho u(s)\,ds - c$ where the constant $c$ needs to satisfy 
$$
c = \frac{1}{2\pi}\int_0^{2\pi} \int_0^\rho u(s)\,ds\,d\rho = \frac{1}{2\pi}\int_0^{2\pi} \int_s^{2\pi} u(s)\,d\rho \,ds = \frac{1}{2\pi}\int_0^{2\pi}(2\pi-s)u(s)\,ds = -\frac{1}{2\pi}\int_0^{2\pi}su(s)\,ds.
$$
Thus, 
$$
\partial_\rho^{-1} u(\rho) = \int_0^\rho u(s)\,ds+\frac{1}{2\pi}\int_0^{2\pi}su(s)\,ds. 
$$
Now we need to apply $\partial_\rho^{-1}$ to the two components of $Du(\rho)$. It is enough to do this for the first component. Since $\partial_\rho^{-1}$ consists of two terms, let's start with the first:
$$
\int_0^\rho u(s+\frac{k_1}{2}) - u(s-\frac{k_1}{2}) \,ds = \int_{\frac{k_1}{2}}^{\rho+\frac{k_1}{2}} u(s)\,ds - \int_{-\frac{k_1}{2}}^{\rho-\frac{k_1}{2}} u(s)\,ds = \int_{\rho-\frac{k_1}{2}}^{\rho+\frac{k_1}{2}} u(s)\,ds - \int_{-\frac{k_1}{2}}^\frac{k_1}{2} u(s)\,ds.
$$
Next let us check the effect of the second term in $\partial_\rho^{-1}$ applied to $Du$. This can be seen from 
\begin{align*}
\int_0^{2\pi} s\left(u(s+\frac{k_1}{2})-u(s-\frac{k_1}{2})\right)\,ds &= \int_{\frac{k_1}{2}}^{2\pi+\frac{k_1}{2}} (s-\frac{k_1}{2}) u(s)\,ds - \int_{-\frac{k_1}{2}}^{2\pi-\frac{k_1}{2}} (s+\frac{k_1}{2}) u(s)\,ds \\
&= -\int_{-\frac{k_1}{2}}^\frac{k_1}{2} su(s)\,ds + \int_{2\pi-\frac{k_1}{2}}^{2\pi+\frac{k_1}{2}} su(s)\,ds -k_1 \int_0^{2\pi} u(s)\,ds\\
& = -\int_{-\frac{k_1}{2}}^\frac{k_1}{2} su(s)\,ds + \int_{-\frac{k_1}{2}}^{\frac{k_1}{2}} (s+2\pi) u(s)\,ds -k_1 \int_0^{2\pi} u(s)\,ds\\
& =  2\pi\int_{-\frac{k_1}{2}}^{\frac{k_1}{2}} u(s)\,ds -k_1 \int_0^{2\pi} u(s)\,ds.
\end{align*}
Adding this (divided by $2\pi$) to the first term of $\partial_\rho^{-1}$ applied to $Du(\rho)$ we get the result of Eq.~\eqref{eq:def_drhoinv_D}. {  Boundedness and compactness properties of both operators will be proven below.}

\medskip

{ 
\noindent
(i): The range of both operators in Eq.~\eqref{eq:def_D_drhoinv} and Eq.~\eqref{eq:def_drhoinv_D} consists of mean-free functions. 

\medskip

\noindent
(ii): We have already seen that $D^\ast$ is the $L^2$-adjoint of $D$ and that $(D^\ast)^\ast=D$. Moreover, seeing $\partial_\rho^{-1}$ as the Fourier-multiplier $-\frac{i}{k}$ directly explains $(\partial_\rho^{-1})^\ast=-\partial_\rho^{-1}$.

\medskip

\noindent
(iii) \textit{boundedness and compactness:} the $L^2$-$L^2$ and $L^\infty$-$L^2$ bounds for the two operators $D^\ast\partial_\rho^{-1}P$ and $\partial_\rho^{-1}D$ are very similar. For a $2\pi$-periodic function $u\in L^2([0,2\pi])$ we see that 
\begin{align*}
\left\|\int_{\cdot+a}^{\cdot+b} u\,ds\right\|_{L^2}^2 &= \int_0^{2\pi}\left(\int_{\rho+a}^{\rho+b} u(s)\,ds\right)^2\,d\rho \leq \int_0^{2\pi} (b-a)\int_{\rho+a}^{\rho+b} u^2(s)\,ds \,d\rho \\
& = \int_0^{2\pi} (b-a) \int_{s+a}^{s+b} u^2(s)\,d\rho\,ds = (b-a)^2 \|u\|_{L^2}^2.
\end{align*}
Moreover, if we insert $Pu=u-\bar u$ into the above estimate and use that $u\mapsto \bar u$ is an orthogonal projection we get 
$$
\left\|\int_{\cdot+a}^{\cdot+b} (u-\bar u)\,ds\right\|_{L^2} \leq (b-a) \|u-\bar u\|_{L^2} \leq (b-a) \|u\|_{L^2}.
$$
This explains the $L^2$-$L^2$ estimates for both operators.  The $L^2$-$L^\infty$ bounds follow from 
$$
\left\|\int_{\cdot+a}^{\cdot+b} (u-\bar u)\,ds\right\|_{L^\infty}  \leq \sqrt{b-a} \|u-\bar u\|_{L^2} \leq \sqrt{b-a} \|u\|_{L^2}.
$$
It remains to show the compactness of $D^\ast\partial_\rho^{-1}$ and $\partial_\rho^{-1}D$ in the $L^2$-$L^2$ setting. We apply the Fr\'{e}chet-Kolmogorov-Riesz theorem, cf. \cite{adams_fournier}, on the bounded domain $[0,2\pi]$ to show the compactness of the sequence $(Tu_j)_{j\in\N}$ for a bounded sequence $(u_j)_{j\in\N}$ in $L^2([0,2\pi])$. Here $(Tu)(\rho)= \int_{\rho+a}^{\rho+b} u(s)\,ds$ and $u\in L^2([0,2\pi])$ is $2\pi$-periodically extended to $\R$. The equi-integrability of $(Tu_j)_{j\in\N}$ for $h\in \R$ follows from the estimate 
\begin{align}
    \| Tu(\cdot+h)-Tu\|_{L^2}^2 &= \left\| \int_{\cdot+a+h}^{\cdot+b+h} u(s)\,ds - \int_{\cdot+a}^{\cdot+b} u(s)\,ds\right\|_{L^2}^2 \\
    &= \left\| -\int_{\cdot+a}^{\cdot+a+h} u(s)\,ds + \int_{\cdot+b}^{\cdot+b+h} u(s)\,ds\right\|_{L^2}^2 \\
    &\leq  2\int_0^{2\pi}\left(\left( \int_{\rho+a}^{\rho+a+h} u(s)\,ds\right)^2 + \left( \int_{\rho+b}^{\rho+b+h} u(s)\,ds\right)^2\right) \,d\rho \\
    & \leq 8\pi h \|u\|_{L^2}^2.
\qedhere
\end{align}
}
\end{proof}

The variational structure of Eq.~\eqref{eq:fixed_point2} becomes visible if we re-write it as 
\begin{equation} \label{eq:variational}
z = \frac{-1}{\omega^2} (\partial_\rho^{-1})^\ast D \Phi'(\bar u - D^\ast \partial_\rho^{-1} Pz) 
\end{equation}
and 
{  recall} that $(D^\ast)^\ast=D$. 
As we shall see in the next lemma, solutions of Eq.~\eqref{eq:variational} are critical points of the functional 
$$
J(z) = \int_0^{2\pi}\Phi(\bar u - D^\ast\partial_\rho^{-1} Pz) \,d\rho
$$
on the constraint $\|z\|_{2}=R$ with $\omega^2$ being the Lagrange multiplier. The following lemma then concludes the proof of Theorem~\ref{thm:ex1} { by showing the existence of a maximizer of $J$ on an $L^2$ norm-ball of fixed radius $R$. Note that apart from a global maximizer the functional $J$ also has many other constraint critical points. For instance, by the same method one can find a $2\pi/l$-periodic mean-zero solution of \eqref{eq:variational} for any $l\in \N$ with prescribed $L^2$-norm equal to $R/\sqrt{l}$. Considered as a $2\pi$-periodic function with $L^2$-norm equal to $R$ it is thus a constraint critical point of $J$. Since this works for any $l\in \N$, infinitely many of these $2\pi/l$-periodic constraint critical points of $J$ must be different.}

\begin{lemma} \label{lem:ex} Let $R>0$ and consider the problem of maximizing the functional $z\mapsto J(z)$ over the constraint $C:= \{z\in L^2([0,2\pi])\times L^2([0,2\pi]): \|z\|_2 \leq R\}$. If $\Phi:\R\to\R$ is convex, maximizers exist and any maximizer $z$ will be a mean-zero solution of \eqref{eq:variational}. Via $U \coloneqq\bar u -D^\ast \partial_\rho^{-1} z$ we recover a solution to Eq.~\eqref{eq:advdelay} with prescribed mean $\bar u$ { and parameter $\omega^2$ as a Lagrange multiplier, where $\omega$ is the frequency. If $\Phi$ is strictly convex then $u$ is non-constant and $\omega\not =0$.}
\end{lemma}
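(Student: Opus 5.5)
The plan is to use the direct method on the closed ball $C$, with compactness supplied by Lemma~\ref{lem:prop_op}. Write $K:=D^\ast\partial_\rho^{-1}P$, so $J(z)=\int_0^{2\pi}\Phi(\bar u-Kz)\,d\rho$.

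\emph{Existence of a maximizer.} First, $J$ is bounded on $C$: by Lemma~\ref{lem:prop_op}(iii), $\|Kz\|_\infty\le\sqrt{k}R$ on $C$, so the argument of $\Phi$ stays in the compact interval $[\bar u-\sqrt{k}R,\bar u+\sqrt{k}R]$, where the continuous function $\Phi$ is bounded. Take a maximizing sequence $z_j\in C$. Then $z_j\rightharpoonup z$ weakly in $L^2$ along a subsequence, and $z\in C$ because $C$ is weakly closed. Since $K$ is compact, $Kz_j\to Kz$ strongly in $L^2$, hence a.e. along a further subsequence, with a uniform $L^\infty$ bound. Dominated convergence then gives $J(z_j)\to J(z)$, so $z$ is a maximizer. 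Convexity is not needed for this step.

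\emph{Euler--Lagrange equation.} Since $\Phi$ is $C^1$ and $K$ maps $L^2$ boundedly into $L^\infty$, $J$ is Fréchet differentiable with
\[
J'(z)h=-\int_0^{2\pi}\Phi'(\bar u-Kz)\,Kh\,d\rho=\langle \partial_\rho^{-1}D\,\Phi'(\bar u-Kz),h\rangle ,
\]
using Lemma~\ref{lem:prop_op}(ii), namely $K^\ast=-\partial_\rho^{-1}D$. This is where convexity enters: $J$ is convex because it is a convex $\Phi$ composed with an affine map. A convex functional attains its maximum over the ball on the boundary, and in fact at an extreme point. More concretely, if $\|z\|<R$ then $J$ is maximal along the whole segment through $z$ inside $C$, and we may replace $z$ by a boundary maximizer. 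So assume $\|z\|_2=R$, or more simply restrict attention to maximizers on the sphere. By the Lagrange multiplier rule on the sphere, $J'(z)=\lambda z$. The sign of $\lambda$ follows from convexity and the fact that $z$ is a maximizer over the ball: $J(z)\ge J(tz)$ for $t<1$ gives $J'(z)z\ge0$, so $\lambda\ge0$.

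Set $\lambda=\omega^{-2}$ when $\lambda>0$. Then $z=\omega^{-2}\partial_\rho^{-1}D\Phi'(\bar u-Kz)$, which is \eqref{eq:fixed_point2}, equivalently \eqref{eq:variational}. By Lemma~\ref{lem:prop_op}(i) the right-hand side is mean-free, so $z$ is mean-zero and $Pz=z$. Applying $D^\ast\partial_\rho$ and reversing the derivation preceding \eqref{eq:fixed_point} shows that $U=-Kz$ solves \eqref{eq:advdelay} with mean $\bar u$ added. Regularity is not an issue, since the right-hand side is continuous, and bootstrapping gives the needed smoothness.

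\emph{Main obstacle: ruling out $\lambda=0$ and $U$ constant under strict convexity.} If $\lambda=0$, then $\partial_\rho^{-1}D\Phi'(\bar u-Kz)=0$, so $D\Phi'(\bar u-Kz)=0$. For generic $k_1,k_2$ this forces $\Phi'(\bar u-Kz)$ to be constant, and because $\Phi'$ is strictly increasing, $Kz$ is constant. Being mean-free, $Kz=0$, so $J(z)=2\pi\Phi(\bar u)=J(0)$. To contradict this I would exhibit a competitor $w\in C$ with $Kw\ne0$; such a $w$ exists since $K\neq0$, as can be checked on a Fourier mode. Jensen's inequality, strict since $\Phi$ is strictly convex, gives
\[
J(w)=\int\Phi(\bar u-Kw)>2\pi\,\Phi\Big(\bar u-\tfrac1{2\pi}\int Kw\Big)=2\pi\Phi(\bar u),
\]
using that $Kw$ has mean zero. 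This contradicts maximality. The same argument shows that any maximizer has $Kz\neq0$, so $U$ is non-constant, and consequently $\lambda>0$ and $\omega\neq0$.

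The delicate point is the implication from $D f=0$ to $f$ constant: when $k_1/(2\pi)$ or $k_2/(2\pi)$ is rational there are nonconstant periodic solutions. I would avoid this by working directly from $\lambda z=0$ with $\|z\|=R>0$, which forces $\lambda\neq0$ outright whenever the maximizer lies on the sphere. The boundary reduction then deserves care: show that $J(z)>J(0)$ for maximizers, and that $J(tz)$ is strictly increasing in $t$ by strict convexity along rays.
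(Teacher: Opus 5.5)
Your existence argument, the Euler--Lagrange computation, the mean-zero conclusion, and the proof that a maximizer has $Kz\neq0$ (with $K=D^\ast\partial_\rho^{-1}P$ as in your notation) are sound and essentially follow the paper. The last of these uses strict Jensen against a competitor $w$ with $Kw\neq0$. The gap is in proving $\omega\neq0$.

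The multiplier rule on the sphere gives $J'(z)=\lambda z$, i.e.\ $\lambda z=\partial_\rho^{-1}D\Phi'(\bar u-Kz)$, so $\lambda=\omega^2$ (not $\omega^{-2}$ as you wrote). In this form $\lambda=0$ is perfectly compatible with $\|z\|_2=R$; it just says $J'(z)=0$. Your proposed escape, that ``$\lambda z=0$ with $\|z\|_2=R>0$ forces $\lambda\neq0$'', is backwards: it forces $\lambda=0$. It only looks plausible if the constraint equation is written as $z=\lambda J'(z)$, which already presupposes a nonzero multiplier.

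You are therefore left with your first route, $D\Phi'(\bar u-Kz)=0\Rightarrow\Phi'(\bar u-Kz)$ constant. As you note yourself, this implication fails when $k_1/(2\pi)$ and $k_2/(2\pi)$ are both rational (e.g.\ $k_1=k_2=\pi/2$, where $f(\rho)=\cos 4\rho$ satisfies $Df=0$). So your step ``consequently $\lambda>0$'' is not established for all wavevectors.

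The missing idea is to use convexity of $J$ once more instead of analyzing the kernel of $D$. Convexity gives $J(0)\ge J(z)+J'(z)(0-z)$, hence
\[ \lambda R^2=J'(z)z\ge J(z)-J(0). \]
You have already shown $J(z)\ge J(w)>J(0)$ for every maximizer. Thus $\lambda>0$ for every $(k_1,k_2)$, which settles both $\omega\neq0$ and the sign $\omega^2>0$ at once.

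Equivalently, $J'(z)=0$ would make $z$ a global minimizer of the convex functional $J$, so $J(z)\le J(0)$, a contradiction. This is exactly the paper's argument: $0=J'(z)z\ge J(z)-J(0)\ge0$, followed by the equality case of Jensen.
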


\begin{proof}
The functional $J:L^2([0,2\pi])\times L^2([0,2\pi])\to \R$ is well-defined by Lemma~\ref{lem:prop_op}(iii) and it is standard to show its continuous Fr\'{e}chet differentiability. Moreover, the constraint $C$ is convex, closed and hence weakly closed, and $J$ is bounded on $C$ by (iv) from Lemma~\ref{lem:prop_op}. A maximizing sequence $(z_j)_{j\in\N}$ has a weakly convergent subsequence with limit $z\in C$ and by compactness of $D^\ast \partial_\rho^{-1}P$ we get that $z$ will be a maximizer. Since $J$ is non-constant and convex, such a maximizer must lie on $\partial C$. For the derivative of $J$ in direction $\varphi$ we obtain
\begin{align*}
J'(z)\varphi & = -\int_\R \Phi'(\bar u - D^\ast \partial_\rho^{-1} Pz) D^\ast \partial_\rho^{-1} P\varphi \,d\rho \\
& = -\int_\R (\partial_\rho^{-1})^\ast D \Phi'(\bar u - D^\ast\partial_\rho^{-1}Pz) P\varphi \,d\rho \\
& =\int_\R \partial_\rho^{-1} D \Phi'(\bar u - D^\ast\partial_\rho^{-1}Pz) P\varphi \,d\rho \\
& =\int_\R \partial_\rho^{-1} D \Phi'(\bar u - D^\ast\partial_\rho^{-1}Pz) \varphi \,d\rho
\end{align*}
{  where in the last step we were allowed to replace $P\varphi$ by $\varphi$ since $\partial_\rho^{-1} D \Phi'(\bar u - D^\ast\partial_\rho^{-1}Pz)$ is mean-free.} The maximizer $z$ of $J$ on $C$ with Lagrange multiplier $\omega^2$ satisfies $\omega^2 z= \partial_\rho^{-1} D \Phi'(\bar u - D^\ast\partial_\rho^{-1}Pz)$ and hence $z$ will be a mean-free function. 

\medskip

{  Next we want to rule out that $U=0$ or that $\omega=0$. First we note that by Jensen's inequality 
$$
\frac{1}{2\pi} J(\tilde z) = \frac{1}{2\pi} \int_0^{2\pi} \Phi(\bar u - D^\ast\partial_\rho^{-1} P\tilde z)\,d\rho \geq \Phi\left(\frac{1}{2\pi}\int_0^{2\pi} (\bar u - D^\ast\partial_\rho^{-1} P\tilde z) \,d\rho\right) = \Phi(\bar u)= \frac{1}{2\pi}J(0)
$$
for every $\tilde z\in C$. Therefore, if we assume for contradiction that for the maximizer $z$ we have $D^\ast\partial_\rho^{-1}z=0$ then $J(\tilde z)=J(0)$ for all $\tilde z\in C$. By strict convexity of $J$ the equality case of Jensen's inequality implies that $D^\ast\partial_\rho^{-1} P\tilde z=0$ for all $\tilde z\in C$ and hence for all $\tilde z\in L^2([0,2\pi])\times L^2([0,2\pi])$. This is a contradiction since the Fourier-transform of $D^\ast\partial_\rho^{-1} P$ given by 
$$
(\widehat{D^\ast \partial_\rho^{-1}P\tilde{z}})_j = -\frac{2}{j}\left(\sin(\frac{jk_1}{2})\widehat{\tilde z}^1_j + \sin(\frac{j k_2}{2})\widehat{\tilde{z}}^2_j\right), \quad j\in \Z
$$
shows that the linear operator $D^\ast\partial_\rho^{-1} P$ does not vanish on $L^2([0,2\pi])\times L^2([0,2\pi])$. Next let us verify that also the Lagrange multiplier $\omega^2$ does not vanish. If, for contradiction, we assume $\omega=0$ then $0=J'(z)z\geq J(z)-J(0)\geq 0$ shows that we are again in the equality case of Jensen's inequality first for the maximizer $z$ and then for all $\tilde z\in C$. The contradiction is then as above. }
\end{proof}

\subsection{Existence of unimodal periodic and solitary waves -- another way of looking at it}

In this section we are looking for solutions {  $U$} of \eqref{eq:advdelay} where $\bar u=0$ {  but where we give up the requirement of $U$ being mean-free.} 
Our setup will be such that we consider \eqref{eq:advdelay} on $I=[-\pi,\pi]$ with periodicity or on $I=\R$ with decay to $0$ at $\pm \infty$. We are looking for unimodal solutions, i.e., solutions with $U(x)\geq 0$, $U(-x)=U(x)$ and $x\to U(x)$ non-decreasing 
 for$x>0$. {  Note that unimodal $L^2$-functions have the property that $0\leq U(\rho)|\rho| \leq \int_0^{|\rho|} U(s)\,ds \leq \sqrt{|\rho|}\|U\|_2$ implies the decay estimate $0 \leq U(\rho)\leq \frac{1}{\sqrt{|\rho|}}\|U\|_2$. Our main result now is the following.}

{ 
\begin{theorem} \label{thm:ex2} Let $\Phi:\R\to\R$ be as in \eqref{eq:def_phi} with $a_1, a_3 \geq 0$, $a_2\in\R$, and either $a_1=0$, $(a_2, a_3)\not =(0,0)$ or $a_1>0, a_2\not =0$. For every fixed $R>0$ there is a non-trivial $2\pi$-periodic traveling wave $u_{n,m}(t)=U(k_1 n + k_2m-\omega t)$ of Eq.~\eqref{eq:model} where the profile $U$ satisfies Eq.~\eqref{eq:advdelay} with $\bar u=0$ and $\|U\|_\infty \leq \sqrt{\max\{k_1,k_2\}}R, \|U\|_{L^2}\leq \max\{k_1,k_2\}R$. The a-priori unknown parameter $\omega^2>0$ occurs as a Lagrange multiplier where $\omega$ is the frequency. If $a_2\geq 0$ the profile $U$ is unimodal and if $a_2\leq 0$ then $-U$ is unimodal. 
\end{theorem}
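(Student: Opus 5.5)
The plan is to repeat the constrained maximization of Lemma~\ref{lem:ex}, but without the projection $P$ and without convexity. The missing convexity is compensated by restricting to nonnegative, symmetrically decreasing competitors and using rearrangement inequalities. By the symmetry $\Phi(-u)$, which replaces $a_2$ by $-a_2$ and $U$ by $-U$, it suffices to treat $a_2\ge 0$.

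\textbf{Setup.} For $z=(z_1,z_2)\in L^2([-\pi,\pi])^2$, extended $2\pi$-periodically, set
$$Az(\rho)=\int_{\rho-\frac{k_1}{2}}^{\rho+\frac{k_1}{2}}z_1(s)\,ds+\int_{\rho-\frac{k_2}{2}}^{\rho+\frac{k_2}{2}}z_2(s)\,ds .$$
This is $-D^\ast\partial_\rho^{-1}$ without the mean correction. I would maximize $J(z)=\int_{-\pi}^{\pi}\Phi(Az)\,d\rho$ over the ball $C=\{\|z\|_2\le R\}$.

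The estimates in the proof of Lemma~\ref{lem:prop_op}(iii) apply verbatim to $A$: we have $\|Az\|_2\le k\|z\|_2$ and $\|Az\|_\infty\le\sqrt{k}\|z\|_2$ with $k=\max\{k_1,k_2\}$, and $A$ is compact. Hence $J$ is bounded, $C^1$, and weakly sequentially continuous on $C$, so a maximizer exists. It also gives the stated bounds on $U:=Az$ once $\|z\|_2\le R$.

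\textbf{Reduction to unimodal $z$.} For $a_1,a_3\ge0$ and $a_2\ge0$, $\Phi$ is even-dominated, $\Phi(t)\le\Phi(|t|)$, and nondecreasing on $[0,\infty)$. Since $|Az|\le A|z|$ pointwise, we get $J(z)\le J(|z|)$.

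Next I replace each $|z_i|$ by its symmetric decreasing rearrangement $z_i^\ast$ on the circle. Expanding $\Phi(Az)$ on $Az\ge0$ produces the quadratic, cubic and quartic terms $\int (Az)^p$. Each is a finite sum of multilinear integrals of the form
$$\int\!\cdots\!\int \prod_j z_{i_j}(s_j)\prod \chi_{[-k/2,k/2]}(s_j-\rho)\,ds\,d\rho$$
with nonnegative coefficients, since $a_1,a_2,a_3\ge0$. The kernels $\chi_{[-k/2,k/2]}$ are symmetric decreasing. The general Riesz (Brascamp--Lieb--Luttinger) inequality, in its version on $S^1$ (Baernstein--Taylor), then yields $J(|z|)\le J(z^\ast)$. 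I expect this step to be the main obstacle: one must check that the circle version covers all the multilinear configurations occurring for $p=3,4$ with the two different window lengths $k_1,k_2$. This is what Lemma~\ref{lem:rearrange} is meant to provide.

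Rearrangement preserves $\|z\|_2$. Hence the supremum over $C$ equals the supremum over the weakly closed subset $C^\ast$ of nonnegative symmetric decreasing functions in $C$, and it is attained there. Such a maximizer is a global maximizer on $C$, so the Lagrange multiplier rule applies in the full space.

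\textbf{Euler--Lagrange equation and nontriviality.} The Lagrange rule gives $\omega^2 z=A^{T}\Phi'(Az)$ componentwise, where $A^T$ consists of the two window integrals. For $z\ge0$, $z\ne0$, the map $t\mapsto J(tz)$ is strictly increasing: under either alternative of the hypotheses, $\Phi'(s)>0$ for $s>0$. So the maximizer satisfies $\|z\|_2=R$ and
$$\omega^2R^2=J'(z)z=\int\Phi'(Az)\,Az\,d\rho>0.$$

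Setting $U=Az$, we get $\omega^2U=AA^T\Phi'(U)$. Differentiating twice, $\partial_\rho^2$ of a window integral of width $k_i$ of a window integral of width $k_i$ is $f(\cdot+k_i)-2f+f(\cdot-k_i)$. This gives $\omega^2U''=\Delta_{\mathbf k}\Phi'(U)$, which is Eq.~\eqref{eq:advdelay} with $\bar u=0$.

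Finally, $U$ is unimodal, because the convolution of nonnegative symmetric decreasing functions with centered interval indicators is again nonnegative and symmetric decreasing. It is nontrivial since $z\ne0$ and $z\ge0$.
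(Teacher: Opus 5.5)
Your route is the paper's proof of Lemma~\ref{lem:ex_2}, specialized to $I=[-\pi,\pi]$: maximize $J$ on the $L^2$-ball, pass to $|z|^\ast$ via the Riesz inequality on the multilinear expansion, use compactness of $A$, and read off $\omega^2$ as a Lagrange multiplier. This main line is sound. The step that fails as written is the import of the operator bounds.

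The proof of Lemma~\ref{lem:prop_op}(iii) only bounds each window integral separately, by $k_i\|z_i\|_2$ and $\sqrt{k_i}\,\|z_i\|_2$. Adding the two components requires Cauchy--Schwarz and gives $\|Az\|_2\le\sqrt{k_1^2+k_2^2}\,\|z\|_2$ and $\|Az\|_\infty\le\sqrt{k_1+k_2}\,\|z\|_2$. Both constants are sharp:
\begin{itemize}
\item the constant pair $z\propto(k_1,k_2)$, admissible because this section has no mean correction, attains the first;
\item $z_i=c\,\chi_{[-k_i/2,k_i/2]}$ attains the second at $\rho=0$.
\end{itemize}
The paper's appendix likewise finds $\sup_C\int_\R(Az)^2\,d\rho=r^2R^2$ with $r^2=k_1^2+k_2^2$.

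This affects the conclusion, not just the proof. Take $a_1>0$ and $R$ small. The constant competitor gives $J\ge\frac{a_1}{2}(k_1^2+k_2^2)R^2$. On the other hand, $\|Az\|_2\le kR$ would force $J(z)\le\frac{a_1}{2}k^2R^2+O(R^3)$. Hence your maximizer satisfies $\|U\|_{L^2}>kR$. What the argument actually yields is $\|U\|_{L^2}\le\sqrt{k_1^2+k_2^2}\,R$ and $\|U\|_\infty\le\sqrt{k_1+k_2}\,R$. Lemma~\ref{lem:prop_op}(iii) and Lemma~\ref{lem:prop_op_2}(ii) in the paper contain the same slip.

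Elsewhere your version is, if anything, more careful than the paper's. Rearranging on the circle is the right framework here. The paper's Lemma~\ref{lem:rearrange} formally applies the real-line inequality to periodically extended functions, which do not vanish at infinity.

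The obstacle you anticipate is resolved by polarization. Each term is a star-shaped form $\int_{S^1}\prod_i(\chi_{k_{(i)}}\ast z_{(i)})\,d\rho$. Let $\sigma$ be a reflection across a diameter, and let $f^\sigma$ be the two-point rearrangement that puts the larger of $f(y),f(\sigma y)$ on the side of $0$. For $x$ in the half-circle containing $0$,
\begin{itemize}
\item $(\chi\ast f^\sigma)(x)\ge\max\{(\chi\ast f)(x),(\chi\ast f)(\sigma x)\}$;
\item $(\chi\ast f)(x)+(\chi\ast f)(\sigma x)$ is unchanged.
\end{itemize}
For nonnegative numbers with $a_i'+b_i'=a_i+b_i$ and $a_i'\ge\max\{a_i,b_i\}$, one has $\prod_i a_i'+\prod_i b_i'\ge\prod_i a_i+\prod_i b_i$. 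So each term increases, and iterated polarizations converge to the symmetric decreasing rearrangement.

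Your monotonicity argument, using $\Phi'>0$ on $(0,\infty)$, gives $\|z\|_2=R$ and $\omega^2R^2=\int\Phi'(U)U\,d\rho>0$. This is cleaner than the paper's appeal to strict convexity and makes $\omega^2>0$ explicit.

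Finally, you, like the paper, only prove $U\not\equiv0$, and in the regime above that is all there is. Project $\omega^2U=AA^{T}\Phi'(U)$ onto the nonconstant Fourier modes. Combine this with $\omega^2\ge a_1\|U\|_2^2/R^2$ and the gap $\max_{j\neq0}\sum_{i=1}^2 4\sin^2(jk_i/2)/j^2<k_1^2+k_2^2$. This shows the nonnegative maximizer is the constant pair itself, so $U$ is a constant state.
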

}

{  The proof of the main result follows from Lemma~\ref{lem:ex_2} below.}
First, we define two operators $A, A^\ast$ which take the role of $D^\ast \partial_\rho^{-1}, \partial_\rho^{-1} D$ {  from the previous section} and analyze their properties.

\begin{lemma} \label{lem:prop_op_2} 
{  The} two operators $A: L^2(I)\times L^2(I)\to L^2(I)$ and $B: L^2(I)\to L^2(I)\times L^2(I)$ by
\begin{align*}
A z(\rho) = -\left(\int_{\rho-\frac{k_1}{2}}^{\rho+\frac{k_1}{2}} z_1(s)\,ds + \int_{\rho-\frac{k_2}{2}}^{\rho+\frac{k_2}{2}} z_2(s)\,ds\right), \qquad B u(\rho) = \begin{pmatrix} \int_{\rho-\frac{k_1}{2}}^{\rho+\frac{k_1}{2}} u(s)\,ds \\
\int_{\rho-\frac{k_2}{2}}^{\rho+\frac{k_2}{2}} u(s)\,ds  
\end{pmatrix},
\end{align*}
are bounded with $A^\ast=-B$. If $I=[0,2\pi]$ they are {  both} compact. With $k=\max\{k_1,k_2\}$ the operators have the following properties:
\begin{itemize}
\item[(i)] 
$\partial_\rho B = D$, $\partial_\rho A=D^\ast$.
\item[(ii)] 
{  $\|Az\|_p \leq k^\frac{2+p}{2p}\|z\|_2$ for any $p\in [2,\infty].$}
\item[(iii)] 
{  $\|Bz\|_p \leq k^\frac{2+p}{2p}\|z\|_2$ for any $p\in [2,\infty].$} 
\item[(iv)] $-A$, $B$ map 
unimodal $L^2$-functions to unimodal $L^p$-functions for all $p\in [2,\infty]$. Restricted to unimodal functions, both maps are compact for $p\in [2,\infty]$ if $I=[-\pi,\pi]$ and for $p\in (2,\infty]$ if $I=\R$. 
\end{itemize}
\end{lemma}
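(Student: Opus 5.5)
The plan is to treat everything as statements about the moving-average operator $T_h u(\rho)=\int_{\rho-h/2}^{\rho+h/2}u(s)\,ds$ with $h\in\{k_1,k_2\}$. In this notation $Az=-(T_{k_1}z_1+T_{k_2}z_2)$ and $Bu=(T_{k_1}u,T_{k_2}u)$. On $I=[-\pi,\pi]$ functions are understood as $2\pi$-periodically extended.

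\textbf{Adjointness and (i).} By Fubini,
\[
\langle T_h u,v\rangle=\int\!\!\int \chi_{[-h/2,h/2]}(\rho-s)\,u(s)v(\rho)\,ds\,d\rho=\langle u,T_h v\rangle,
\]
because the kernel is even. Hence $T_h$ is self-adjoint, and $\langle Az,u\rangle=-\langle z,Bu\rangle$ gives $A^\ast=-B$. For (i), differentiating the integral with variable limits gives $\partial_\rho T_h u=u(\rho+h/2)-u(\rho-h/2)$. This yields $\partial_\rho B=D$ directly, and $\partial_\rho A=D^\ast$ by the formula for $D^\ast$; for $L^2$ functions the derivative is taken in the distributional sense.

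\textbf{Bounds (ii), (iii).} The $p=2$ case $\|T_hu\|_2\le h\|u\|_2$ is exactly the Cauchy--Schwarz/Fubini computation from Lemma~\ref{lem:prop_op}(iii), and it holds verbatim on $\R$. The $p=\infty$ case $\|T_hu\|_\infty\le\sqrt h\,\|u\|_2$ is Cauchy--Schwarz on an interval of length $h$. For $2<p<\infty$, interpolate:
\[
\|f\|_p\le\|f\|_\infty^{1-2/p}\|f\|_2^{2/p}\le h^{\frac12-\frac1p}\,h^{\frac2p}\|u\|_2=h^{\frac{2+p}{2p}}\|u\|_2 .
\]
Summing over the two components gives the stated constants; I will not track the factors coming from two components. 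Compactness on $[0,2\pi]$ follows from Lemma~\ref{lem:prop_op}, since the translation estimate there applies unchanged.

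\textbf{Part (iv).} Unimodality is preserved because $T_hu$ is the convolution of $u$ with the even, symmetric-decreasing function $\chi_{[-h/2,h/2]}$. For $u\ge0$ even and nonincreasing in $|\rho|$, $T_hu$ is nonnegative and even, and for $\rho>0$
\[
\partial_\rho T_hu=u(\rho+h/2)-u(\rho-h/2)\le 0,
\]
since $|\rho-h/2|\le\rho+h/2$. I read the paper's ``non-decreasing'' as nonincreasing for $x>0$, which is what the decay estimate requires. Sums of unimodal functions are unimodal, so $-A$ and $B$ preserve unimodality.

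\textbf{Compactness in (iv): the main obstacle.} On $I=[-\pi,\pi]$, compactness in $L^2$ is already known. For $p>2$, interpolate the $L^2$-compactness against the uniform $L^\infty$ bound to get compactness in $L^p$, $p<\infty$. For $p=\infty$, use Arzel\`a--Ascoli: $|T_hu(\rho)-T_hu(\rho')|\le 2\sqrt{|\rho-\rho'|}\,\|u\|_2$ gives equicontinuity. On $\R$, equicontinuity and local compactness hold the same way, but mass can escape to infinity, which is why $p=2$ must be excluded. The fix is the unimodal decay estimate $0\le f(\rho)\le|\rho|^{-1/2}\|f\|_2$, applied to $f=T_hu$ with $\|f\|_2\le h\|u\|_2$. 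This gives the uniform tail bound
\[
\int_{|\rho|>L}|f|^p\,d\rho\le C\|u\|_2^p\int_L^\infty\rho^{-p/2}\,d\rho\to0 \quad (L\to\infty)\ \text{ for } p>2,
\]
and $\sup_{|\rho|>L}|f|\le CL^{-1/2}$ for $p=\infty$. With uniform tails, equicontinuity and the $L^\infty$ bound, Arzel\`a--Ascoli on $[-L,L]$ plus a diagonal argument produce a subsequence that is Cauchy in $L^p(\R)$ for $p\in(2,\infty]$. The limit is unimodal because pointwise limits preserve the monotonicity and symmetry conditions.
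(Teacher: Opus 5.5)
Your argument is correct and rests on the same two ingredients as the paper's proof: a modulus of continuity for $T_hu$ that is uniform over $\|u\|_2\le 1$, and uniform tightness from the unimodal decay $0\le f(\rho)\le|\rho|^{-1/2}\|f\|_2$. The tools differ.

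The paper writes $T_hu=\chi_{[-h/2,h/2]}\ast u$ and applies Young's inequality with $q=\frac{2p}{2+p}$. In one line this gives both $\|T_hu\|_p\le h^{1/q}\|u\|_2$ and the translation estimate $\|T_hu(\cdot+\tau)-T_hu\|_p\le (2\tau)^{1/q}\|u\|_2$, after which the paper cites the Fr\'echet--Kolmogorov--Riesz criterion. You instead prove the endpoints $p=2$ and $p=\infty$ and interpolate via $\|f\|_p\le\|f\|_\infty^{1-2/p}\|f\|_2^{2/p}$, which gives the same exponent $\frac{2+p}{2p}$. You then get compactness from Arzel\`a--Ascoli on $[-L,L]$, a diagonal subsequence, and the tail bound.

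The paper's route is shorter. Yours has the advantage of genuinely covering $p=\infty$: Fr\'echet--Kolmogorov--Riesz is a criterion in $L^p$ for $p<\infty$, so at $p=\infty$ one tacitly needs exactly your Arzel\`a--Ascoli step. You also verify $A^\ast=-B$ and (i), which the paper leaves implicit. And you are right that \emph{non-decreasing} in the definition of unimodal should read \emph{non-increasing}.

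Two small points, both shared with the paper's sketch.

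First, summing over the two components does \emph{not} give the stated constants. Use the product norm $\|z\|_2^2=\|z_1\|_2^2+\|z_2\|_2^2$ that the paper uses throughout (cf.\ \eqref{eq:amp_constraint}). Taking $I=[-\pi,\pi]$, $k_1=k_2=k$ and $z_1=z_2\equiv 1$ gives $\|Az\|_2=\sqrt{2}\,k\|z\|_2$. What your argument actually proves is $\|Az\|_p\le (k_1^{2\alpha}+k_2^{2\alpha})^{1/2}\|z\|_2\le\sqrt{2}\,k^{\alpha}\|z\|_2$ with $\alpha=\frac{2+p}{2p}$, and similarly for $B$. This is sharp at $p=2$; it is the value $j_0(R)=r^2R^2$ computed in the appendix. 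Only boundedness is used downstream, so nothing breaks. However, the constants in (ii) and (iii), and the a priori bounds on $U$ in Theorems~\ref{thm:ex1} and~\ref{thm:ex2} that rest on them, need this factor.

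Second, for $I=[-\pi,\pi]$ your monotonicity check must also handle $\rho+\frac{h}{2}>\pi$. There the periodic extension gives $u(\rho+\frac{h}{2})=u(2\pi-\rho-\frac{h}{2})$, so you also need $|\rho-\frac{h}{2}|\le 2\pi-\rho-\frac{h}{2}$. This holds for $\rho\in(0,\pi]$ because $h\le 2\pi$.
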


\begin{proof}
    (i) follows directly by differentiation. (ii) and (iii) generalize the estimates from Lemma~\ref{lem:prop_op}. This can be seen from the mapping property of the map $T:u \mapsto \int_{\cdot+a}^{\cdot+b} u(s)\,ds$ for $a>b$ and any $L^2$-function $u:I\to \R$. Since $Tu= \chi_{[a,b]}\ast u$ where $\chi_{[a,b]}$ is the indicator function of the interval $[a,b]$, Young's inequality yields $\|Tu\|_p \leq \|\chi_{[a,b]}\|_q \|u\|_2=(b-a)^\frac{1}{q}\|u\|_2$ for $q= \frac{2p}{2+p} \in [1,2]$ when $p\in [2,\infty]$. To see (iv) one first checks that $T$ maps unimodal functions to unimodal functions which then implies that $-A$ maps pairs of unimodal functions to a single unimodal function and that $B$ maps a single unimodal function to a pair of unimodal functions. It remains to see the compactness which we again check for $T$ by the Fr\'{e}chet-Kolmogorov-Riesz theorem \cite{adams_fournier}. It follows from the equi-integrability estimate 
    \begin{align*}
        \|(Tu)(\cdot+h)-Tu\|_p = \|(\chi_{[a-h,b+h]}-\chi_{[a,b]})\ast u\|_p \leq \|\chi_{[a-h,b+h]}-\chi_{[a,b]}\|_q \|u\|_2 = (2h)^{1/q}\|u\|_2
    \end{align*}
    and from the property of unimodal functions that $0 \leq (Tu)(\rho) \leq \frac{1}{\sqrt{|\rho|}}\|Tu\|_2 \leq \frac{(b-a)}{\sqrt{|\rho|}} \|u\|_2$ which leads to the uniform tightness estimate for $X>0$
    \begin{align*}
        \|Tu\|_{L^p(\R\setminus [-X,X])} \leq \const \|u\|_{L^2(\R)} X^\frac{2-p}{2p}
    \end{align*}
where the constant only depends on $a,b$ and $p$.
\end{proof}

\begin{remark} The last estimate in the proof shows why in (iv) compactness does not hold if $I=\R$ and $p=2$ since in this case the decay of unimodal functions is too weak for the uniform tightness estimate to hold outside intervals $[-X,X]$.  
\end{remark}

This time we are looking for a unimodal solution $z$ of 
\begin{equation} \label{eq:variational_2}
\omega^2 z = B \Phi'(-Az).  
\end{equation}

\begin{lemma} Any solution $z \in L^2(I)$ of \eqref{eq:variational_2} leads via $U=-Az$ to a solution of \eqref{eq:advdelay} {  where $\bar u=0$}.
\end{lemma}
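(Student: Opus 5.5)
The plan is to start from $U=-Az$ and apply $\partial_\rho$ twice to \eqref{eq:variational_2}, using the identities of Lemma~\ref{lem:prop_op_2}(i).

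\emph{Regularity.} First record that $U=-Az$ is absolutely continuous: it is a sum of integrals of $L^2$ functions over intervals of fixed length. Moreover $U\in L^2\cap L^\infty$ by Lemma~\ref{lem:prop_op_2}(ii), and its weak derivative is $U'=-D^\ast z\in L^2$. Since $\Phi'$ is a polynomial, $\Phi'(U)\in L^2\cap L^\infty$. The right-hand side $B\Phi'(U)$ is then again absolutely continuous, with derivative $D\Phi'(U)$.

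\emph{The equation for $z$.} Assume $\omega^2\neq 0$; this is the case of interest, with $\omega^2$ a Lagrange multiplier, and it is tacitly assumed in \eqref{eq:variational_2}. Then \eqref{eq:variational_2} gives $z=\omega^{-2}B\Phi'(U)$. Hence $z$ is absolutely continuous with $\partial_\rho z=\omega^{-2}D\Phi'(U)$.

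\emph{Two derivatives of $U$.} Now compute
\[
\omega^2\,\partial_\rho U=-\omega^2 D^\ast z=-D^\ast B\Phi'(U).
\]
Since $D^\ast B\Phi'(U)$ has the weak derivative $D^\ast D\Phi'(U)$, differentiating once more yields
\[
\omega^2 U''=-D^\ast D\Phi'(U)=\Delta_\mathbf{k}\Phi'(U).
\]
This is \eqref{eq:advdelay} with $\bar u=0$. Because the right-hand side is continuous, $U\in C^2$ and the equation holds classically.

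\emph{Commuting derivatives with $A$, $B$.} The only point needing care is the interchange $\partial_\rho A=D^\ast$ and $\partial_\rho B=D$ on $L^2$ functions. This is the fundamental theorem of calculus for Lebesgue integrals, $\frac{d}{d\rho}\int_{\rho-a}^{\rho+a}f=f(\rho+a)-f(\rho-a)$ almost everywhere, and it holds in the weak sense. It is valid both for $I=\R$ and for $I=[-\pi,\pi]$ with periodic extension.

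\emph{Boundary behaviour.} In the solitary case, decay $U\to0$ at infinity follows from $U=-Az$ with $z\in L^2$: each integral over an interval of fixed length tends to zero. In the periodic case, $2\pi$-periodicity is inherited from the periodic extension of $z$.
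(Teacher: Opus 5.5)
Your proposal is correct and is essentially the paper's argument. The paper applies $-A$ and then $\partial_\rho^2$ to \eqref{eq:variational_2}, using $\partial_\rho A=D^\ast$, $\partial_\rho B=D$ (Lemma~\ref{lem:prop_op_2}(i)) and the fact that $\partial_\rho$ commutes with $A$ and $B$; you do the same computation in the order ``differentiate $U=-Az$, substitute $\omega^2 z=B\Phi'(U)$, differentiate again,'' and your regularity and decay remarks supply details the paper leaves implicit (your assumption $\omega^2\neq0$ is only needed for the upgrade to $U\in C^2$, since the identity $\omega^2U'=-D^\ast B\Phi'(U)$ never divides by $\omega^2$).
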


\begin{proof} We first apply $-A$ and then $\partial_\rho^2$ to \eqref{eq:variational_2}. If we use Lemma~\ref{lem:prop_op_2}(i) and take into account that $\partial_\rho$ commutes with both $A$ and $B$ this leads to
$$
\omega^2 \partial^2_\rho U = -\partial^2_\rho AB \Phi'(U) = -\partial_\rho D^\ast B \Phi'(U) = -D^\ast \partial_\rho B \Phi'(U)= -D^\ast D\Phi'(U)=\Delta_\mathbf{k} \Phi'(U).
$$
\end{proof}

We will obtain a solution to \eqref{eq:variational_2} as the weak limit of a suitable maximizing sequence for the functional 
$$
J(z)= \int_I \Phi(-Az)\,d\rho
$$
on the constraint 

\begin{equation} \label{eq:Rdef}
    \|z\|_2=R
\end{equation}
with $\omega^2$ as Lagrange multiplier. { As pointed out in Section~\ref{sec:periodic_waves} the functional $J$ will have many other constraint critical point different from the global maximizer.}

\begin{lemma} \label{lem:ex_2} Let $R>0$ and consider the problem of maximizing the functional $z\mapsto J(z)$ over the constraint $C:= \{z\in L^2(I)\times L^2(I): \|z\|_2 \leq R\}$. If $\Phi:\R\to\R$ is as in \eqref{eq:def_phi} with $a_1, a_3 \geq 0$, $a_2\in\R$, $(a_2,a_3)\not = (0,0)$, then a maximizer $z$ exists, which is a solution of \eqref{eq:variational_2}. Via $U = -A z$ we recover a solution to \eqref{eq:advdelay}. If $a_2\geq 0$ we find a unimodal solution $U$ and if $a_2\leq 0$ then we find a solution such that $-U$ is unimodal.
\end{lemma}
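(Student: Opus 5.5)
We will maximize $J$ over $C$ restricted to unimodal pairs and then show the restriction costs nothing. Consider first $a_2\ge 0$. Write $\Phi(u)=\frac{a_1}{2}u^2+\frac{a_2}{3}u^3+\frac{a_4}{4}u^4$ with $a_1,a_3\ge 0$. Given $z=(z_1,z_2)\in C$, replace each component by $|z_i|$ and then by its symmetric decreasing rearrangement $z_i^\ast$. This preserves the $L^2$ norm, so $z^\ast\in C$. We will show $J(z)\le J(z^\ast)$.

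For the even-power terms, $|Az|\le -A|z|$ pointwise. For the cubic term with $a_2\ge0$, $(-Az)^3\le (-A|z|)^3$. Hence $J(z)\le J(|z|)$. Next, $\int_I(-A|z|)^p\,d\rho$ for $p=2,3,4$ expands into multilinear integrals of the form $\int\!\cdots\!\int \prod_i |z_{j_i}|(s_i)\prod \chi(\ldots)$, with the convolution kernels $\chi_{[-k_j/2,k_j/2]}$ evaluated at differences of the variables. By the general Riesz (Brascamp--Lieb--Luttinger) rearrangement inequality of Lemma~\ref{lem:rearrange}, these increase under rearrangement, because the kernels are already symmetric decreasing. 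On $I=[-\pi,\pi]$ we use the periodic (circle) version. Thus $\sup_C J$ can be taken over unimodal $z$.

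Now take a maximizing sequence of unimodal $z^{(j)}\in C$ and a weak limit $z$, which is again unimodal. By Lemma~\ref{lem:prop_op_2}(iv), $-Az^{(j)}\to -Az$ strongly in $L^p$ for every $p\in(2,\infty]$, and also for $p=2$ when $I=[-\pi,\pi]$. This handles the cubic and quartic terms. On $I=\R$ the quadratic term $\frac{a_1}{2}\|Az\|_2^2$ is the main obstacle, since compactness fails in $L^2$. We split it as follows. Because $-A$ is bounded and weak limits are available, $\|Az\|_2^2\le\liminf\|Az^{(j)}\|_2^2$, which is the wrong direction. Instead we use strict superquadraticity as in \cite{Herrmann10b}. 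Define $Q(z)=\frac{a_1}{2}\|Az\|_2^2$, which is quadratic, and $N(z)=J(z)-Q(z)$, which is weakly continuous on unimodal sets.

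First show $\sup_C J>\sup_C Q$, i.e. the maximum strictly exceeds the linear value. Take a near-optimal $z$ for $Q$, which can be chosen unimodal, and rescale or compare the cubic/quartic contributions using $a_2\ne0$ or $a_3>0$ together with positivity of $-Az$. Then let $z^{(j)}\rightharpoonup z$ with $\|z\|_2=:\lambda R\le R$. Using the decomposition $z^{(j)}=z+w^{(j)}$ with $w^{(j)}\rightharpoonup0$, we get $Q(z^{(j)})=Q(z)+Q(w^{(j)})+o(1)$. The scaling estimates $Q(w)\le \frac{\|w\|^2}{R^2}\sup_C Q$ and $J(\mu z)\ge\mu^2 J(z)$ for $\mu\ge1$, the latter with strict inequality unless $N(z)=0$, force $\lambda=1$. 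Strong convergence then follows, and $z$ is a maximizer.

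With the maximizer in hand, $J$ is Fréchet differentiable, and, as in Lemma~\ref{lem:ex}, the maximizer lies on $\partial C$, since $J(\mu z)>J(z)$ for $\mu>1$. The Lagrange condition gives $\omega^2 z=B\Phi'(-Az)$, using $A^\ast=-B$, with $\omega^2\ge0$. Testing with $z$ gives $\omega^2R^2=J'(z)z\ge 2J(z)>0$, so $\omega^2>0$. The preceding lemma then yields $U=-Az$ as a solution of \eqref{eq:advdelay}. This $U$ is unimodal because $-A$ preserves unimodality by Lemma~\ref{lem:prop_op_2}(iv). For $a_2\le0$ we apply the result to $\tilde\Phi(u)=\Phi(-u)$, which has cubic coefficient $-a_2\ge0$, and note that $-U$ solves \eqref{eq:advdelay} for $\tilde\Phi$. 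Hence $-U$ is unimodal.
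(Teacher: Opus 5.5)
\textbf{Overall assessment.} Your overall route is the paper's. You rearrange to unimodal maximizing sequences (Lemma~\ref{lem:rearrange}) and pass to a weak limit. You use compactness of $-A$ on unimodal functions for the cubic and quartic parts, and strict superquadraticity for the quadratic part when $a_1>0$ and $I=\R$. Finally you reduce $a_2\le 0$ to $a_2\ge 0$ via $\Phi(-u)$.

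You differ from the paper in the loss-of-mass step. The appendix cuts at $\pm X$ and estimates separately on $I_1(X),I_2(X),I_3(X)$. You instead write $z^{(j)}=z+w^{(j)}$ and combine four facts:
\begin{itemize}
\item the cross term $a_1\langle A^\ast A z,w^{(j)}\rangle$ vanishes by weak convergence;
\item $N(z^{(j)})\to N(z)$ along unimodal sequences;
\item $Q$ is $2$-homogeneous;
\item $J(z)\le \lambda^2\sup_C J$, which holds because $-Az\ge 0$ and $a_2,a_3\ge 0$.
\end{itemize}
This is correct and noticeably shorter than the paper's argument. Your identity $\omega^2R^2=J'(z)z\ge 2J(z)>0$ is also a clean way to get $\omega^2>0$.

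\textbf{The gap.} The strict inequality $\sup_C J>\sup_C Q$ is only asserted, yet the entire case $I=\R$, $a_1>0$ rests on it.
\begin{itemize}
\item \emph{Why ``take a near-optimal $z$'' proves nothing.} The symbol of $-A$ is $(k_1\sinc(k_1\xi),k_2\sinc(k_2\xi))$, whose Euclidean norm is $<r$ for $\xi\neq 0$. So $\sup_C Q=\frac{a_1}{2}r^2R^2$ is not attained. Moreover, any $z$ that is nearly optimal for $Q$ is spread out, which makes $N(z)$ small as well. You must exhibit a family with $\|z\|_2=R$ whose quadratic deficit $\sup_C Q-Q(z)$ is $o(N(z))$.
\item \emph{What the paper's family gives.} The paper takes $z_L=\frac{R}{\sqrt L}(\cos\theta,\sin\theta)\chi_{[-L/2,L/2]}$. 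The deficit is $O(L^{-1})$, while the cubic gain is of order $a_2 r^3R^3L^{-1/2}$; this settles $a_2>0$.
\item \emph{The case $a_2=0<a_3$.} Your ``or $a_3>0$'' claims this case, but the family $z_L$ does not handle it. The quartic gain is also of order $L^{-1}$, so the comparison with the deficit depends on $R$. The paper's proof covers $a_1>0$ on $\R$ only for $a_2\neq 0$ and says so.
\item \emph{A family that does work.} Use smooth long-wave profiles, e.g.\ $z_\lambda=\lambda^{1/2}\phi(\lambda\,\cdot)(\cos\theta,\sin\theta)$ with $0\le\phi\in C_c^\infty(\R)$ and $\|\phi\|_2=R$. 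From $1-\sinc(s)\le \pi^2 s^2/6$, the deficit is at most $C\lambda^2\|\phi'\|_2^2$. In contrast, $\int(-Az_\lambda)^4\,d\rho=\lambda r^4\|\phi\|_4^4(1+o(1))$. So the quartic gain wins as $\lambda\to 0$, and the cubic gain, of order $\lambda^{1/2}$, wins a fortiori.
\end{itemize}
With this estimate written out your proof is complete. In the case $a_1>0$, $a_2=0$, $I=\R$ it would establish more than the paper's own argument does.
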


Before we prove this lemma, we need more details on the structure of the functional $J$. 

\begin{lemma} Let $\Phi:\R\to\R$ be as in \eqref{eq:def_phi} and $z{  = (z_1,z_2)}\in L^2(I)\times L^2(I)$, where $z$ is periodically extended to $\R$ if $I=[-\pi,\pi]$. Then $J(z)$ is a finite linear combination of terms of the form
$$
\int_\R \ldots \int_\R \chi_I(\rho) \prod_{i=1}^{j-l} \chi_{k_1}(\rho-s_i)z_1(s_i) \prod_{i'=1}^l \chi_{k_2}(\rho-s_{i'}')z_2(s_{i'}') \, ds_1\ldots ds_{j-l} ds_1'\ldots d s_l' d\rho
$$    
where $j\in \N$, $l\in \{0,\ldots, j\}$ and $\chi_{k_i} = \chi_{[-\frac{k_i}{2},\frac{k_i}{2}]}$ for $i=1,2$.
\end{lemma}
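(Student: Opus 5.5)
Since $\Phi$ is the quartic polynomial \eqref{eq:def_phi}, we have $J(z)=\sum_{j=2}^4 \frac{a_{j-1}}{j}\int_I (-Az)^j\,d\rho$. The plan is to expand each power $(-Az)^j$ multilinearly and write every factor as an integral against an indicator kernel.

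First, rewrite $-Az$ as a convolution. By the definition of $A$ in Lemma~\ref{lem:prop_op_2}, and using $\chi_{k_i}(\rho-s)=1$ exactly when $s\in[\rho-\frac{k_i}{2},\rho+\frac{k_i}{2}]$,
$$
-Az(\rho)=\int_\R \chi_{k_1}(\rho-s)z_1(s)\,ds+\int_\R \chi_{k_2}(\rho-s')z_2(s')\,ds'.
$$
In the periodic case $I=[-\pi,\pi]$, this uses the periodic extension of $z$ to $\R$. Since $k_i<2\pi$, the window has length less than one period, so the integral over $\R$ agrees with the integral over the window.

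Second, raise this to the power $j$ with the binomial theorem:
$$
(-Az)^j=\sum_{l=0}^j\binom{j}{l}\Bigl(\int\chi_{k_1}(\rho-s)z_1(s)\,ds\Bigr)^{j-l}\Bigl(\int\chi_{k_2}(\rho-s')z_2(s')\,ds'\Bigr)^{l}.
$$
Write each power of a single integral as an iterated integral over independent variables $s_1,\dots,s_{j-l}$ and $s_1',\dots,s_l'$. This produces the products $\prod\chi_{k_1}(\rho-s_i)z_1(s_i)\prod\chi_{k_2}(\rho-s'_{i'})z_2(s'_{i'})$. Then integrate over $\rho\in I$, writing this as $\int_\R\chi_I(\rho)\cdots\,d\rho$.

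Third, justify the use of Fubini to pass to the joint iterated integral. The integrand is bounded in absolute value by the same expression with $|z_1|,|z_2|$ in place of $z_1,z_2$. Integrating that bound in the $s$-variables gives $\int_I (T|z_1|+T|z_2|)^j\,d\rho$, where $T$ denotes convolution with the relevant indicator. This is finite because Lemma~\ref{lem:prop_op_2}(ii) applies to $|z|$ and gives $\|T|z_i|\|_j\le k^{(2+j)/(2j)}\|z\|_2$ for $j\in\{2,3,4\}\subset[2,\infty]$. Absolute integrability is the only point that needs any care, and it is exactly what that bound supplies. Collecting the coefficients $\frac{a_{j-1}}{j}\binom{j}{l}$ gives a finite linear combination of terms of the stated form.
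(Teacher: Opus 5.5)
Your proof is correct and follows the paper's argument: write $-Az=\chi_{k_1}\ast z_1+\chi_{k_2}\ast z_2$ (with the periodic extension when $I=[-\pi,\pi]$), expand $(-Az)^j$ by the binomial theorem, and unfold each power into an iterated integral against the indicator kernels. Your Tonelli/Fubini justification, which applies Lemma~\ref{lem:prop_op_2}(ii) to $|z|$, is a useful addition that the paper leaves implicit. Your remark that $k_i<2\pi$ is not actually needed for the kernel identity itself is also correct.
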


\begin{proof}
We see that $-Az= \chi_{k_1}\ast z_1+ \chi_{k_2}\ast z_2$, where we use that $z_1, z_2$ are periodically extended to $\R$ if $I=[-\pi,\pi]$. Since the potential $\Phi$ is a finite linear combination of monomials the claim follows since we have by the binomial theorem that for any $j\in \N$
\begin{align*}
(-Az)^j(\rho) & = \sum_{l=0}^j \binom{j}{l} (\chi_{k_1}\ast z_1)^{j-l}(\rho) (\chi_{k_2}\ast z_2)^l (\rho)\\
& = \sum_{l=0}^j \binom{j}{l}\int_\R\ldots\int_\R \prod_{i=1}^{j-l} \chi_{k_1}(\rho-s_i)z_1(s_i) \prod_{i'=1}^l \chi_{k_2}(\rho-s_{i'}')z_2(s_{i'}') \, ds_1\ldots ds_{j-l} ds_1'\ldots d s_l'.
\end{align*}
\end{proof}

In the next lemma we show that the functional $J$ increases if its argument $z$ is turned into a vector of suitable unimodal functions. The procedure is called "symmetrically decreasing rearrangement". It is defined as follows, cf. \cite[Chapter~3]{lieb_loss}: for a Lebesgue-measurable set $S\subset \R$ let $S^\ast$ be the interval $(-\frac{|S|}{2}, \frac{|S|}{2})$. For $f\in L^2(I)$ with $I=[-\pi,\pi]$ or $I=\R$ we define 
$$
f^\ast(x) = \int_0^\infty \chi_{\{|f|>t\}^\ast}(x)\,dt, \quad x\in I.
$$
Among many other features, $\|f\|_2= \|f^\ast\|_2$ holds.

\begin{lemma} \label{lem:rearrange} Let $\Phi:\R\to\R$ be as in \eqref{eq:def_phi} with $a_1, a_2, a_3\geq 0$. If $z\in L^2(I)\times L^2(I)$ and $z^\ast\in L^2(I)\times L^2(I)$ is its symmetrically decreasing rearrangement then $J(z)\leq J(z^ \ast)$. 
\end{lemma}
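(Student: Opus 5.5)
The plan is to reduce $J(z)\le J(z^\ast)$ to the general Riesz rearrangement inequality (Brascamp--Lieb--Luttinger, cf. \cite[Chapter~3]{lieb_loss}), applied term by term to the multilinear expansion of $J$ from the preceding lemma. Since $a_1,a_2,a_3\ge 0$, the potential $\Phi$ is a linear combination of the monomials $u^2,u^3,u^4$ with nonnegative coefficients. It therefore suffices to prove, for each $j\in\{2,3,4\}$ and each $l\in\{0,\dots,j\}$, that the multiple integral
$$
\int_\R\!\cdots\!\int_\R \chi_I(\rho)\prod_{i=1}^{j-l}\chi_{k_1}(\rho-s_i)z_1(s_i)\prod_{i'=1}^{l}\chi_{k_2}(\rho-s'_{i'})z_2(s'_{i'})\,ds\,ds'\,d\rho
$$
does not decrease when $z_1,z_2$ are replaced by $z_1^\ast,z_2^\ast$. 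The binomial coefficients in the expansion are positive, so summing these termwise inequalities gives the claim.

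The first step handles signs, because Riesz's inequality needs nonnegative functions. Each integral is bounded above by the same integral with $z_1,z_2$ replaced by $|z_1|,|z_2|$, since the kernel factors $\chi_{k_i}\ge 0$ and $\chi_I\ge 0$. The rearrangement $f^\ast$ is defined through level sets of $|f|$, so $z^\ast=|z|^\ast$. It is therefore enough to prove the inequality for nonnegative $z_1,z_2$.

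Case $I=\R$: here $\chi_I\equiv 1$. The integral has the Brascamp--Lieb--Luttinger form $\int\prod_m f_m\bigl(\sum_n b_{mn}x_n\bigr)\,dx$ with variables $x=(\rho,s,s')$. The factors are the functions $z_\cdot(s_i)$, which are linear in single variables, and the kernels $\chi_{k_\cdot}(\rho-s_i)$, which are linear in differences. The theorem says this integral can only increase when every $f_m$ is replaced by its symmetric decreasing rearrangement. The kernels $\chi_{k_i}=\chi_{[-k_i/2,k_i/2]}$ are already symmetric decreasing, so only the $z$'s change, which gives exactly $J(z)\le J(z^\ast)$ for this term.

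Case $I=[-\pi,\pi]$: this is the step I expect to be the main obstacle. The $z_i$ are $2\pi$-periodic extensions, so the Euclidean theorem does not apply directly, and the cutoff $\chi_I(\rho)$ is not a function of the form the theorem handles freely. I would first use periodicity to rewrite each term as an integral over the circle $\mathbb{T}=\R/2\pi\Z$ in all variables. The factors $\chi_{k_i}(\rho-s)$ then become periodized kernels on $\mathbb{T}$, which are symmetric decreasing on the circle because $k_i<\pi$, and the $\rho$-integration runs over all of $\mathbb{T}$. I would then invoke the circular version of the Riesz/Brascamp--Lieb--Luttinger inequality. This version holds (Baernstein--Taylor, Friedberg--Luttinger) for integrands of product form where each kernel couples only two variables, as is the case here: the graph is a star centered at $\rho$. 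If that citation is not acceptable, a fallback is to prove the star-shaped case directly. Conditional on $\rho$, the integral factorizes into $\prod(\chi_{k_i}\ast z_i)(\rho)$. One can then use two-point (polarization) rearrangements on the circle, under which each $\chi_{k}\ast z$ transforms compatibly, and pass to the limit to reach $z^\ast$. Lower semicontinuity and $L^2$-continuity of $J$ from Lemma~\ref{lem:prop_op_2} justify that limit.
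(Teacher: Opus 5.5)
Your argument is correct. For $I=\R$ it is the paper's proof: expand $J$ into multilinear terms, pass to $|z_1|,|z_2|$, and apply the Brascamp--Lieb--Luttinger inequality, using that the kernels are their own rearrangements.

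For $I=[-\pi,\pi]$ the paper does not change setting. It keeps the integral over $\R^{j+1}$ and treats $\chi_I(\rho)$ as one more factor depending on a single linear form, with $\chi_I^\ast=\chi_I$. So the cutoff is not the obstacle you feared: BLL handles it like any other factor. The paper then applies the Euclidean inequality to the periodically extended $z_i$, and that step is not justified as written. Periodic extensions do not vanish at infinity. Truncating them to the window $[-\pi-\tfrac{k}{2},\pi+\tfrac{k}{2}]$ that the kernels see counts the mass within $\tfrac{k}{2}$ of $\pm\pi$ twice. Hence the Euclidean rearrangement on the right-hand side is not the periodic extension of $z^\ast$. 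Your passage to $\mathbb{T}$, with periodized kernels that are symmetric decreasing arcs, is what actually gives $J(z)\le J(z^\ast)$ in the periodic case. The cost is that you need a circular rearrangement inequality instead of Lieb--Loss, Chapter~3.

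The references you name cover only part of this. Baernstein--Taylor is the two-function inequality, which suffices for the quadratic terms. Friedberg--Luttinger treats cyclic products $\prod_j f_j(x_j-x_{j+1})$. The cubic and quartic terms, however, are stars with three or four leaves, so your polarization fallback is what carries the proof. It does go through:
\begin{itemize}
\item Let $\sigma$ be a reflection across a diameter, $\rho$ a point in the corresponding half-circle $H$, $K$ the periodization of $\chi_{k_i}$, and $f^H$ the polarization of $f$. Then $(K\ast f^H)(\rho)\ge\max\{(K\ast f)(\rho),(K\ast f)(\sigma\rho)\}$, while $(K\ast f^H)(\rho)+(K\ast f^H)(\sigma\rho)=(K\ast f)(\rho)+(K\ast f)(\sigma\rho)$.
\item For nonnegative numbers with $a_i'+b_i'=a_i+b_i$ and $a_i'\ge\max\{a_i,b_i\}$, one has $\prod_i a_i'+\prod_i b_i'\ge\prod_i a_i+\prod_i b_i$.
\item Integrating over $\rho\in H$ shows that each term is nondecreasing when $z_1,z_2$ are polarized simultaneously.
\end{itemize}
To finish, take a sequence of polarizations converging to $z^\ast$ in $L^2(\mathbb{T})$. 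What you need there is continuity of $J$ on $L^2(\mathbb{T})^2$, not lower semicontinuity; it follows from the $L^2\to L^\infty$ bound in Lemma~\ref{lem:prop_op_2}(ii).
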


\begin{proof}
    According to the general Riesz rearrangement inequality, cf. \cite[Chapter~3]{lieb_loss}, we have that 
    \begin{eqnarray*}
    \lefteqn{\int_\R \ldots \int_\R \chi_I(\rho) \prod_{i=1}^{j-l} \chi_{k_1}(\rho-s_i)z_1(s_i) \prod_{i'=1}^l \chi_{k_2}(\rho-s'_{i'})z_2(s'_{i'}) \, ds_1\ldots ds_{j-l} ds_1'\ldots d s_l' d\rho} \\
    & \leq & \int_\R \ldots \int_\R \chi_I(\rho) \prod_{i=1}^{j-l} \chi_{k_1}(\rho-s_i)|z_1|(s_i) \prod_{i'=1}^l \chi_{k_2}(\rho-s'_{i'})|z_2|(s'_{i'}) \, ds_1\ldots ds_{j-l} ds_1'\ldots d s_l' d\rho \\
    & \leq & \int_\R \ldots \int_\R \chi_I^\ast(\rho) \prod_{i=1}^{j-l} \chi^\ast_{k_1}(\rho-s_i)|z_1|^\ast(s_i) \prod_{i'=1}^l \chi^\ast_{k_2}(\rho-s'_{i'})|z_2|^\ast(s'_{i'}) \, ds_1\ldots ds_{j-l} ds_1'\ldots d s_l' d\rho.
    \end{eqnarray*}
    Since the convolution kernels satisfy $\chi_I=\chi_I^\ast, \chi_{k_1}=\chi_{k_1}^\ast, \chi_{k_2}=\chi_{k_2}^\ast$ they are already symmetrically decreasing functions and since $J(z)$ is a finite sum of non-negative multiples of the above form, the claim follows.
\end{proof}

{
\begin{remark}
    The results of \cite{Satomi} and \cite{Wang_Madiman, burchard_thesis} indicate that Lemma~\ref{lem:rearrange} might also hold for general convex potentials $\Phi:[0,\infty)\to\R$ with $\Phi(0)=0$. In these references the inequality $\int_\R \Phi(w_1\ast w_2)\,d\rho \leq \int_\R \Phi(w_1^\ast\ast w_2^\ast)\,d\rho$ is established whenever $w_1, w_2\geq 0$. For our purpose we would need an extension to a sum of two convolutions inside $\Phi$. 
\end{remark}
}

\begin{proof}[Proof of Lemma~\ref{lem:ex_2}] We only consider the case $a_2\geq 0$, where we show the existence of a positive unimodal solution. For $a_2\leq 0$ we obtain a negative unimodal solution by replacing $\Phi(u)$ with $\Phi(-u)$. Since $\Phi(u)\leq Mu^2(1+ |u|^2)$ for all $u\in \R$ with a suitable $M>0$ for all $u\in\R$ we see from Lemma~\ref{lem:prop_op_2}(iii) that $J(z) \leq M k^2\|z\|_2^2 (1+k\|z\|_2^2)$. Therefore $J$ is bounded from above on the norm ball $C$ so that we can find a maximizing sequence $(z_j)_{j\in\N}$ in $C$. If we pass from $(z_j)_{j\in \N}$ to the sequence $(w_j)_{j\in\N}$ with $w_j := |z_j|^\ast$ then the new sequence is still in $C$ and by Lemma~\ref{lem:rearrange} it is also maximizing. By extracting a subsequence and denoting it again by $(w_j)_{j\in \N}$ we have that $w_j\rightharpoonup w\in C$ as $j\to \infty$. {  Since the set of unimodal $L^2$-functions is closed and convex it is also weakly closed which means} that $w$ is unimodal. 
{  Now the proof differs according to whether $I=[-\pi,\pi]$ or $I=\R$. In the former case, we can use the compactness of $A$ from Lemma~\ref{lem:prop_op_2} to see that $w$ is a maximizer of $J|_C$. In the latter case $-A$ compactly maps pairs of unimodal $L^2$-functions into $L^3(\R)$ and $L^4(\R)$ which yields $J(w_j)\stackrel{j\to\infty}{\to} J(w)$ provided $a_1=0$, i.e., if the quadratic part in $\Phi$ is absent. For the case $a_1>0$ the same result still holds provided also $a_2>0$ but its proof is more involved, cf. Appendix~\ref{sec:appendix}.}

The strict convexity of $J$ on non-negative functions implies that $w\in \partial C$. Since $J'(w)\varphi = -\int_I \Phi'(-Aw) A\varphi \,d\rho = \int_I -A^\ast \Phi'(-Aw)\varphi \,d\rho = \int_I B\Phi'(-Aw)\varphi\,d\rho$ we see that $w$ is a non-trivial, unimodal, non-negative solution of \eqref{eq:variational_2}. Hence, {  by Lemma~\ref{lem:prop_op_2}(iv),} $U:=-Aw$ is a non-trivial, unimodal, non-negative solution of \eqref{eq:advdelay}.
\end{proof}

\subsection{Numerical Computation of Traveling Waves}

{
To find periodic waves of Eq.~\eqref{eq:advdelay} numerically, we use the fixed point formulation from \eqref{eq:fixed_point}. 
It takes the explicit form
\begin{equation} \label{eq:integral2}
\begin{split}
\omega^2 z_1 &= \hat{A}_{k_1} \Phi'(\bar{u} + \hat{A}_{k_1} z_1 +  \hat{A}_{k_2} z_2   ) \\ 
\omega^2 z_2  &= \hat{A}_{k_2} \Phi'(\bar{u} + \hat{A}_{k_1} z_1 +  \hat{A}_{k_2} z_2) 
\end{split}
\end{equation}
where the operators $\hat{A}_{k_i}$, $i=1,2$ are defined by 
$$ 
(\hat{A}_{k_i} z_i)(\rho) \coloneqq \int_{\rho - k_i/2}^{\rho + k_i/2} z_i(\tilde{\rho})d\tilde{\rho} - \frac{k_i}{2\pi}\int_0^{2\pi} z_i(\rho)d\rho.
$$
In accordance with the approach in Section~\ref{sec:periodic_waves} we make use of the following amplitude type constraint
\begin{equation} \label{eq:amp_constraint}
C=\left\{(z_1,z_2)\in L^2([0,2\pi])\times L^2([0,2\pi]): \int_0^{2\pi} z_1(\rho)^2 + z_2(\rho)^2 d\rho = R^2 \right\}
\end{equation}
for some chosen value $R>0$. 

\medskip

In Section~\ref{sec:periodic_waves} this fixed-point problem was solved as a constrained maximization problem with Lagrange multiplier $\omega^2$, { and we pointed out that the fixed-point problem has many other solutions different from the constraint global maximizer.}
{ Following the idea of solving the fixed-point problem via global maximization}, one can consider the positive gradient flow 
$$
\begin{pmatrix} \dot z_1 \\ \dot z_2\end{pmatrix} = P_{T_z C} \begin{pmatrix} \hat{A}_{k_1} \Phi'(\bar{u} + \hat{A}_{k_1} z_1 +  \hat{A}_{k_2} z_2   ) \\ \hat{A}_{k_2} \Phi'(\bar{u} + \hat{A}_{k_1} z_1 +  \hat{A}_{k_2} z_2   )
\end{pmatrix}
$$
with $P_{T_z C}$ being the projection onto the tangent space of the constraint set $C$ given by 
$$
P_{T_z C} \begin{pmatrix} h_1 \\ h_2 \end{pmatrix} = \begin{pmatrix} h_1 - \frac{\langle z,h\rangle_{L^2\times L^2}}{R^2} z_1 \\ 
h_2 - \frac{\langle z,h\rangle_{L^2\times L^2}}{R^2} z_2 
\end{pmatrix}, \quad h=(h_1, h_2)\in L^2([0,2\pi])\times L^2([0,2\pi])
$$
and $\langle \cdot,\cdot \rangle_{L^2\times L^2}$ being the standard inner product on $L^2([0,2\pi])\times L^2([0,2\pi])$. { Depending on the choice of the initial condition the gradient flow might not end up in a global maximizer but rather in a local maximizer which, however, is still a solution of the fixed-point problem.} In its time-discretized form with $\Delta_t$ as time-step, the gradient flow becomes 
$$
\begin{pmatrix} w_1^{j} \\ w_2^{j} \end{pmatrix} = \begin{pmatrix}
    z_1^j \\ z_2^j
\end{pmatrix} + \Delta_t P_{T_{z^j} C}
\begin{pmatrix} \hat{A}_{k_1} \Phi'(\bar{u} + \hat{A}_{k_1} z_1^j +  \hat{A}_{k_2} z_2^j   ) \\ \hat{A}_{k_2} \Phi'(\bar{u} + \hat{A}_{k_1} z_1^j +  \hat{A}_{k_2} z_2^j   ) \end{pmatrix}, \quad f^j := \sqrt{\frac{\|w^j\|_{L^2}}{R}}, \quad \begin{pmatrix}
    z_1^{j+1} \\ z_2^{j+1}
\end{pmatrix} = \frac{1}{f^j} \begin{pmatrix} w_1^{j} \\ w_2^j \end{pmatrix}.
$$
If we set the time-step to $\Delta_t = R^2 (\sum_{i=1}^2 \langle z^j_i, \hat A_{k_i}\Phi'(\bar u +\hat A_{k_1}z^j_1+ \hat A_{k_2}z^j_2)\rangle_{L^2})^{-1}$ the above scheme takes the very simple form 
$$
\begin{pmatrix} \tilde w_1^{j} \\ \tilde w_2^{j} \end{pmatrix} = 
\begin{pmatrix} \hat{A}_{k_1} \Phi'(\bar{u} + \hat{A}_{k_1} z_1^j +  \hat{A}_{k_2} z_2^j   ) \\ \hat{A}_{k_2} \Phi'(\bar{u} + \hat{A}_{k_1} z_1^j +  \hat{A}_{k_2} z_2^j   ) \end{pmatrix}, \quad f^j := \sqrt{\frac{\|\tilde w^j\|_{L^2}}{R}}, \quad \begin{pmatrix}
    z_1^{j+1} \\ z_2^{j+1}
\end{pmatrix} = \frac{1}{f^j} \begin{pmatrix} \tilde w_1^{j} \\ \tilde w_2^j \end{pmatrix}
$$
and the frequency update is given by $\omega^{j+1} = \sqrt{f^j}$. Upon convergence of $z^j=(z_1^j,z_2^j)$ and $\omega^j$ as $j\to \infty$, we obtain a solution of Eq.~\eqref{eq:advdelay} with frequency
$\omega$ and profile $U =\hat{A}_{k_1} z_1 +  \hat{A}_{k_2} z_2 $ which,
due to the formulation
of the $\hat{A}$ operators, will have a zero mean. 
This scheme is similar for one-dimensional lattices, see \cite{DHM06,Hochstrasse,pego1} for example.
In \cite{Yasuda2016} traveling periodic waves are found by discretizing in $\rho$ and
performing Newton iterations subject to the constraint that $\bar{U} = 0$. Solitary waves can also be found by replacing the $2\pi$ periodicity 
property with $\lim_{|\rho|\rightarrow \infty} U(\rho) = 0$,
which is implemented by enforcing Dirichlet boundary conditions. In this case, $\omega/r$ represents the solitary wave speed.
Examples of solitary and periodic waves are shown in Sec.~\ref{sec:KdV},
see for example Fig.~\ref{fig:solitary} and Fig.~\ref{fig:periodic}.
}

\subsection{KdV Description of Traveling Waves} \label{sec:KdV}

\begin{figure}[t!] %
    \centerline{
   \begin{tabular}{@{}p{0.34\linewidth}@{}p{0.34\linewidth}@{}p{0.34\linewidth}@{}}
     \rlap{\hspace*{5pt}\raisebox{\dimexpr\ht1-.1\baselineskip}{\bf (a)}}
 \includegraphics[height=4.5cm]{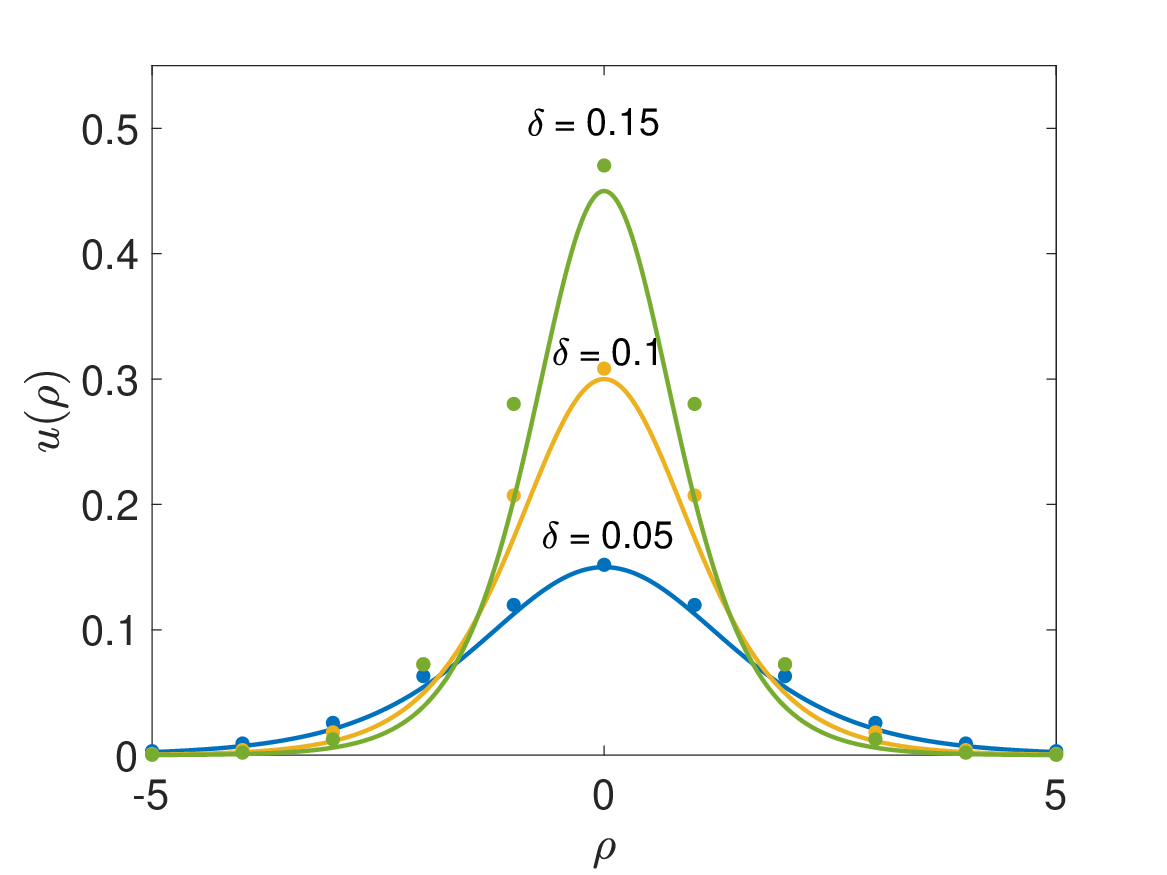}  &
   \rlap{\hspace*{5pt}\raisebox{\dimexpr\ht1-.1\baselineskip}{\bf (b)}}
 \includegraphics[height=4.5cm]{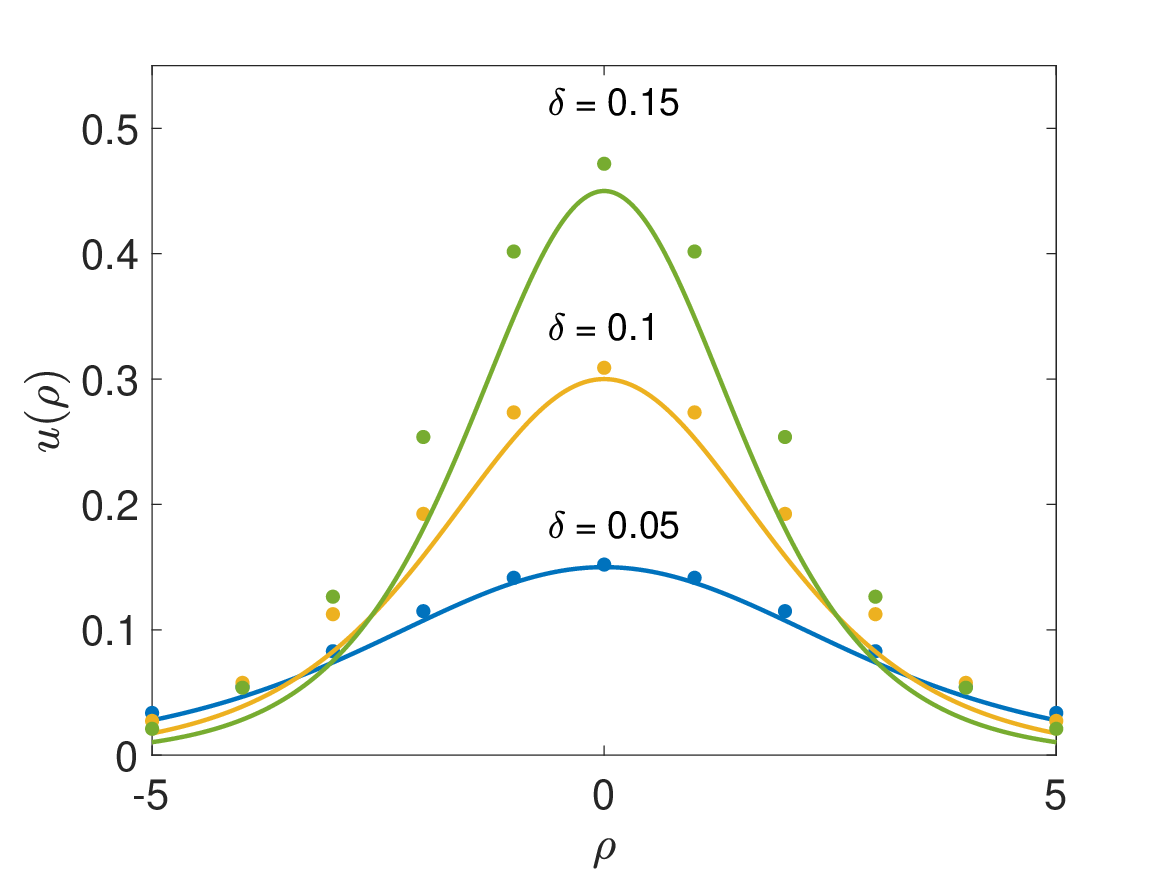} &
    \rlap{\hspace*{5pt}\raisebox{\dimexpr\ht1-.1\baselineskip}{\bf (c)}}
 \includegraphics[height=4.5cm]{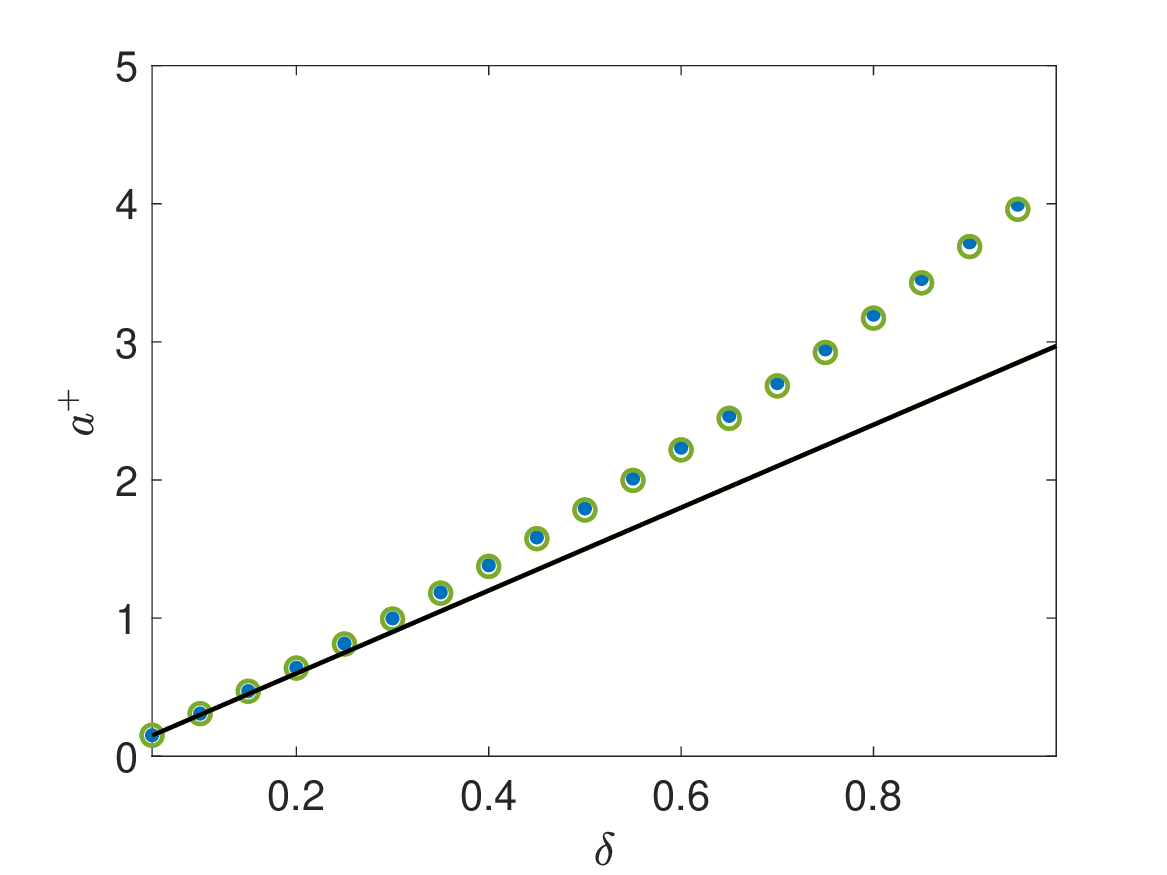} 
  \end{tabular}
  }
   \caption{Comparison of numerical lattice solitary waves and the 
   KdV predictions with $a_1 = a_2=1$ and $a_3=0$. \textbf{(a)} The wave numbers satisfy $k_2/k_1=1$.
   The solid line represents the KdV prediction and the markers
   represent the numerical lattice solitary wave. Three examples
   are shown with deviation in sound speed $\delta = 0.05$
   (bottom) $\delta = 0.1$ (middle) and $\delta = 0.15$ (top).
   \textbf{(b)} Same as panel (a) but with $k_2/k_1=2$.
   \textbf{(c)} Amplitude vs $\delta$ for $k_2/k_1=1$ (green open circles)
   and $k_2/k_1=2$ (blue filled circles)
   of the numerical solution and the KdV prediction (black line),
   which is independent of the wave-vector.
   }
\label{fig:solitary}
\end{figure}

We seek to approximate analytically the traveling waves discussed previously. We employ
the KdV scaling in a co-moving frame consistent with that of the traveling waves of the 2D lattice along a particular lattice direction defined below. In particular,
we use the ansatz 
\begin{equation} \label{eq:ansatz}
    u(n,m,t) = \epsilon^2 A(X,T), \qquad X=\epsilon( k_1n + k_2m - c r t), \quad T = \epsilon^3 t.
    \end{equation}
Here, $cr$ plays the role of the $\omega$ and $c$ is a velocity. Recall
that $k_1=r\cos(\theta), k_2 = r\sin(\theta)$.
With this ansatz, the derivation of KdV is similar to the one-dimensional
situation \cite{Zabusky}. Upon substitution of the ansatz \eqref{eq:ansatz} into Eq.~\eqref{eq:model} we obtain
\begin{align*}
\mathrm{Residual} =
-\epsilon^2&( (cr)^2 \epsilon^2 A_{XX} -2cr A_{XT} \epsilon^4   + A_{TT} \epsilon^6)     ) \\
&
{
+ \sum_{j=1}^3 a_j \epsilon^{2j}
\Big(
A(X-\epsilon k_1)^j
+
A(X-\epsilon k_2)^j
+
A(X+\epsilon k_1)^j
+
A(X+\epsilon k_2)^j
-
4A(X)^j
\Big).
}
\end{align*}
We now expand the shifted terms using the Taylor series:
\begin{align*}
A(X \pm \varepsilon k, T) &= \left[A \pm \varepsilon k A_X + \frac{(\varepsilon k)^2}{2} A_{XX} \pm \frac{(\varepsilon k)^3}{6} A_{XXX} + \frac{(\varepsilon k)^4}{24} A_{XXXX} + \cdots\right].
\end{align*}
With the goal of making the residual small, we collect terms
according to powers of $\epsilon$. Eliminating the $\mathcal{O}(\epsilon^4)$ terms yields the relation
\begin{equation}
    (rc)^2 = a_1 (k_1^2 + k_2^2) 
\end{equation}
such that $c = \sqrt{a_1}$ is the sound speed. At $\mathcal{O}(\epsilon^6)$ we obtain {  (upon one integration w.r.t. $X$)} the KdV equation,
\begin{equation} \label{eq:KdV}
0 =  A_T + \frac{\sqrt{a_1} }{24r}(k_1^4 + k_2^4) A_{XXX}+ \frac{a_2 r}{\sqrt{a_1}} A A_X 
\end{equation}
If $k_1=0$ or $k_2=0$ one recovers the KdV description for a one-dimensional FPUT lattice.

\subsubsection{Solitary waves}
The KdV equation, Eq.~\eqref{eq:KdV}, has the solitary wave solution
\begin{equation} \label{eq:KdVsoliton}
    A(X,T) = \frac{3 \sqrt{a_1} \sigma }{a_2 r } \sech^2\left( \sqrt{ \frac{6 \sigma r}{ \sqrt{a_1 }   (k_1^4 + k_2^4)   }  } (X - \sigma T)   \right)  
\end{equation} 
where $\sigma>0$ is a free parameter.
In terms of the original 2D lattice variables we have
$$u_{n,m}(t)  =  \frac{3 \sqrt{a_1} \sigma \epsilon^2 }{a_2 r } \sech^2\left( \sqrt{ \frac{6 r ^3 \sigma}{ \sqrt{a_1} (k_1^4 + k_2^4)}}\epsilon\left(    \frac{k_1 n + k_2m}{r}  - (\sqrt{a_1} + \frac{\sigma \epsilon^2}{r})t          \right )    \right). $$
The speed of the above solitary wave is 
$$s = \sqrt{a_1} + \frac{\sigma \epsilon^2}{r}. $$
The above expression naturally defines  a small parameter
$$\delta = \frac{\sigma \epsilon^2}{r} > 0$$
which is the deviation from the sound speed $c_s=\sqrt{a_1}$.
%
The solitary wave can also be expressed in terms of the small parameter $\delta$,
\begin{equation} \label{eq:lattice_solitary}
    u_{n,m}(t) = u(\rho)  =  \frac{3 \sqrt{a_1} \delta }{a_2 } \sech^2\left( \sqrt{ \frac{6 \delta}{ r^2 \sqrt{a_1} (\cos(\theta)^4 + \sin(\theta)^4)}} \,\rho            \right) 
\end{equation}
where $\rho = k_1 n + k_2m - \omega t$ with $\omega = r(\sqrt{a_1} +  \delta) $. We have the following relation for the amplitude {  $a$} of the solitary wave
in terms of the deviation from the sound speed
$$a = \frac{3 \sqrt{a_1}}{a_2} \delta.  $$
An important observation is 
that the solitary wave amplitude and speed do not depend on
the choice of wave vector, but the shape (or rather its width) does.

Figure~\ref{fig:solitary} shows comparisons between
the KdV prediction, Eq.~\eqref{eq:lattice_solitary}, 
and numerical solutions of the fixed-point problem shown in 
Eq.~\eqref{eq:integral2} with Dirichlet boundary conditions. Panel (a) shows an example
with $k_2/k_1=1$ for three different speed parameters
$\delta = 0.05,0.1$ and $0.15$. Panel (b) is similar
with $k_2/k_1 =2$. There is no noticeable difference in amplitudes 
when comparing both panels (and by construction their wave speeds
are identical for both numerical and KdV prediction), but those in panel (b) are wider, both in
the KdV prediction and for the numerical solution,
as expected.
Panel (c) shows the speed-amplitude relationship
for the KdV prediction (black line) and the numerical
solution for both the $k_2/k_1=1$ and $k_2/k_1 =2$ cases.
While the speed-amplitude relationship predicted by
the KdV equation is independent of the wavevector,
there is a small observed difference in the numerical solutions
between the $k_2/k_1=1$ and $k_2/k_1=2$ cases (since the blue dots do not lie exactly in the center of
the green circles).
{{This suggests anisotropy effects are possible
in the 2D lattice, albeit minimal for the solitary waves studied
here. Indeed, anisotropic effects in solitary waves are
known to exist in  other 2D FPUT lattices, such
as those with piecewise linear interaction terms \cite{VAINCHTEIN2018}.
While anisotropy is an interesting aspect of 2D lattice solitary wave propagation, 
it lies outside the scope of the present work. Later, for DSWs,
we do observe and study anisotropy, in particular in connection
to wavenumber dependence on the angle $\theta$.
}

\subsubsection{Periodic waves}
\begin{figure} 
    \centerline{
   \begin{tabular}{@{}p{0.45\linewidth}@{}p{0.45\linewidth}}
     \rlap{\hspace*{5pt}\raisebox{\dimexpr\ht1-.1\baselineskip}{\bf (a)}}
 \includegraphics[height=6cm]{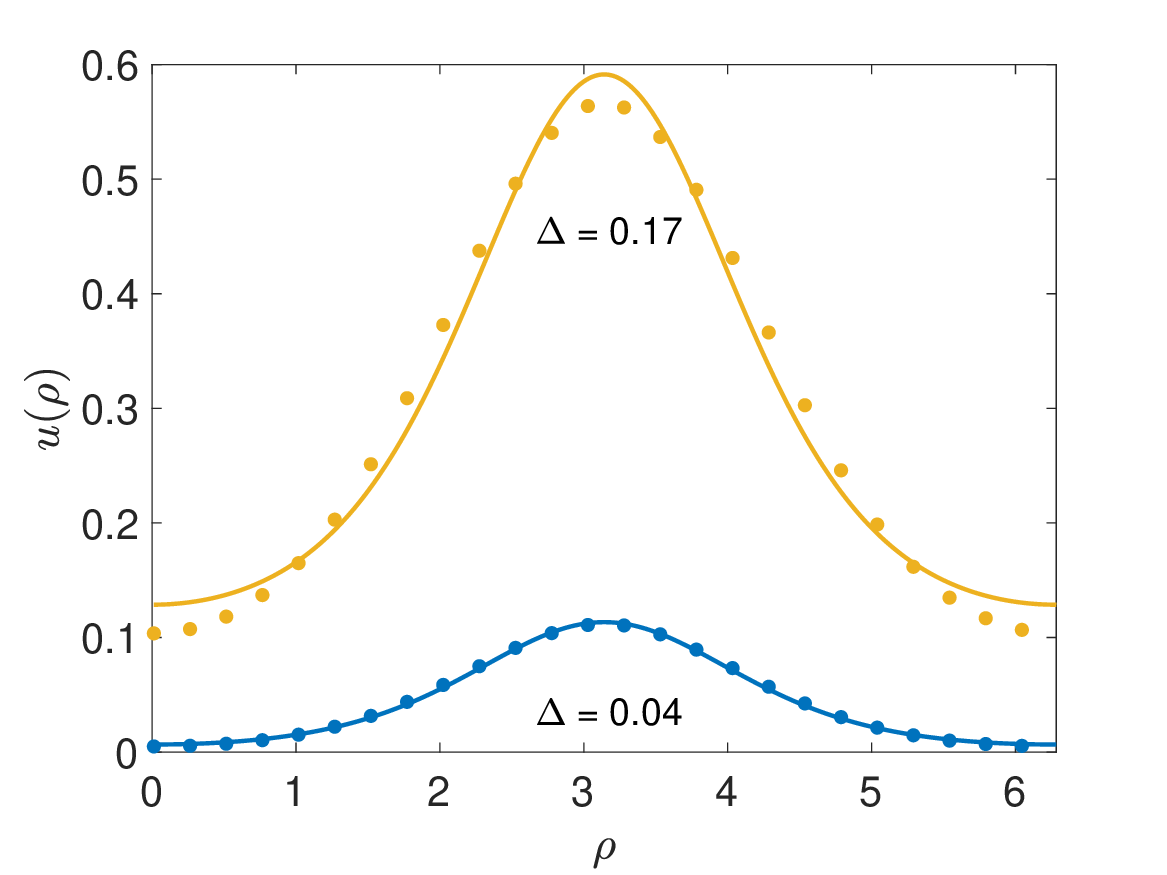}  &
   \rlap{\hspace*{5pt}\raisebox{\dimexpr\ht1-.1\baselineskip}{\bf (b)}}
 \includegraphics[height=6cm]{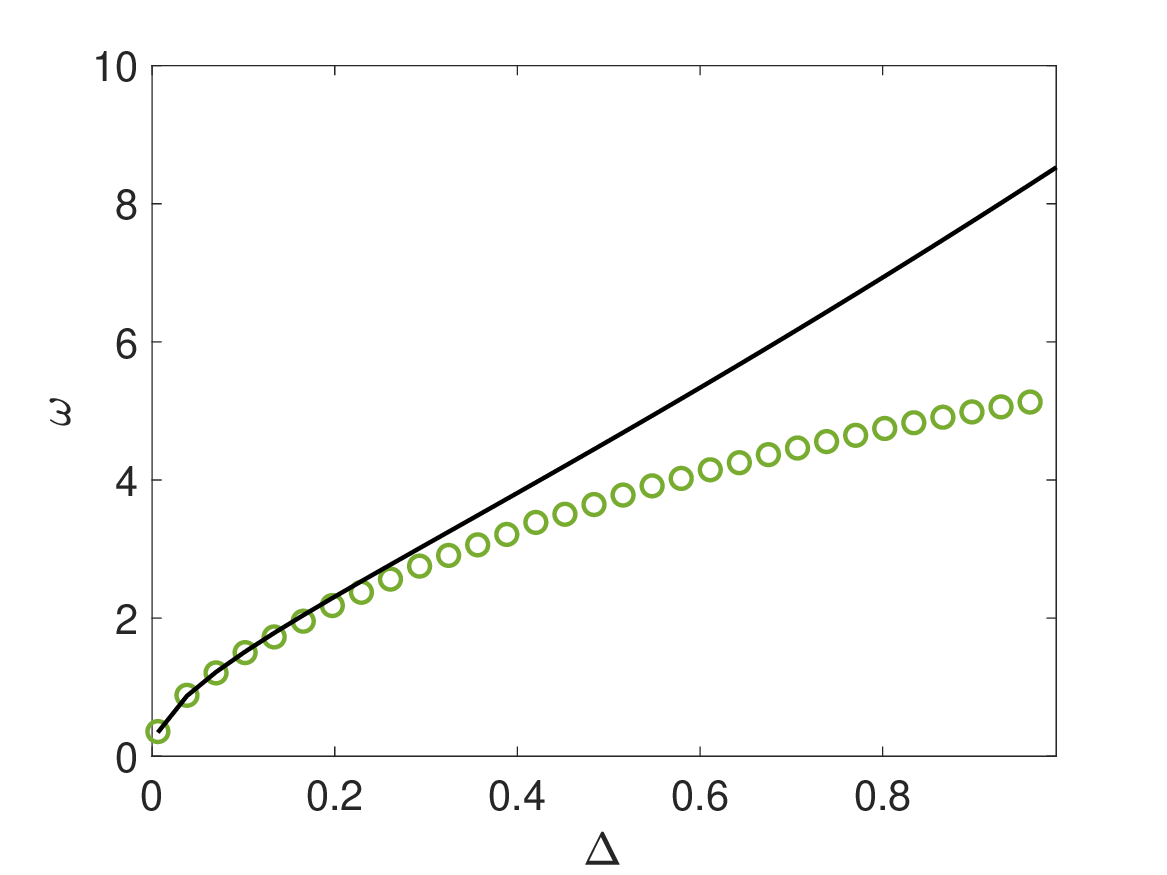} 
  \end{tabular}
  }
   \caption{Comparison of numerical lattice periodic traveling waves and the 
   KdV predictions with $a_1 = a_2=1$ and $a_3=0$. \textbf{(a)} The wave numbers are $k_1=k_2=1$.
   The solid line represents the KdV prediction and the markers
   represent the numerical lattice periodic wave. Two examples
   are shown with deviation in sound speed $\Delta = 0.04$
   (bottom)  and $\Delta = 0.17$ (top).
   \textbf{(b)} Frequency vs $\Delta$ for $k_1=k_2=1$ (green open circles)
   and the numerical solution and the KdV prediction (black line).
   \label{fig:periodic}
   }
\end{figure}

The KdV equation, Eq.~\eqref{eq:KdV},  has the following three parameter family of periodic solutions,
\begin{equation} \label{eq:kdvperiodic}
\begin{aligned}
A(X,T) &= \frac{\sqrt{a_1}}{a_2 r}\left(r_1 + r_2 - r_3 + 2(r_3 - r_1) \dn^2 \left(2\sqrt{\frac{(r_3 - r_1)r }{    \sqrt{a_1}(k_1^4 + k_2^4)  } }(X - V T) ; \mu \right) \right), 
\end{aligned}
\end{equation}
where
$$ V = \left( \frac{r_1 + r_2 + r_3}{3}\right), \qquad \mu = \frac{r_2-r_1}{r_3-r_1}. $$
The parameters are ordered $r_1 \leq r_2 \leq r_3$, and $\dn$ is one of the Jacobi elliptic functions with elliptic parameter $0 \leq \mu \leq 1$
 ($\sqrt{\mu}$ is the so-called modulus) \cite{NIST2010}.
Notice that in the limit $\mu \rightarrow 1$ one recovers the solitary waves of the model. In particular,
if one considers $r_1\rightarrow 0$ and $r_2\rightarrow r_3 = 3 \sigma/2 $
then one recovers Eq.~\eqref{eq:KdVsoliton}.

Returning to the original 2D lattice variables we have
\begin{equation} \label{eq:2Dper}
\begin{aligned}
u_{n,m}(t) &= \epsilon^2 \frac{\sqrt{a_1}}{a_2 r}\left(r_1 + r_2 - r_3 + 2(r_3 - r_1) \dn^2 \left(2\sqrt{\frac{(r_3 - r_1)r }{    \sqrt{a_1}(k_1^4 + k_2^4)  } }\epsilon( k_1n+k_2m - (\sqrt{a_1} r + V \epsilon^2 )t) ; \mu \right) \right), 
\end{aligned}
\end{equation}
where $V$ and $\mu$ are defined as before. Similar to the solitary wave case, where there was only a single effective parameter $\sigma \epsilon^2/r$, we have three effective parameters for this periodic wave
$$ R_1 = \frac{\epsilon^2 r_1}{r}, \quad R_2 = \frac{\epsilon^2 r_2}{r}, \quad R_3 = \frac{\epsilon^2 r_3}{r} $$
in which case we can write the periodic wave as
\begin{equation} \label{eq:2Dper_normal}
\begin{aligned}
u_{n,m}(t) &=  \frac{\sqrt{a_1}}{a_2}\left(R_1 + R_2 - R_3 + 2(R_3 - R_1) \dn^2 \left(2\sqrt{\frac{(R_3 - R_1) }{    \sqrt{a_1}(\cos^4(\theta) + \sin^4(\theta))  } }( \cos(\theta) n+ \sin(\theta)m - (\sqrt{a_1} + \Delta)t) ; \mu \right) \right), 
\end{aligned}
\end{equation}
where
$$ \Delta = \epsilon^2\frac{r_1 + r_2 + r_3}{3r} = \frac{R_1 + R_2 + R_3}{3}$$
which represents the deviation from the sound speed $c_s=\sqrt{a_1}$.
The amplitude of the periodic wave is 
$$a^+ =\frac{\max(u)-\min(u)}{2}= \frac{\sqrt{a_1}}{a_2} (R_2 - R_1). $$
Note, we include the $1/2$ factor in the definition of the amplitude
to be consistent with standard definitions of amplitude, e.g., for harmonic
waves \cite{GP73}.
If one defines 
$$\Theta = \frac{2\pi}{K(\mu)}  \sqrt{\frac{(R_3 - R_1) }{    \sqrt{a_1}(\cos^4(\theta) + \sin^4(\theta))  } } (\cos(\theta) n+ \sin(\theta)m - (\sqrt{a_1} + \Delta)t))    $$
where $K(\mu)$ and $E(\mu)$ are complete elliptic integrals of the first and second kind, respectively, then we can rewrite Eq.~\eqref{eq:2Dper_normal} as
\begin{equation} \label{eq:2Dparameterized}
  u_{n,m}(t) = u(\Theta) =U(\Theta) + \bar{u}     
\end{equation}
where the mean is 
\begin{equation} \label{eq:kdvmean}
\bar{u} = \frac{1}{2\pi}\int_0^{2\pi} u(\Theta)  d\Theta = \frac{\sqrt{a_1}}{a_2}   ( R_2 + R_1 - R_3
+ 2(R_3-R_1) E(\mu)/K(\mu)) 
\end{equation}
and 
$$ U(\Theta) =
 \frac{\sqrt{a_1}}{a_2} \left(2(R_3 - R_1) \dn^2 \left( \frac{K(\mu)}{\pi} \Theta; \mu \right) - 2(R_3-R_1) E(\mu)/K(\mu) \right)
$$
is a $2\pi$-periodic wave $U(\Theta) = U(\Theta + 2\pi)$ with zero mean
$$\frac{1}{2\pi}\int_0^{2\pi} U(\Theta)  d\Theta = 0. $$
The periodic wave shown in Eq.~\eqref{eq:2Dparameterized} approximates
 the one in Eq.~\eqref{eq:tw} with the following wave numbers
and frequency
\begin{subequations} \label{eq:perparm}
\begin{align}
   \hat{r} &=  \frac{2 \pi}{K(\mu)}   \sqrt{\frac{(R_3 - R_1) }{    \sqrt{a_1}(\cos^4(\theta) + \sin^4(\theta) ) } }, \\
    \hat{k}_1 &= \hat{r} \cos(\theta), \\ 
     \hat{k}_2 &= \hat{r} \sin(\theta), \\
    \omega &= \hat{r}(\sqrt{a_1} + \Delta).
\end{align}
\end{subequations}
The deviation from the linear frequency $\hat{r}\sqrt{a_1}$ is $\hat{r} \Delta$. 
Figure~\ref{fig:periodic} shows comparisons between
the KdV prediction, Eq.~\eqref{eq:2Dparameterized}, 
and numerical solutions of the fixed-point problem shown in 
Eq.~\eqref{eq:integral2} with $2\pi$ periodic boundary conditions.
We fix $\theta=\pi/4$ and $r_1 = 0.01$, $r_2=0.9$ and $r_3 = 1$ and then
 choose various values of $\epsilon$ to obtain $R_1$,$R_2$,$R_3$
(such that there is one parameter set for each value of $\epsilon$). Once
 the $(\theta,R_1,R_2,R_3)$ parameter set is chosen, the corresponding
wavenumbers $k_1,k_2$, and frequency $\omega$ 
are chosen according to Eq.~\eqref{eq:perparm} for the lattice solution.
The mean $\bar{u}$ is computed using Eq.~\eqref{eq:kdvmean} and
the corresponding amplitude parameter $R$ is computed using
Eq.~\eqref{eq:Rdef}. Now with
the parameter set $(R,k_1,k_2,\bar{u})$ we can implement
the fixed-point scheme to solve Eq.~\eqref{eq:integral2}
to obtain the corresponding numerical periodic wave
of the lattice. Figure~\ref{fig:periodic}(a) shows the KdV
periodic wave and corresponding numerical lattice solution
with $k_2/k_1=1$ for two different speed parameters
$\Delta = 0.04$ and $0.17$. Notice that, due to the
constraint of $2\pi$ periodicity of $U$, the formula of the KdV prediction of the
periodic wave is independent of the wavevector,
see Eq.~\eqref{eq:2Dparameterized}. However, which lattice solution
this is mapped to will depend on the wavevector, in virtue of Eq.~\eqref{eq:perparm}.
Figure~\ref{fig:periodic}(b) shows the speed-frequency relationship
for the KdV prediction (black line) and the numerical
solution for the $k_2/k_1=1$ case (green circles).
As before, the agreement is good for small deviations
from the sound speed (namely for $\Delta$ small).

\section{Dispersive Shock Waves} \label{sec:DSWs}

\begin{figure}[t!] %
    \centerline{
   \begin{tabular}{@{}p{0.34\linewidth}@{}p{0.34\linewidth}@{}p{0.34\linewidth}@{}}
     \rlap{\hspace*{5pt}\raisebox{\dimexpr\ht1-.1\baselineskip}{\bf (a)}}
 \includegraphics[height=4.2cm]{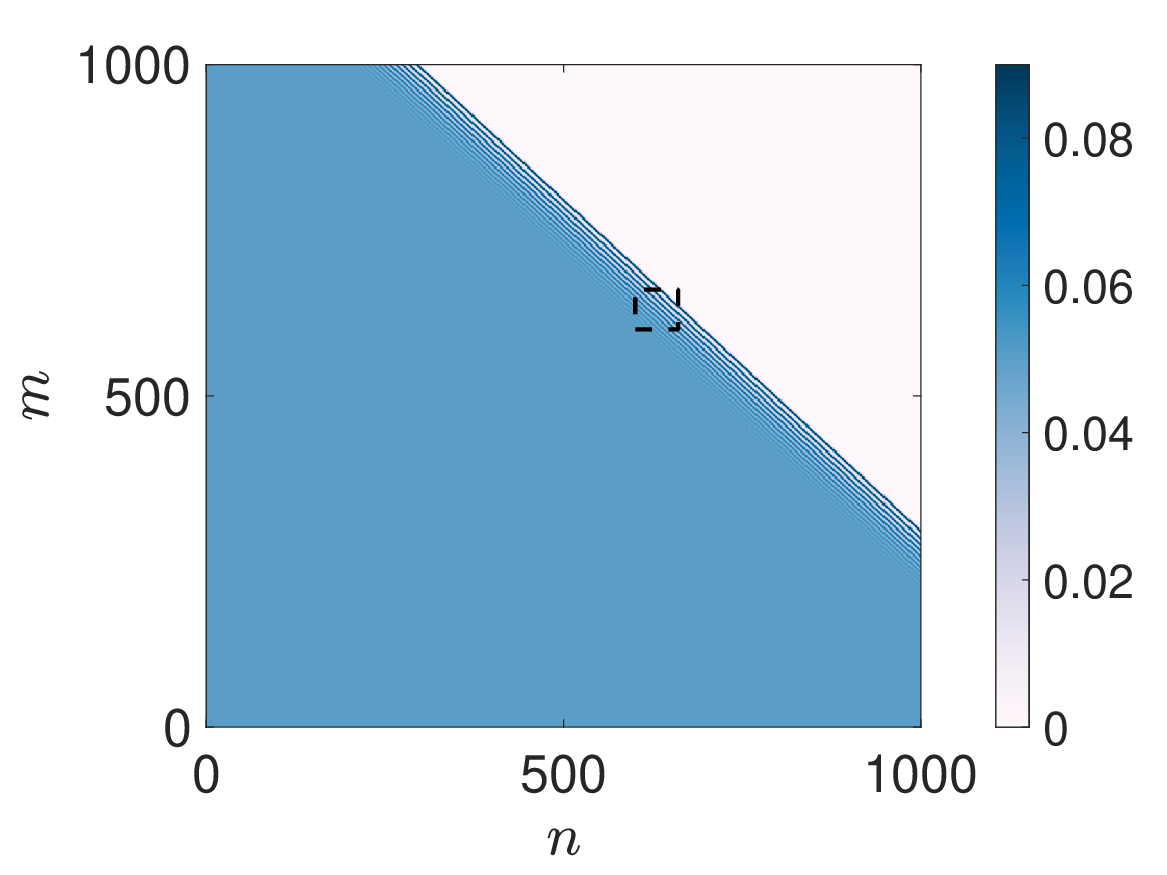}  &
   \rlap{\hspace*{5pt}\raisebox{\dimexpr\ht1-.1\baselineskip}{\bf (b)}}
 \includegraphics[height=4.2cm]{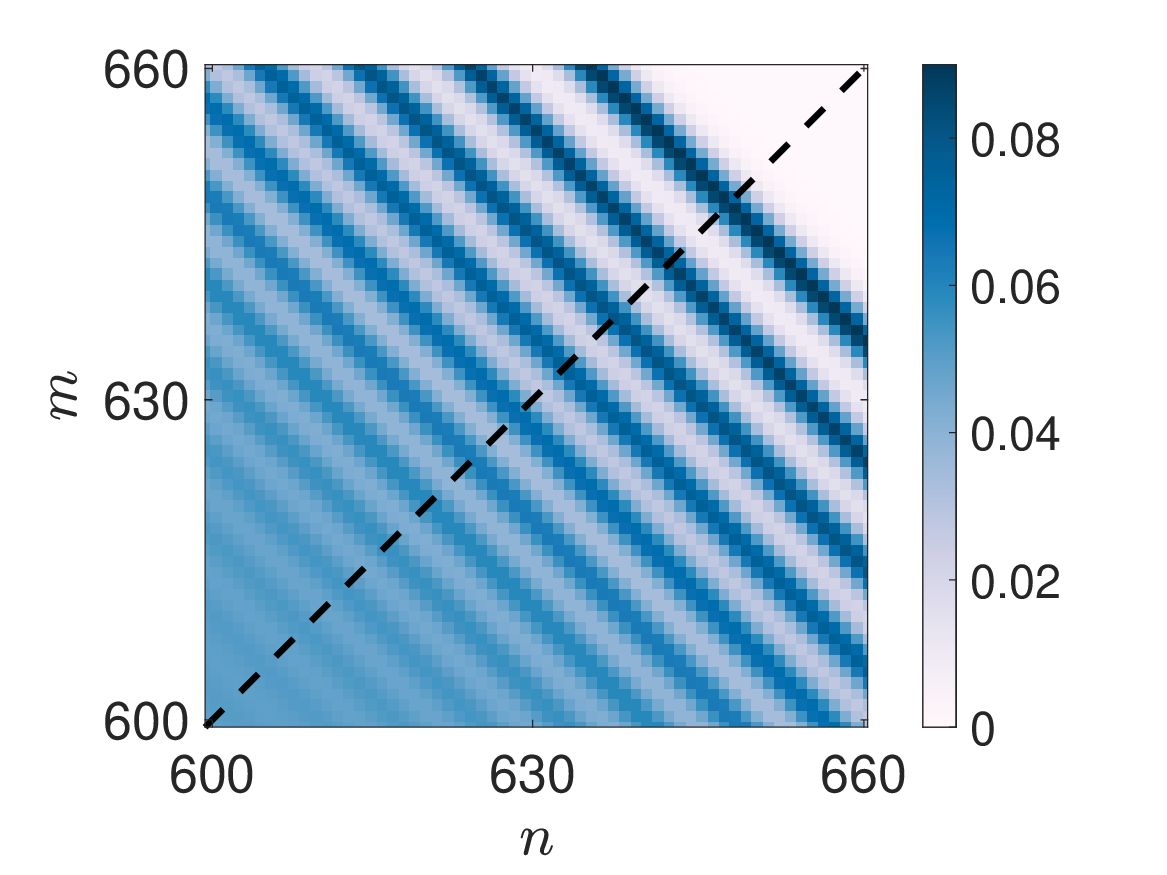} &
    \rlap{\hspace*{5pt}\raisebox{\dimexpr\ht1-.1\baselineskip}{\bf (c)}}
 \includegraphics[height=4.2cm]{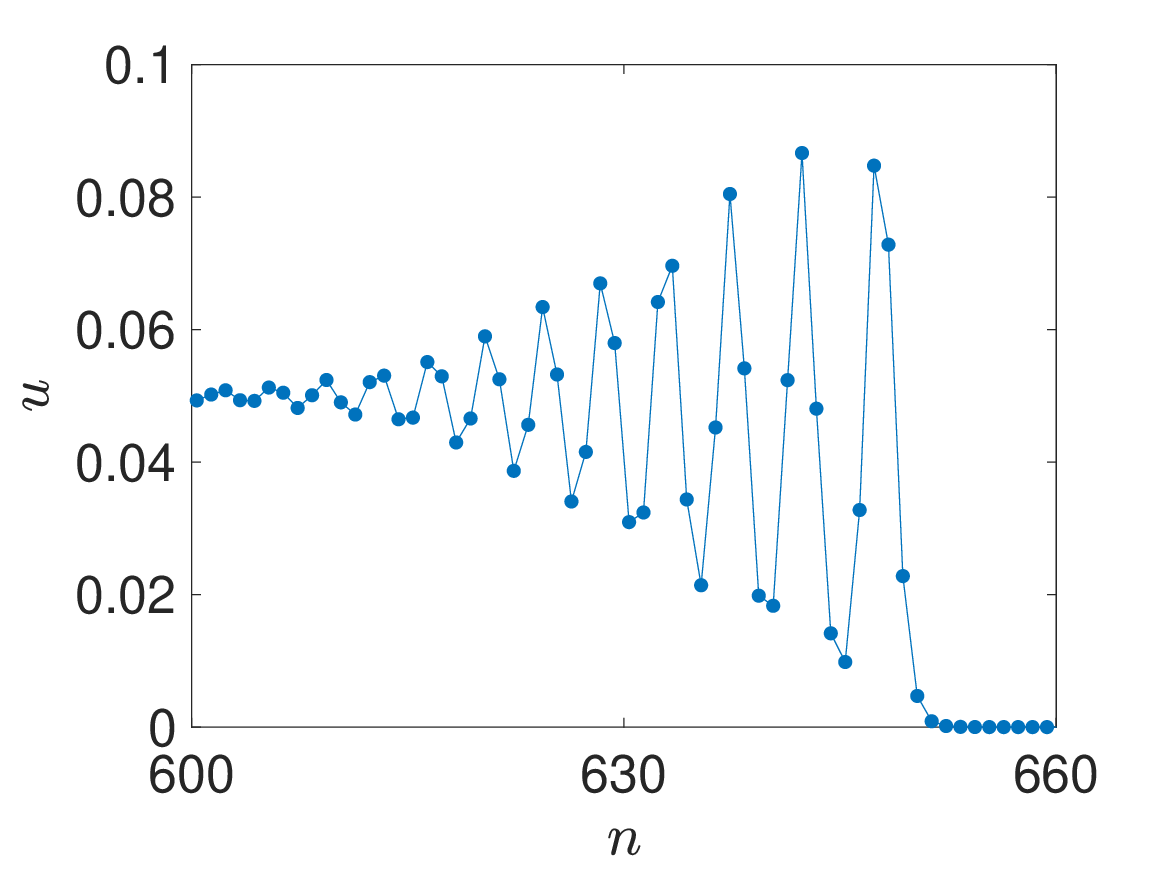} 
  \end{tabular}
  }
      \centerline{
   \begin{tabular}{@{}p{0.34\linewidth}@{}p{0.34\linewidth}@{}p{0.34\linewidth}@{}}
     \rlap{\hspace*{5pt}\raisebox{\dimexpr\ht1-.1\baselineskip}{\bf (d)}}
 \includegraphics[height=4.2cm]{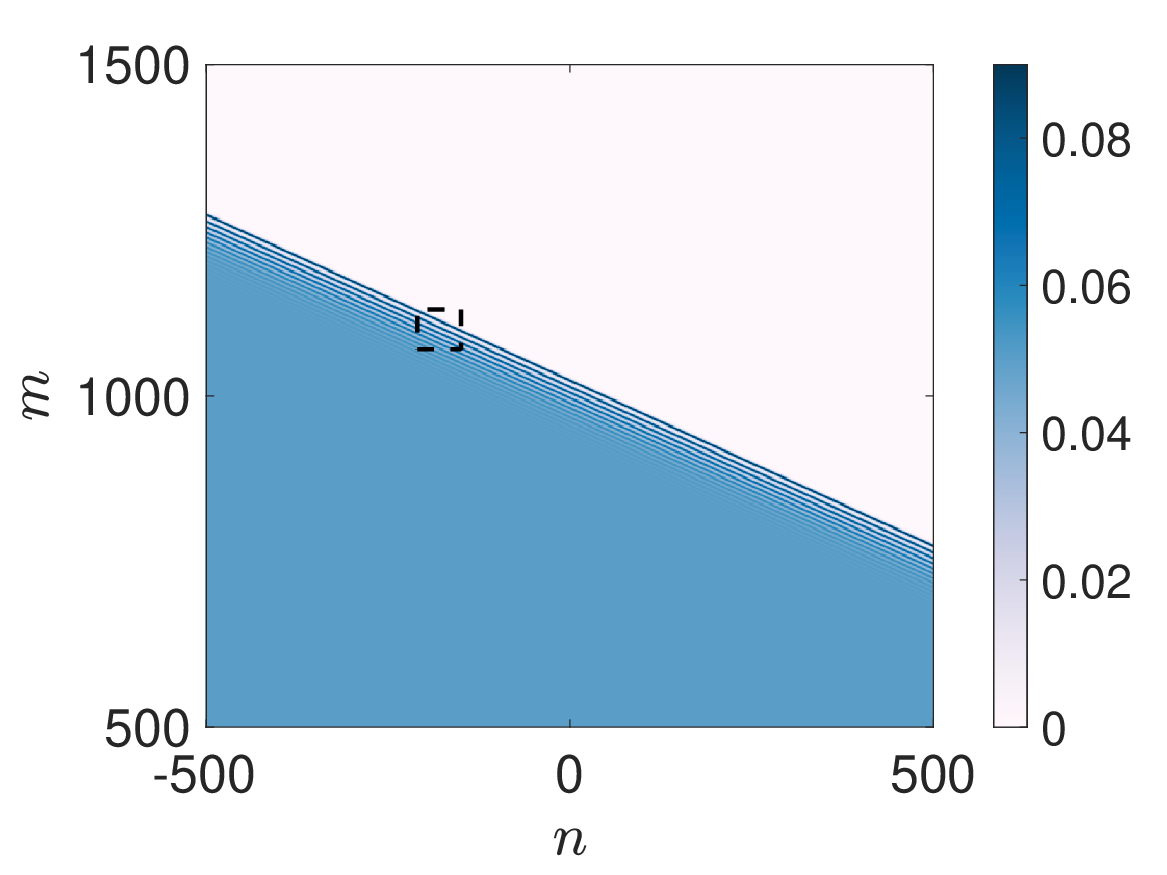}  &
   \rlap{\hspace*{5pt}\raisebox{\dimexpr\ht1-.1\baselineskip}{\bf (e)}}
 \includegraphics[height=4.2cm]{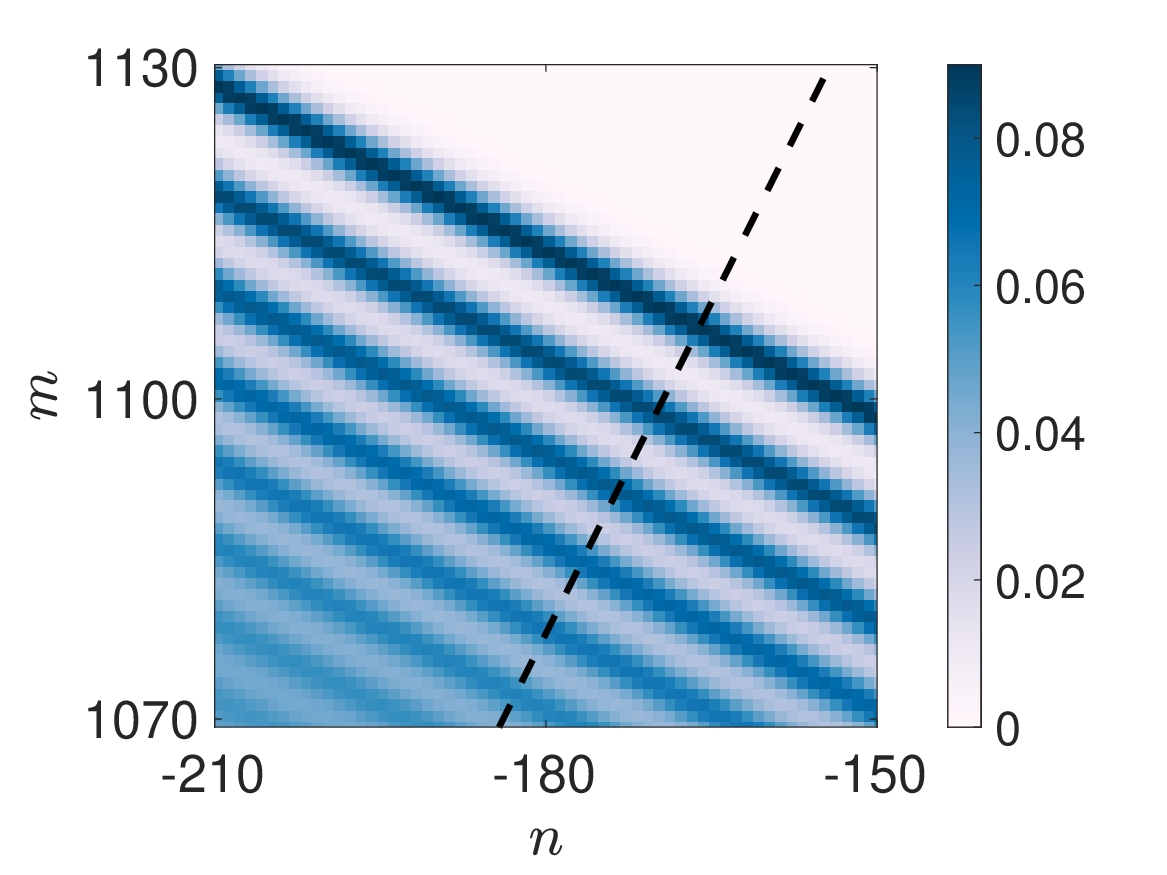} &
    \rlap{\hspace*{5pt}\raisebox{\dimexpr\ht1-.1\baselineskip}{\bf (f)}}
 \includegraphics[height=4.2cm]{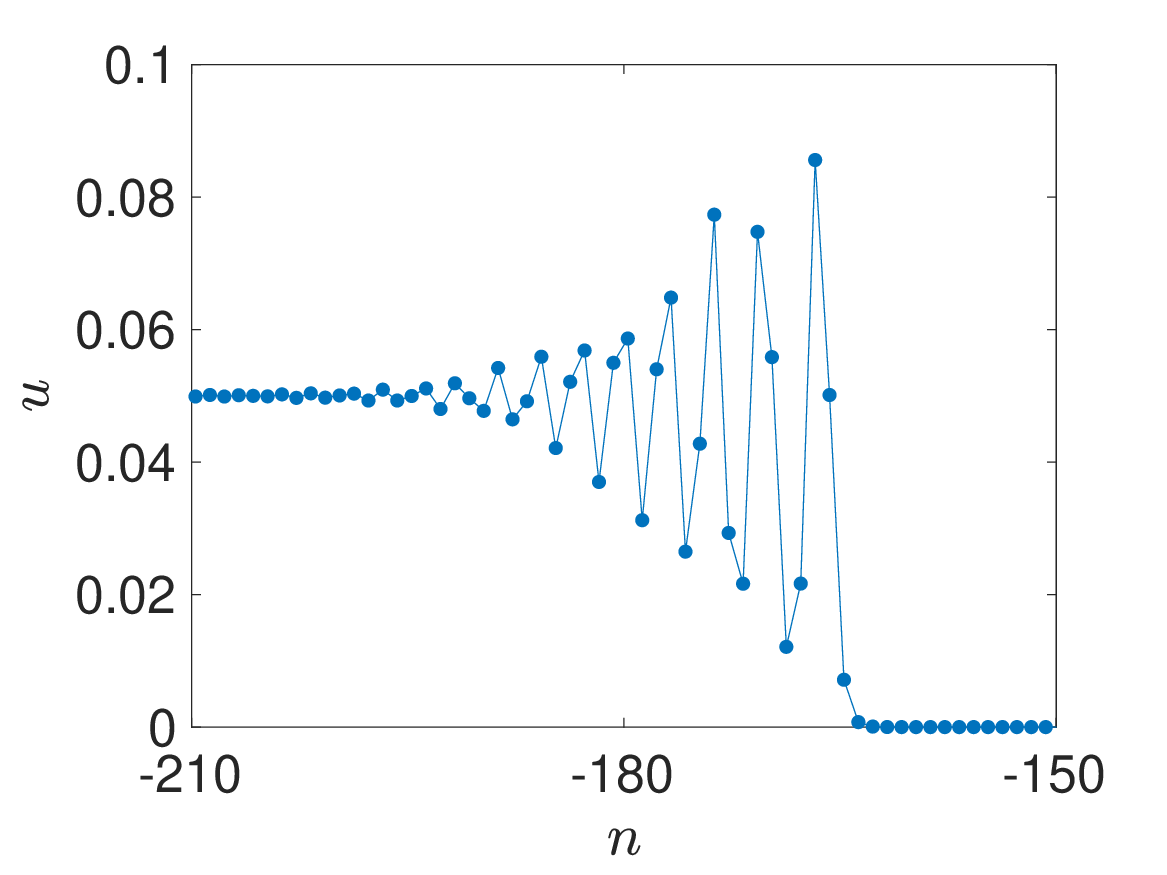} 
  \end{tabular}
  }
   \caption{\textbf{(a)} Intensity plot of solution at $t=894.4$ with
    $k_2/k_1=1$ and $\delta = 0.05$. Color intensity
     corresponds to $u_{n,m}$. 
     \textbf{(b)} Zoom of the boxed area of panel (a). At this scale,
     the 2D DSW structure can be seen. Data is extracted along the dashed line (with slope $k_2/k_1$) 
     which is perpendicular to the initial interface.
     \textbf{(c)} Spatial profile
     along the line of observation (the dashed line in panel (b)).
     \textbf{(d-f)} Same as top panels with $k_2/k_1=2$
   }
    \label{fig:DSWdensity}
\end{figure}

With a firm handle on traveling waves of our model, we are now ready to 
explore DSWs, such as the one shown in Fig.~\ref{fig:DSWexamples}(d).
We will first explain the details of the numerical computations
and explore various simulations. We then return to the KdV description
to obtain an analytical approximation of the DSWs which is then
compared to the simulations. After that,
we implement the DSW fitting method to describe the
trailing and leading edge of the DSWs.

\subsection{Numerical Computation of DSWs}

We consider a square lattice $(n,m)\in[-N,N]\times [-M,M]$, where $N,M\in\N$
are the respective lattice dimension parameters. We employ no flux boundary conditions, namely
$u_{-N-1,m}=u_{-N,m}$, $u_{N+1,m}=u_{N,m}$, $u_{n,-M-1}=u_{n,-M}$ and $u_{n,M+1}=u_{n,M}$.
The initial conditions used are of the form
\begin{equation} \label{eq:smoothIC}
\begin{aligned}
u_{n,m}(0) &= \epsilon^2 f(k_1 n + k_2 m) \\
\dot{u}_{n,m}(0) &= g(k_1 n + k_2 m)
\end{aligned}
\end{equation}
where $f(x)$ is smoothed variant of a step function. 
 Specifically,  we use
\begin{equation} \label{eq:f}
    f(x) =  \frac{\sqrt{a_1}}{ 2 a_2 r} \big( 1-\tanh(w  x) \big)\,,
\end{equation}
where $w$ is the smoothing parameter. For numerical computations we take
$w=0.5$. Figure~\ref{fig:DSWexamples}(c) shows a plot of the IC given in Eq.~\eqref{eq:smoothIC}
with system parameters $a_1=a_2=1$, $a_3=0$ and small parameter $\delta = \epsilon^2 = 0.2$ and wavevector parameters such that $r=1$ and $k_2/k_1=1$. In the figure one can
see the jump is smoothed out. The smoothness of $f$ is required in order to
compute an initial velocity 
that is consistent with the KdV ansatz.
In particular, since
\begin{eqnarray*} \label{eq:smoothvIC}
 \dot{u}_{n,m} &=&  \frac{d}{dt} \epsilon^2 A(X,T) \\
 &=& -c r \epsilon^3 A_X(X,T) + \epsilon^5 A_T \\
 &=& -\sqrt{a_1} r \epsilon^3 A_X(X,T) - \epsilon^5\left( \frac{\sqrt{a_1}}{24r}(k_1^4+k_2^4) A_{XXX} + \frac{a_2 r}{\sqrt{a_1}}AA_X       \right)
\end{eqnarray*}
which assumes that $A(X,T)$ satisfies the KdV equation, we have that
\begin{equation}
    g(x) =  - \epsilon^3 \sqrt{a_1} r f' - \epsilon^5 \left(   \frac{\sqrt{a}_1 }{24r}(k_1^4 + k_2^4) f''' + \frac{a_2 r}{\sqrt{a_1}} f f' \right).     \label{eq:initial_velocity} 
\end{equation}

In all our numerical simulations, we keep the domain in the KdV scaling fixed to
$X\in[-330,330]$ and $T\in[0,10]$. Assuming a square domain for the computation,
we have that $M=N=330\epsilon^{-1}$ and end simulation time $t_f = 10\epsilon^{-3}$. This implies
that as we consider smaller $\epsilon$ the spatial and temporal domain time will grow.
For the simulation, we employ the Verlet scheme \cite{Herrmann2018} and use Matlab's ODE45
function as well for a consistency check.

The DSW in Fig.~\ref{fig:DSWexamples}(d) is the result of simulating Eq.~\eqref{eq:model1D} with
initial condition given by Eq.~\eqref{eq:smoothIC}. Another example
simulation with $a_1=a_2=1$, $a_3=0$ and $k_2/k_1=1$ but now with $\delta=0.05$
is shown in Fig.~\ref{fig:DSWdensity}. In panel (a), an intensity plot is shown
at time $t_f = 10 \epsilon^{-3} \approx 894.4$, where color intensity corresponds
to the value of $u_{n,m}$. The solution seen is a 
line DSW, since it is constant along vectors orthogonal to $(k_1,k_2)$ and propagates 
along $(k_1,k_2)$. Panel (b) is a zoom of the boxed area, where the oscillatory 
structure can be better seen. It is also informative to extract
data along the propagation direction $(k_1,k_2)$. From a practical perspective,
this restricts our choices of $k_2/k_1$ to be
rational. For demonstration purposes, we fix $r=1$ and present results
for $k_2/k_1 =1$ and $k_2/k_1 =2$.  We extract data according to $(n, k_2/k_1 n)$ 
for $n=-N,\ldots,N$ resulting in a matrix of data with a single space
dimension and the time dimension. 
In panel (b) the dashed line is where the data is extracted,
which is then plotted in panel (c). In this view 
the DSW has the same character as the 1D DSW (compare to Fig.~\ref{fig:DSWexamples}(b)),
where the trailing and leading edge can be identified. The bottom
panels of Fig.~\ref{fig:DSWdensity} are the same as the top ones
but with $k_2/k_1 = 2$. The line DSWs in the top and bottom panels
are qualitatively similar. To uncover quantitative differences,
we will employ an analytical approximation using the KdV equation,
which we consider next.

\subsection{KdV Description of DSWs}
We now return to the KdV equation given in Eq.~\eqref{eq:KdV}, turning our attention to its DSW solutions.
For the readers' convenience, we
summarize the relevant result here (see \cite{Mark2016,Kamchatnov} for more thorough derivations). DSWs in the KdV equation arise from Riemann initial data, such as
\begin{equation}\label{eq:Astep}
A(X,0) = \left\{
\begin{split}
\frac{\sqrt{a_1}}{a_2 r}, \quad & X < 0 \\
0, \quad & X > 0.
\end{split}
\right.
\end{equation}
The scaling factor $\frac{\sqrt{a_1}}{a_2 r}$ is arbitrary and chosen for notational
convenience here. To obtain an analytical approximation of the DSW
one assumes the periodic wave (e.g., the $r_j$ of Eq.~\eqref{eq:kdvperiodic}) as varying slowly with respect to $X,T$ and averages three of the conserved quantities of the KdV equation over a period to obtain the Whitham modulation equations \cite{Whitham74}. These equations
are then solved subject to relevant boundary conditions for a DSW. In particular,
one seeks self-similar solutions, $r_j=r_j(X/T)$,
resulting in modulation equations of the form $(S_j - X/T)r'(X/T)=0$, 
where the characteristic speeds $S_j$ are nonlinear functions of $r_1,r_2$ and $r_3$. In this self-similar frame, $r_j$ is either constant, or $X/T$ is equivalent to the characteristic speed $S_j$.
Assuming step initial data of the form of Eq.~\eqref{eq:Astep}
one finds  $r_1 = 0, r_2 = \mu, r_3 = 1$, see the classic work of~\cite{GP73} for details,
where this analytical solution of the Whitham modulation equations was
first derived. Returning to the formula for the periodic wave, Eq.~\eqref{eq:kdvperiodic},
we have the following asymptotic DSW solution,
\begin{equation} \label{eq:kdv_dsw}
\begin{aligned}
A(X,T) &= \frac{\sqrt{a_1}}{a_2 r}\left(\mu-1 + 2 \dn^2 \left(2\sqrt{\frac{r }{    \sqrt{a_1}(k_1^4 + k_2^4)  } }(X - \frac{\mu+1}{3} T) ; \mu \right) \right), 
\end{aligned}
\end{equation}
where the parameter $\mu$ is related to the self-similar variable
in the following way
\begin{equation} \label{eq:v2}
\frac{X}{T} = S(\mu) := \frac{1+\mu}{3} - \frac{2}{3} \frac{\mu(1-\mu) K(\mu)}{E(\mu) - (1-\mu)K(\mu)},
\end{equation}
with $S(\mu)$ the second characteristic velocity of the KdV Whitham equations.
In Eq.~\eqref{eq:kdv_dsw}, the limit $\mu\rightarrow 0$ corresponds to the trailing, harmonic wave edge, 
while the limit $\mu\rightarrow 1$ corresponds to the leading, solitary wave edge.
Note that $S$ has the following limiting values,
\begin{equation}
 \lim_{\mu\rightarrow 0} S(\mu) =s^- = -1, \qquad \lim_{\mu\rightarrow 1} S(\mu) = s^+ = 2/3
\end{equation}

\begin{figure}[t!] %
    \centerline{
   \begin{tabular}{@{}p{0.5\linewidth}@{}p{0.5\linewidth}@{}}
     \rlap{\hspace*{5pt}\raisebox{\dimexpr\ht1-.1\baselineskip}{\bf (a)}}
 \includegraphics[height=6cm]{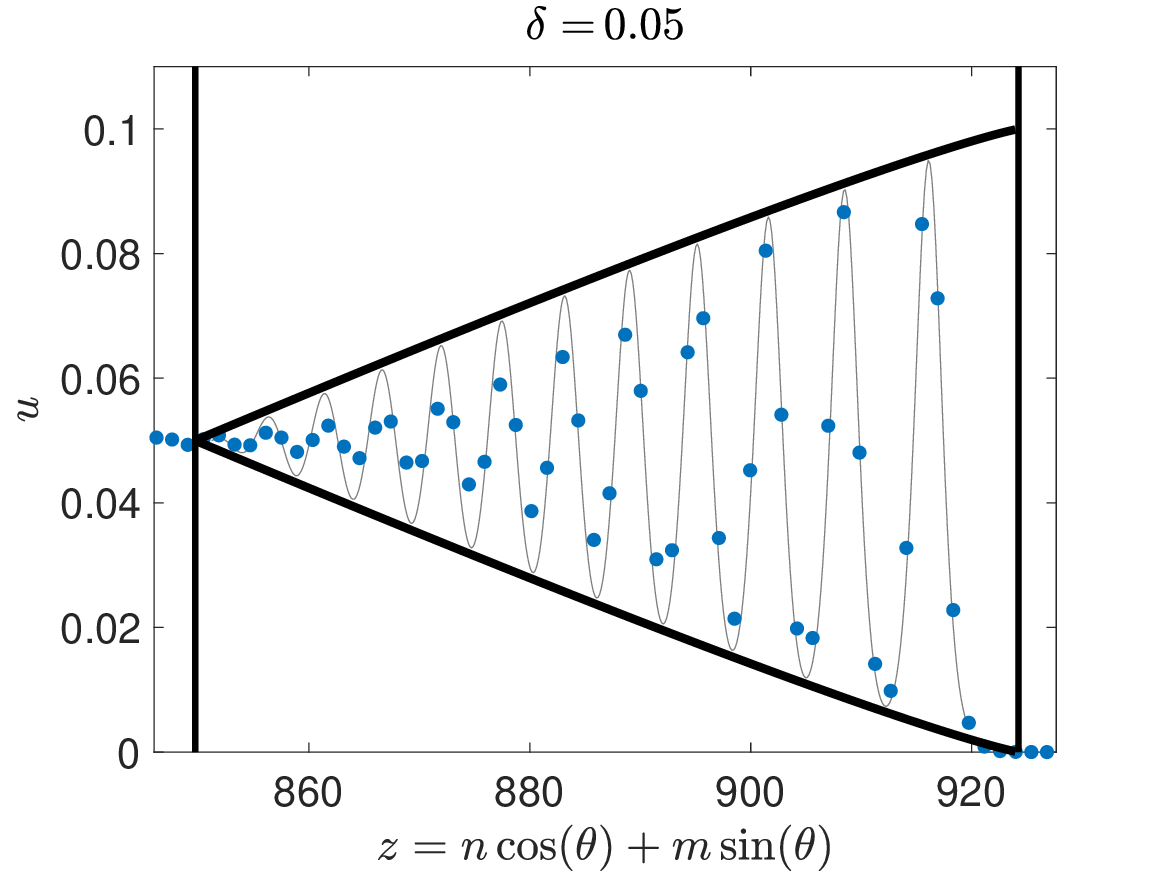}  &
   \rlap{\hspace*{5pt}\raisebox{\dimexpr\ht1-.1\baselineskip}{\bf (b)}}
 \includegraphics[height=6cm]{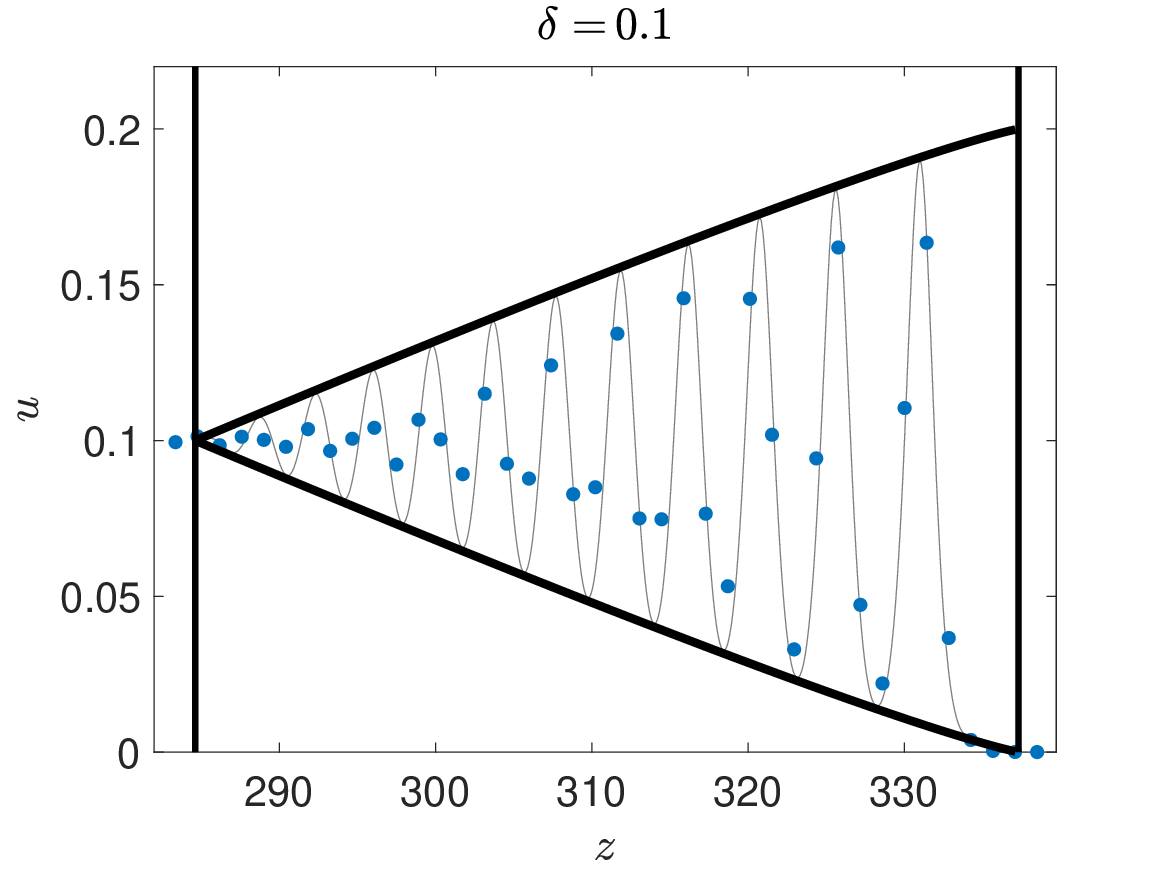} 
  \end{tabular}
  }
      \centerline{
   \begin{tabular}{@{}p{0.5\linewidth}@{}p{0.5\linewidth}@{}}
     \rlap{\hspace*{5pt}\raisebox{\dimexpr\ht1-.1\baselineskip}{\bf (c)}}
 \includegraphics[height=6cm]{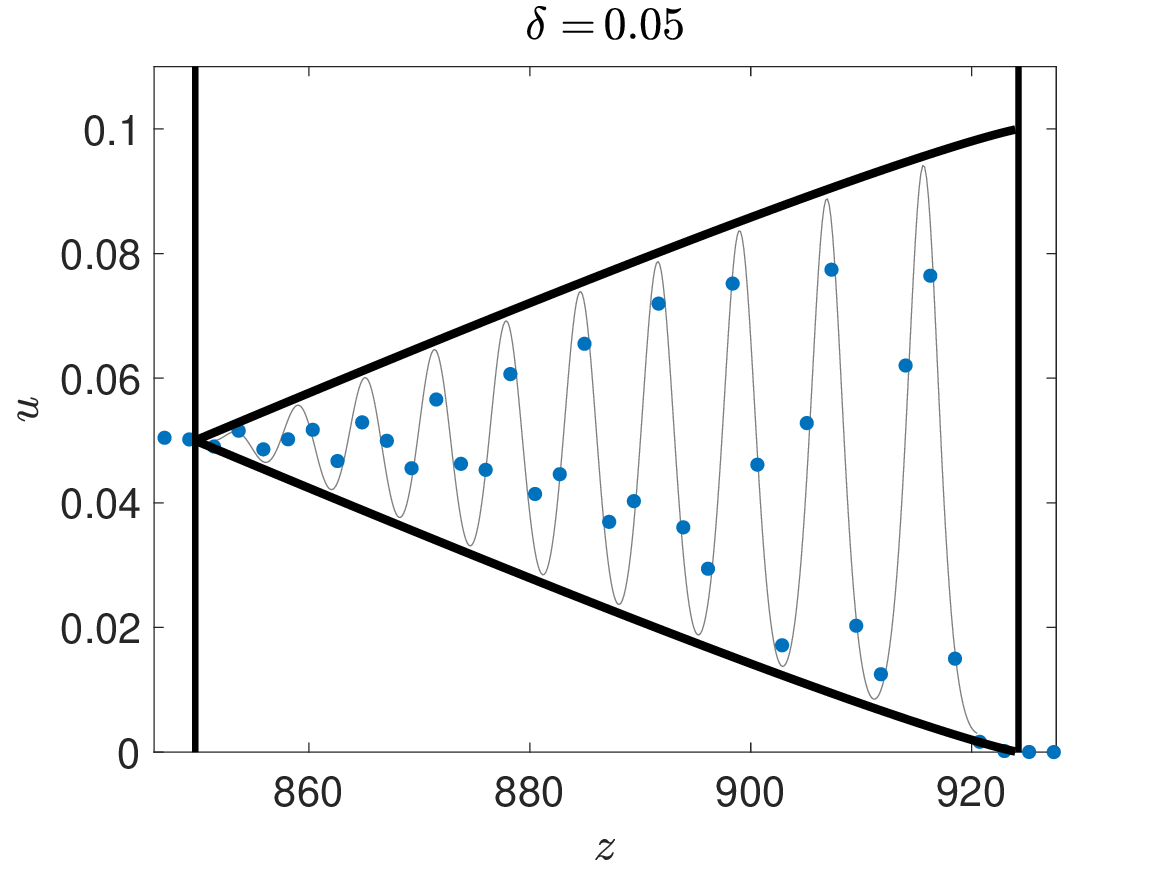}  &
   \rlap{\hspace*{5pt}\raisebox{\dimexpr\ht1-.1\baselineskip}{\bf (d)}}
 \includegraphics[height=6cm]{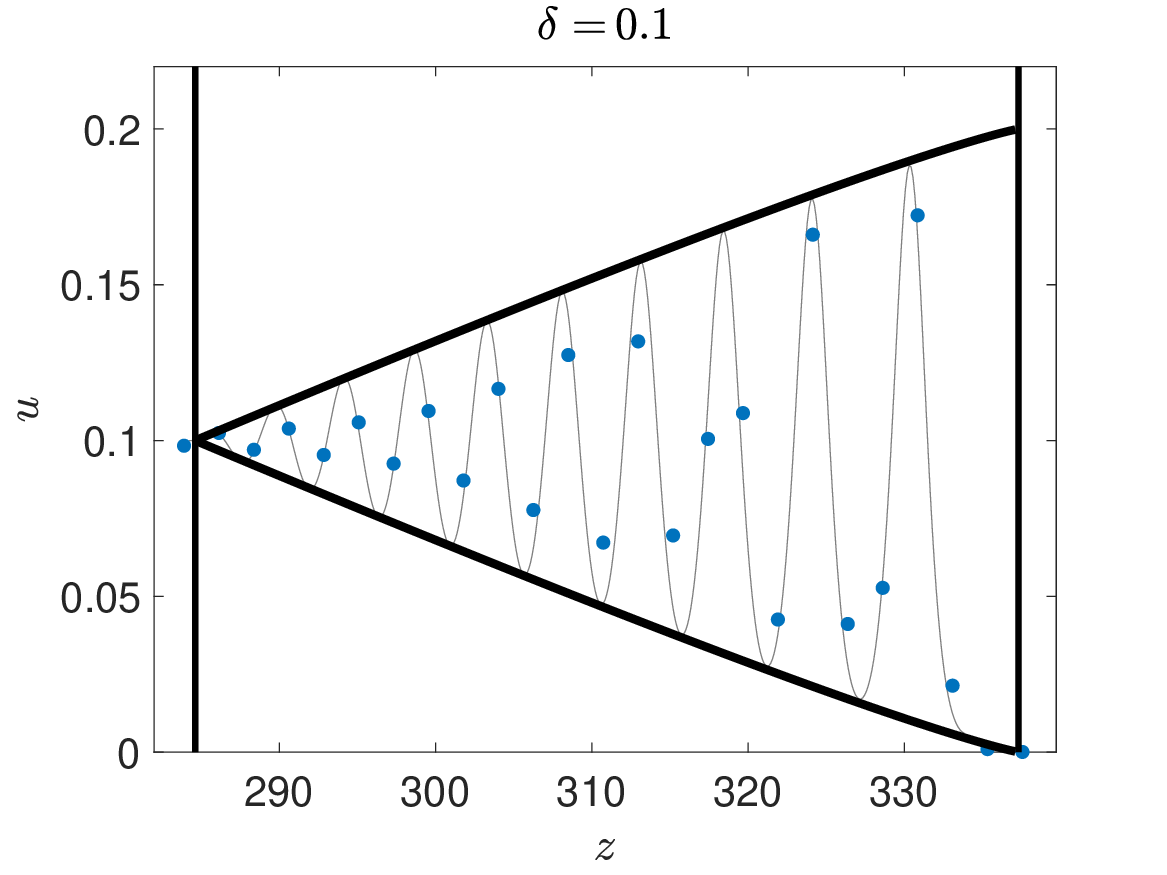} 
  \end{tabular}
  }
  \caption{Comparison of lattice DSW solution (markers) and the KdV approximation (lines).
  The data for the lattice solution is extracted along the line of observation
  (namely along the vector $(k_1,k_2)$) and is plotted against the coordinate
  $z = k_1 n + k_2 m$. 
In all panels, the variables in the KdV scaling
    are fixed to $T=10$ and $X\in[-350,350]$. The vertical lines
    are the predictions of the trailing edge ($\rho = t \,s^- $) and leading edge ($\rho= t\,s^+ $).
    The sloped lines are the prediction of the envelopes.
    \textbf{(a)} The small parameter is $\delta = 0.05$, $k_2/k_1=1$ and the time in the original lattice scaling is $t \approx 900$. The best-fit phase shift is the linear interpolation
    between $\theta_\ell = 1.5$ and $\theta_r = 1.5$.
    \textbf{(b)} The small parameter is $\delta = 0.1$, $k_2/k_1=1$ and the time in the original lattice scaling is $t \approx 315$. The best-fit phase shift is the linear interpolation
    between $\theta_\ell = 0.5$ and $\theta_r = 0.5$.
\textbf{(c)} Same as (a) with  $k_2/k_1=2$, $\theta_\ell = 1.5$ and $\theta_r = 2.5$.
\textbf{(d)} Same as (b) with  $k_2/k_1=2$, $\theta_\ell = -0.4$ and $\theta_r = 1$.
In all panels $r=1$. 
}
  \label{fig:compareDSW}
\end{figure}

Returning to the original
lattice variables, namely by using Eq.~\eqref{eq:ansatz1D}, we obtain the following approximate 2D lattice DSW,
\begin{equation}
\begin{aligned}
u_{n,m}(t) = \epsilon^2 \frac{\sqrt{a_1}}{a_2 r}\left(\mu-1 + 2 \dn^2 \left(2\sqrt{\frac{r }{    \sqrt{a_1}(k_1^4 + k_2^4)  } }(\epsilon( k_1 n + k_2m - crt) - \frac{\mu+1}{3} \epsilon^3 t) ; \mu \right) \right).
\end{aligned}
\end{equation}
Introducing the variable $\delta =  \epsilon^2/r$
the previous expression becomes
\begin{equation} \label{eq:kdvapprox}
\begin{aligned}
u_{n,m}(t) &= \delta \frac{\sqrt{a_1}}{a_2}
\left(  \mu - 1 + 2\dn^2 \left(2 \sqrt{ \frac{\delta}{\sqrt{a_1}(\cos^4(\theta)+\sin^4(\theta))}     } ( \cos(\theta) n + \sin(\theta) m - (\sqrt{a_1} + \frac{\mu+1}{3} \delta          )t) ; \mu \right) \right), 
\end{aligned}
\end{equation}
where the parameter $\mu$ is related to the variables $n,m,t$ in the following way
\begin{equation} \label{eq:v2again}
\frac{\cos(\theta) n + \sin(\theta) m }{t} = \sqrt{a_1} + S(\mu)\,\delta 
\end{equation}
Since the expression on the left-hand side represents the speed of
propagation in the observation direction (namely along the vector $(k_1,k_2)$) we can determine
the trailing and leading edge speeds by simply computing
the limit as $\mu\rightarrow 0$ and $\mu\rightarrow 1$
respectively, leading to
\vspace*{-0.4ex}
\begin{equation} \label{eq:speeds}
s^- = \sqrt{a_1} -\,\delta , \qquad s^+ = \sqrt{a_1} + \frac{2}{3}\,\delta 
\end{equation}
such that the leading edge speed $s^+$ is supersonic and the trailing edge speed $s^-$ is subsonic.  On the other hand 
we have the trailing edge
wavenumber $k_-$ and
leading edge amplitude $a_+$ given by
\begin{equation} \label{eq:wavenumber}
k^- = 4\sqrt{\frac{\delta}{\sqrt{a_1}(\cos^4(\theta)+\sin^4(\theta))}}  , \qquad a^+ = \frac{\max(u)-\min(u)}{2} =\delta\frac{\sqrt{a_1}}{a_2} 
\end{equation}
where we see that the only characteristic depending
on $\theta$ is the wavenumber. Note, the DSW
in Eq.~\eqref{eq:kdvapprox} corresponds
to the Riemann data in Eq.~\eqref{eq:step2}.

\begin{figure}[t!] %
\kern-\medskipamount
\centerline{
   \begin{tabular}{@{}p{0.5\linewidth}@{}p{0.5\linewidth}@{}}
     \rlap{\hspace*{5pt}\raisebox{\dimexpr\ht1-.1\baselineskip}{\bf (a)}}
 \includegraphics[height=6cm]{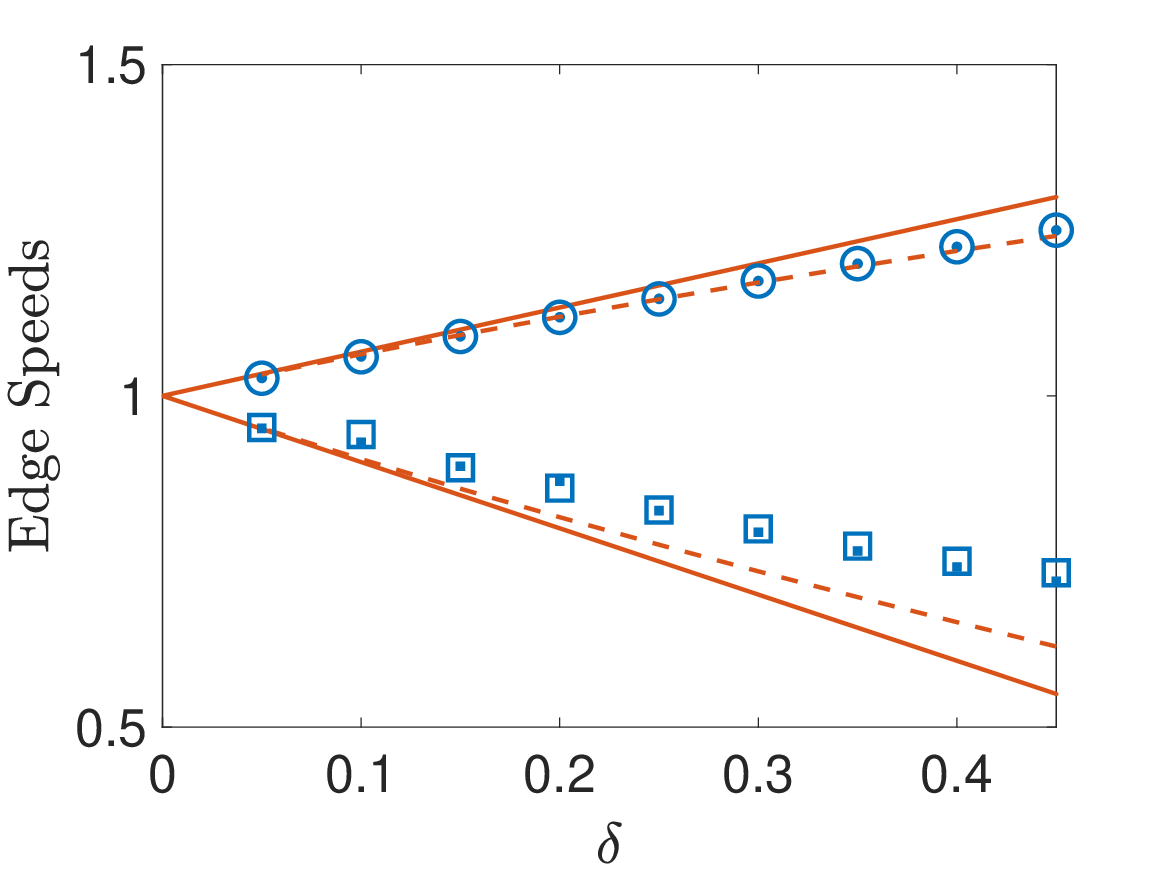}  &
   \rlap{\hspace*{5pt}\raisebox{\dimexpr\ht1-.1\baselineskip}{\bf (b)}}
 \includegraphics[height=6cm]{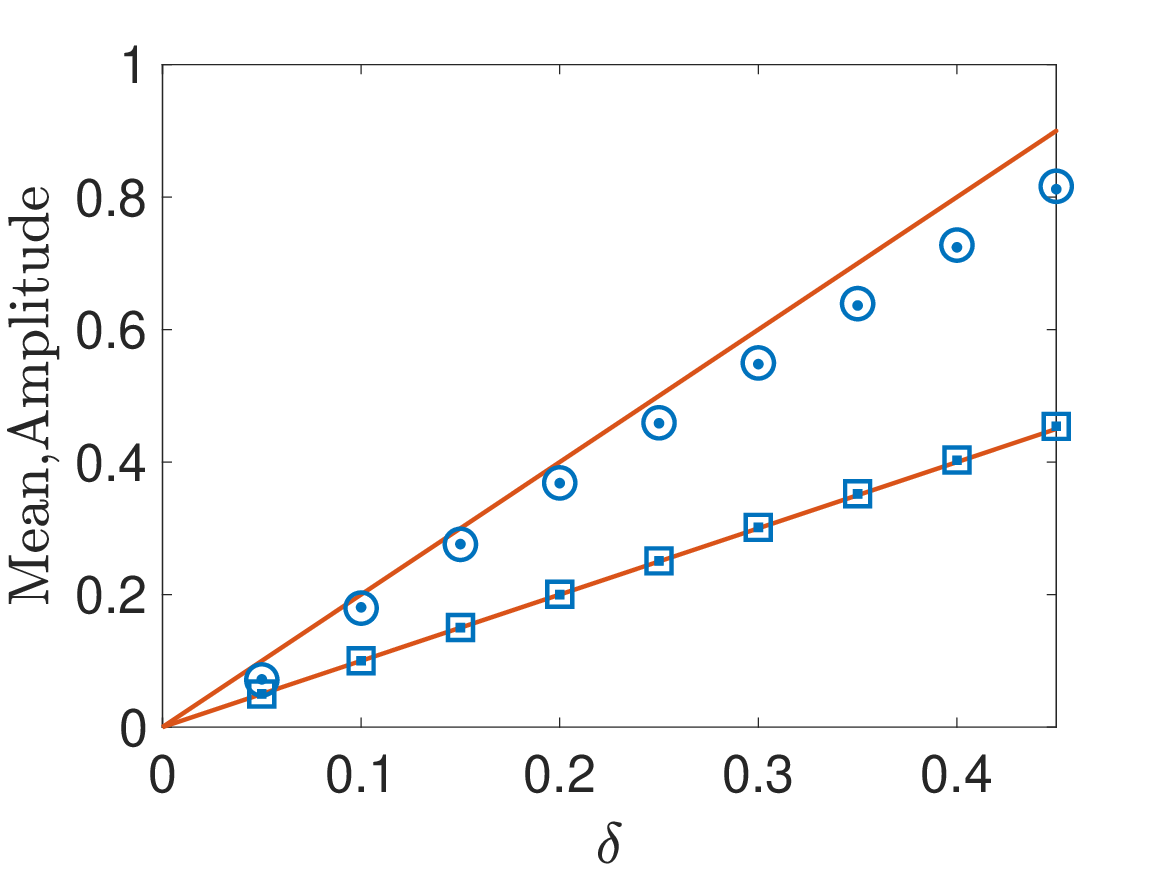} 
  \end{tabular}}
     \caption{\textbf{(a)} Trailing and leading edge speeds  of the DSW.  Blue markers indicate simulation results: circles for the leading edge and squares for the trailing edge. Empty and filled markers correspond to $k_2/k_1 = 2$ and $k_2/k_1 = 1$, respectively. The solid red lines represent the KdV predictions for the leading (line with positive slope) and trailing edge (line with negative slope) speeds. Recall that they are independent of $\theta$. The near overlap of filled and empty markers suggests that the speeds in the 2D lattice are also (nearly) independent of $\theta$. The dashed lines are the predictions of
     the edge speeds from the DSW fitting method, described in Sec.~\ref{sec:DSWfit}.
    \textbf{(b)} Trailing edge mean and leading edge amplitude. Blue markers indicate simulation results: circles for the leading edge amplitude and squares for the trailing mean. Empty and filled markers once again correspond to $k_2/k_1 = 2$ and $k_2/k_1 = 1$, respectively. The solid and dashed red lines represent the KdV predictions for the leading edge amplitude and trailing edge mean, respectively, which are independent of $\theta$. The near overlap of filled and empty markers suggests that the speeds in the 2D lattice are also practically independent of $\theta$. 
    }
   \label{fig:speeds}
\bigskip
    \centerline{
   \begin{tabular}{@{}p{0.5\linewidth}@{}p{0.5\linewidth}@{}}
     \rlap{\hspace*{5pt}\raisebox{\dimexpr\ht1-.1\baselineskip}{\bf (a)}}
 \includegraphics[height=6cm]{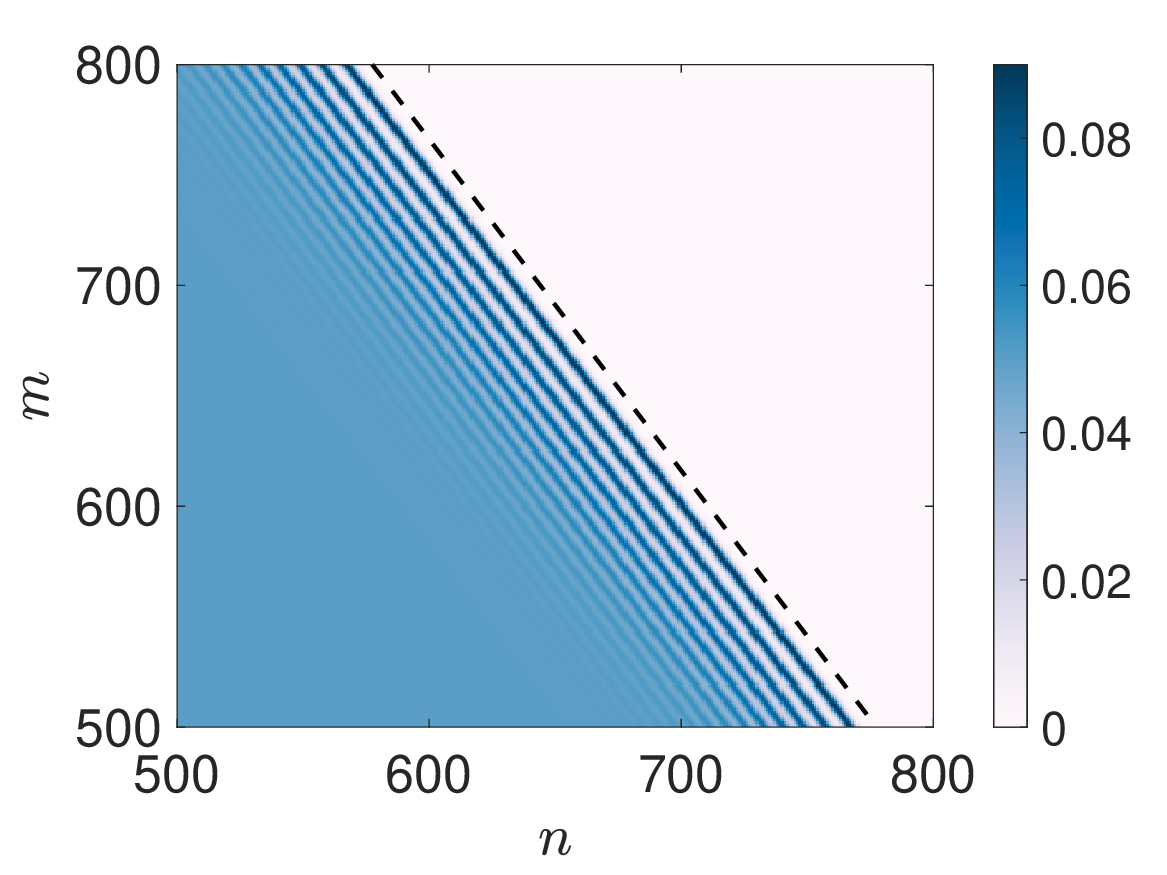}  &
   \rlap{\hspace*{5pt}\raisebox{\dimexpr\ht1-.1\baselineskip}{\bf (b)}}
 \includegraphics[height=6cm]{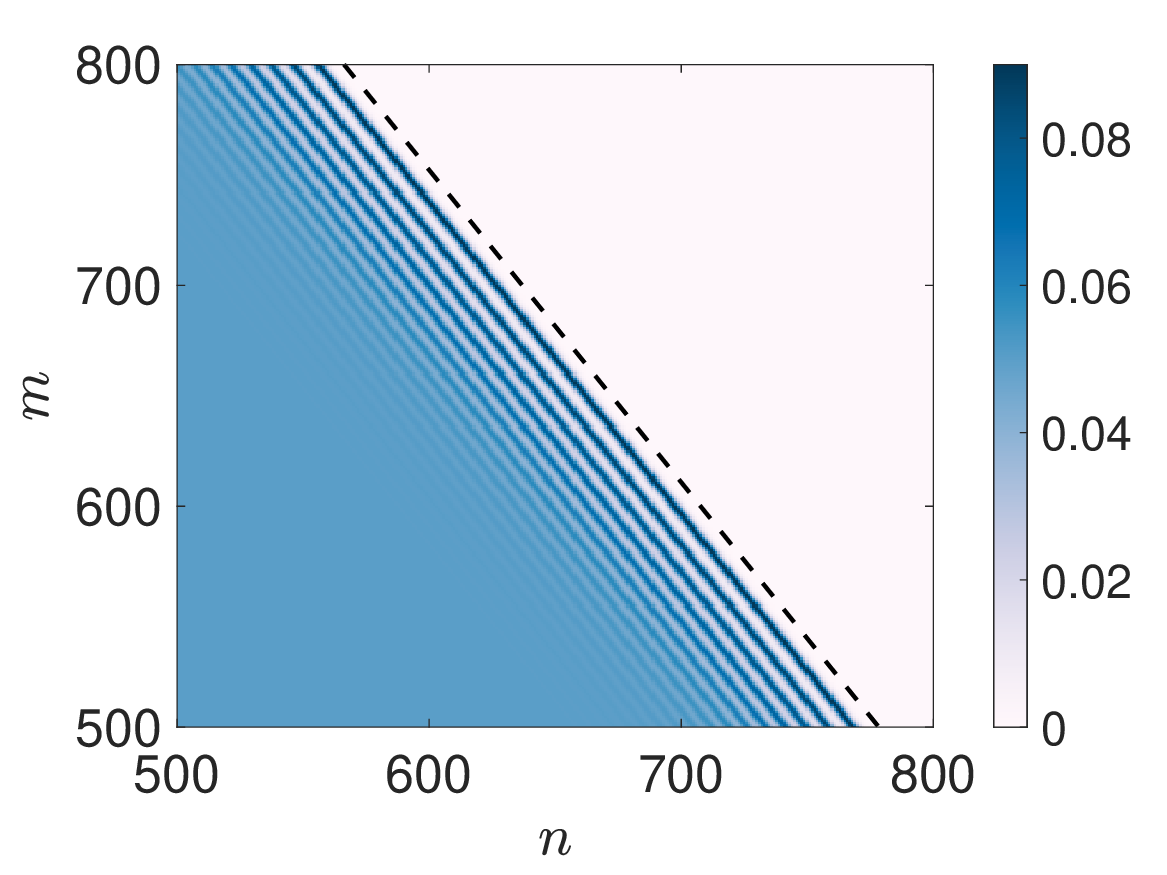} 
  \end{tabular}}
  \kern-\medskipamount
     \caption{Simulations at irrational angles.\textbf{(a)} Intensity plot of solution at $t=894.4$ with
    $k_2/k_1=\sqrt{2}$ and $\delta = 0.05$. Color intensity
     corresponds to $u_{n,m}$.  \textbf{(b)} Same as (a) but with  
     $k_2/k_1=(1+\sqrt{5})/2$. The dashed sloped line is the prediction
     of the leading edge location based on the KdV equation. 
    }
   \label{fig:irat}
\end{figure}

Notice that $f(x)$, defined in Eq.~\eqref{eq:f}, corresponds precisely
to a smoothed variant of the step function used to initialize
the KdV DSW, see Eq.~\eqref{eq:Astep}. In particular $f(x)$ tends
to $A(X,0)$ as the smoothing parameter $w\rightarrow \infty$.
In simulations we have chosen $a_1=a_2=r=1$,  and thus we have that 
$\delta = \epsilon^2$. Upon choosing
a wave vector $(k_1,k_2)$ we can easily compare
the KdV prediction in Eq.~\eqref{eq:kdvapprox}
to simulations of Eq.~\eqref{eq:model} with initial
values given in Eq.~\eqref{eq:smoothIC}.

Figure~\ref{fig:compareDSW} shows a comparison of the KdV prediction (smooth curves)
and numerical simulations (markers) for two jump heights ($\delta=0.05$ and $\delta=0.1$)
and two wavevectors ($k_2/k_1=1$ and $k_2/k_1=2$). Note that
the markers of panel Fig.~\ref{fig:compareDSW}(a) correspond to the solution shown
in Fig.~\ref{fig:DSWexamples}(c) and Fig.~\ref{fig:DSWdensity}(c).
The markers of Fig.~\ref{fig:compareDSW}(c) correspond to 
Fig.~\ref{fig:DSWdensity}(f). While Fig.~\ref{fig:DSWdensity}
shows the DSW profile versus the $n$ index, Fig.~\ref{fig:compareDSW}
shows the DSW versus the coordinate $z = \cos(\theta)n + \sin(\theta)m $,
which is most natural for comparisons with the KdV approximation, see
Eq.~\eqref{eq:kdvapprox}.
The gray line is
the KdV prediction given by Eq.~\eqref{eq:kdvapprox},
and the black vertical lines are the predicted
locations of the trailing and leading edge $s^-t_f$ and 
$s^+t_f$, respectively. The sloped lines are simply
the lines connecting the predicted amplitude to the
location of the trailing edge, which act as an
envelope prediction.

In general, there will be a phase mismatch between the theoretical prediction
and the lattice DSW, the source of which can be explained in several ways. First, there is a small delay for the DSW to form, since the initial condition is smooth. 
Second, the KdV DSW formula is only an asymptotic 
approximation and is missing a phase term (which cannot be captured by 1st order Whitham theory), and finally the KdV equation itself was derived as an approximate model, so even exact solutions will lead to deviation to the lattice solution. 
{We account for phase mismatch, stemming from any of the three possibilities
just described}, in a way similar to that detailed in \cite{Ari2024}, which we summarize here for the reader's convenience. A phase shift $\theta_0(n)$ is applied to the theoretical prediction by determining the best-fit phase at both the trailing and leading edges of the DSW, yielding phase shifts $\theta_\ell$ and $\theta_r$, respectively. The overall phase shift $\theta_0(n)$ is then defined via linear interpolation between these two values. In practice, we used $n_\ell = s^- t_f + 5$ and $n_r = s^+ t_f - 5$. Incorporating the phase in this way leads to good agreement between the KdV prediction and the full numerical DSW profile, as 
seen in Fig.~\ref{fig:compareDSW}. Note that we do not adjust the trailing or leading edges, nor do we modify the envelope predictions—so the solid black lines represent the original theoretical prediction without any empirical phase correction.

We next investigate the edge characteristics of the DSWs as a function
of the jump height $\delta$.
Numerically, to identify the location of the leading edge, we simply find 
the maximum of the profile, which also yields the amplitude of the leading
edge. To identify the location of the trailing edge,
we find the best fit line to the first three local maxima (that is,
near the leading edge) and find when this intersects the
best fit line of the first three local minima. Once that is located
we take a small temporal window around the trailing edge location
and compute the mean at that location.
The speed of the edges (trailing or leading) are estimated by
simply computing the spatial locations as a function 
of time and finding the slope of the best fit line of those points.
The procedure is similar to that detailed in \cite{Su25}. The speeds reported
here will always be assumed to be speeds along the $z$ coordinate.

Figure~\ref{fig:speeds} shows a comparison of the trailing edge speed and mean, and
the leading edge speed and amplitude. This is shown for both simulation (markers)
and KdV theory (lines) for the two example cases: $k_2/k_1=2$ (empty in markers) and $k_2/k_1=2$ (filled in markers).
The KdV prediction has the correct asymptotic behavior, but does worse
in general as the jump height is increased, as expected. It is clear
from the formula for the speeds, amplitude and mean, that they
are (practically) independent of $\theta$. We observe this too for the full 2D simulation,
which is evident by the fact the filled in markers fall into the empty ones.

Thus far, we have considered rational slopes. This allowed us to extract
data from the simulation for better comparison to the theory (i.e. to
generate Fig.~\ref{fig:compareDSW}). The overall dynamics for
irrational slopes, however, is qualitatively similar to the
cases we presented; see Fig.~\ref{fig:irat} for example.
In other words, we observed nothing special about 
rational slopes.

\subsection{ Application of DSW fitting to the 2D FPUT lattice}  \label{sec:DSWfit}

The so-called DSW fitting method is an analytical framework for describing the leading
and trailing edge of dispersive shock waves, proposed by El in \cite{El2005}. The power of the method is that it does not rely {  on}
integrability, and thus {  is} a perfect candidate
for our study. The DSW fitting method is rooted in Whitham modulation theory, which treats the rapidly oscillating DSW region as a slowly varying, modulated wavetrain. Rather than solving the full modulation equations directly, the method involves solving an ODE model
relating the wavenumber $k$ to the mean $\bar{u}$. This ODE has the form
\begin{equation}\label{e:DSWfittingODE}
\frac{dk}{d\bar u} =
    \frac{\partial \omega/\partial \bar u}
{V(\bar{u})-\partial \omega/\partial k}, \qquad k(u^+) = 0,
\end{equation}
where, for example, in the case of the 1D FPUT equation $V(\bar{u}) =\sqrt{\Phi''(\bar u)} $. 
Equation~\eqref{e:DSWfittingODE} is used to describe the trailing edge of the DSW.
 To describe the leading edge of the DSW, a conjugate ODE is used, obtained by replacing $k$ and $\omega$ in Eq.~\eqref{e:DSWfittingODE} by conjugate variables 
$\tilde k$ and $\tilde\omega$ with $k = i\tilde k$ and $\tilde\omega$ 
defined via 
${  \tilde{\omega}(\tilde k,\bar u)} = - i \omega(i\tilde{k},\bar{u})$.
 We refer the reader to~\cite{El2005}  for details.
The DSW fitting method has been used successfully in a large
variety of 1D \cite{Mark2016} and 2D  \cite{Hoefer2017} continuum systems.
More recently the method has been successfully applied in discrete
settings \cite{Sprenger2024,yang2024regularizedcontinuummodeltraveling}.  
Here, we wish to extend the method  and apply it to  our 2D discrete lattice
along the propagation direction of the quasi-1D DSW.

\subsubsection{ General setting}

We will consider more general step initial data  of the form  
\begin{equation}  \label{eq:step4}
 u_{n,m}(0) = \left\{
\begin{split}
u^-, \quad & k_1 n + k_2 m < 0 \\
u^+, \quad & k_1 n + k_2 m > 0,
\end{split}
\right.
\end{equation}
%

It will be convenient to write $ k_1 = r\cos\theta, k_2 = r\sin\theta$ as before.
Note that the above initial condition is only a function of the similarity variable $\xi = \@k\cdot\@n$, where
$\@k = (k_1,k_2)$ and $\@n = (n,m)$, and is independent of the transverse variable $\eta = \@k^\perp\cdot\@n$,
where $\@k^\perp = (k_2,-k_1)^T$.
In what follows, we make the assumption that the same property remains true at all times $t>0$.
As Figs.~\ref{fig:DSWdensity} and~\ref{fig:irat} show,
this assumption is very well supported by the results of the numerical simulations.
(Note $\xi$ varies along the direction $\@k$, perpendicular to the wave fronts, whereas
$\eta$ varies along the direction $\@k^\perp$, parallel to the wave fronts.)
Crucially, this assumption implies that we can effectively reduce the solution of the above initial value problem to a 1D problem concerning propagation along the single variable~$r$, whereas the angle $\theta$ is kept constant. {For that reason, $\theta$ will be treated as a constant
parameter throughout this section.}

In light of the above observations,  linearizing Eq.~\eqref{eq:model} about the constant background $\bar u$  then  gives the dispersion relation
\begin{equation}\label{e:omega2D}
\omega^2(r,\bar u;\theta)  = 4\,\Phi''(\bar u)\, S(r;\theta),
\end{equation}
where
\begin{equation}\label{e:Sdef}
S(r;\theta) : = \sin^2\Big(\frac{r\cos\theta}{2}\Big) + \sin^2\Big(\frac{r\sin\theta}{2}\Big).
\end{equation}
Note that~\eqref{e:omega2D} has the same form  as Eq.~\eqref{eq:disp} with $\bar{u}=0$.
Importantly, the above assumption means that, in Eq.~\eqref{e:omega2D}, $\theta$ simply plays a role as a constant parameter. We will assume that for
the remainder of this section.

We start by taking the ODE stemming from the DSW fitting method for the 1D FPUT equation
(see Eq.~\eqref{e:DSWfittingODE})
and replacing 
the wavenumber $k$ with the wavenumber magnitude $r$. 
{In this set-up, the wavenumber
magnitude $r$ is treated as function 
of $\bar{u}$ with parameter $\theta$,
namely, $r=r(\bar u;\theta)$. Including $\theta$ as a parameter will allow us
to capture angle dependent effects (otherwise, $r$
would be independent of $\theta$).
}
Next, we replace the 1D dispersion relation with 
the 2D dispersion relation~\eqref{e:omega2D}, 
leading to the following initial value problems
\begin{equation}
\label{eq:drdubar_def}
\frac{\partial r}{\partial \bar u}
=
\frac{\partial \omega/\partial \bar u}
{\sqrt{\Phi''(\bar u)}-\partial \omega/\partial r}, \qquad r(u^+;\theta) = 0
\end{equation}
\begin{equation}
\label{eq:dtilder_dubar_def}
\frac{\partial\tilde r}{\partial\bar u}
=
\frac{\partial \tilde\omega / \partial \bar u}
{\sqrt{\Phi''(\bar u)}-\partial \tilde\omega / \partial \tilde r},
\qquad
\tilde r(u^-;\theta) = 0 .
\end{equation}
where the conjugate dispersion relation is given by
$$
{  \tilde{\omega}(\tilde r,\bar u;\theta)} = - i \omega(i\tilde{r},\bar{u};\theta). 
$$
 Explicitly, we have 
\begin{equation*}
\omega(r,\bar u;\theta)
=
2\sqrt{\Phi''(\bar u)}\, \sqrt{S(r;\theta)}, \qquad 
{  \tilde{\omega}(\tilde r,\bar u;\theta)} = 2\sqrt{\Phi''(\bar u)}\,\sqrt{\smash{\tilde S}(\tilde r;\theta)},
\end{equation*}
 where $S(r;\theta)$ is given by~\eqref{e:Sdef}, and where for convenience we also defined  
\begin{equation*}
\tilde S(\tilde r;\theta) = \sinh^2\!\Big(\frac{\tilde r\cos\theta}{2}\Big) + \sinh^2\!\Big(\frac{\tilde r\sin\theta}{2}\Big).
\end{equation*}

Computing the relevant partial derivatives in Eq.~\eqref{eq:drdubar_def}
and Eq.~\eqref{eq:dtilder_dubar_def} and simplifying yields,
\bse
\begin{equation}\label{eq:2Dfit}
\frac{\partial r}{ \partial \bar u}
=
\frac{\Phi'''(\bar u)}{\Phi''(\bar u)}
\,
\frac{S(r;\theta)}
{\,\sqrt{S(r;\theta)}-T(r;\theta)}, \qquad r(u^+;\theta) = 0,
\end{equation}
\begin{equation} \label{eq:2Dfit2}
\frac{\partial\tilde r}{d\bar u}
=
\frac{\Phi'''(\bar u)}{\Phi''(\bar u)}
\,
\frac{\tilde S(\tilde r;\theta)}
{\sqrt{\tilde S(\tilde r;\theta)}-\tilde T(\tilde r;\theta)}, \qquad \tilde{r}(u^-;\theta) = 0,
\end{equation}
\ese
where
\begin{gather*}
T(r;\theta) =
    \cos\theta\,
        \sin\!\Big(\tfrac12{r\cos\theta}\Big)
        \cos\!\Big(\tfrac12{r\cos\theta}\Big)
    + \sin\theta\,
        \sin\!\Big(\tfrac12{r\sin\theta}\Big)
        \cos\!\Big(\tfrac12{r\sin\theta}\Big), \\
\tilde T(\tilde r;\theta) =
    \cos\theta\,
        \sinh\!\Big(\tfrac12{\tilde r\cos\theta}\Big)
        \cosh\!\Big(\tfrac12{\tilde r\cos\theta}\Big)
    + \sin\theta\,
        \sinh\!\Big(\tfrac12{\tilde r\sin\theta}\Big)
        \cosh\!\Big(\tfrac12{\tilde r\sin\theta}\Big).
\end{gather*}
We were unable to obtain an explicit solution of the above initial value problem for general values of~$\theta$.
For special values of $\theta$, however,  these problems  can be solved  in closed form, as we show next. 

\subsubsection{The $\theta=0$ case}

We first consider $\theta=0$, which corresponds to the
1D situation. In this case, we have that $S(r;0) = \sin^2(r/2)$,
$T(r;0) =\sin(r/2)\cos(r/2) $ and $\tilde{S}(\tilde{r};0) = \sinh^2(\tilde{r}/2)$, 
$\tilde{T}(\tilde{r};0) =\sinh(\tilde{r}/2)\cosh(\tilde{r}/2)$.
Plugging this into Eq.~\eqref{eq:2Dfit} and Eq.~\eqref{eq:2Dfit2} yields

\begin{eqnarray}
\frac{\partial r}{\partial\bar u}
&=&
\frac{\Phi'''(\bar u)}{\Phi''(\bar u)}
\,
\frac{\sin(\tfrac r2)}{1-\cos(\tfrac r2)}
=
\frac{\Phi'''(\bar u)}{\Phi''(\bar u)}
\,
\cot\!\Big(\tfrac r4\Big), \qquad r(u^+;0) = 0 \\
\frac{\partial \tilde r}{\partial\bar u}
&=&
\frac{\Phi'''(\bar u)}{\Phi''(\bar u)}
\,
\frac{\sinh(\tfrac{\tilde r}{2})}{1-\cosh(\tfrac{\tilde r}{2})}
=
-\frac{\Phi'''(\bar u)}{\Phi''(\bar u)}
\,
\coth\!\Big(\tfrac{\tilde r}{4}\Big), \qquad \tilde r(u^-;0) = 0
\end{eqnarray}
These equations can be solved explicitly. 
The solutions are
\begin{equation}
r(\bar u;0)
=
4\,
\arccos\!\Bigg[
\left(\frac{\Phi''( u^+)}{\Phi''(\bar u)}\right)^{1/4}
\Bigg].
\label{eq:diag_explicit}
\end{equation}
and for the conjugate problem, 
\begin{equation}
\tilde r(\bar u;0)
=
4\,
\mathrm{arccosh}\!\Bigg[
\left(\frac{\Phi''(u^-)}{\Phi''(\bar u)}\right)^{1/4}
\Bigg].
\label{eq:diag_conjugate}
\end{equation}
{
Evaluating  $r(\bar{u};0)$ at the trailing edge (where $\bar{u}=u^-$) 
will give us an expression for the trailing edge wavenumber.
Since the expression will depend on $\theta$ we denote the trailing edge wavenumber
$r^-(\theta)$ for general $\theta$, and for the
particular choice of $\theta=0$ we simply write $r^-(0)$. In particular,
we have
\begin{equation}
r^-(0) := r(u^-;0)
=
4\arccos\!\left[
\left(\frac{\Phi''(u^+)}{\Phi''(u^-)}\right)^{1/4}
\right].
\label{eq:k-} 
\end{equation}
Note this formula coincides exactly with formula derived in the 1D case \cite{Su26},
as expected.
Likewise, we evaluate $\tilde{r}(\bar{u};0)$ at the leading edge (where $\bar{u}=u^+$)
to obtain
\begin{equation}
r^+(0) := \tilde{r}(u^+ ;0)
=
4\mathrm{arccosh}\!\left[
\left(\frac{\Phi''(u^-)}{\Phi''(u^+)}\right)^{1/4}
\right].
\label{eq:k+} 
\end{equation}
}
The velocity at the trailing edge (which we call $s^-(\theta)$)
and the velocity at the leading edge (which we call $s^+(\theta)$) are 
$$ s^-(0) = \partial_r\omega(r^-{ (0)},u^-;0) =\sqrt{\Phi''(u^-)}\cos(r^-(0)/2) $$

$$ s^+(0)  = \frac{\tilde{\omega}(r^+(0),u^+;0)}{r^+(0)} = 2 \sqrt{\Phi''(u^+)}\frac{\sinh(r^+(0)/2)}{r^+(0)} $$
%
If we pick $u^+ = 0$, $u^- = \delta \frac{\sqrt{a_1}}{a_2}$ 
(so that the initial conditions in Eq.~\eqref{eq:Astep} correspond to Eq.~\eqref{eq:step2} )
and we pick the potential as in Eq.~\eqref{eq:def_phi} and expand $r^-(0)$, $s^-(0)$ and $s^+(0)$ about $\delta = 0$ we obtain
$$r^-(0)\approx 4 \sqrt{\frac{\delta}{\sqrt{a_1}}} $$ 
$$ s^-(0) \approx \sqrt{a_1} - \delta$$
$$ s^+(0) \approx \sqrt{a_1} + \frac{2 }{3}\delta$$
%
%
which are exactly the KdV predictions. In particular,
see the expression for $k^-$ in Eq.~\eqref{eq:wavenumber}
with $\theta=0$ and the expression for $s^+$ and $s^-$ in Eq.~\eqref{eq:speeds}
(which are independent of $\theta$).
In other words, the DSW fitting
prediction and the KdV prediction match to leading order in the limit of small jump heights ($\delta\rightarrow0$).

\subsubsection{The $\theta=\pi/4$ case}
We now turn our attention to the case $\theta=\pi/4$, which corresponds
to some of the simulations we conducted (with $k_2/k_1=1$).
Since $\cos\theta=\sin\theta=\frac{1}{\sqrt{2}}$ we have the following
simplifications
\begin{align*}
S(r;\pi/4)
&=
2\sin^2\!\left(\frac{r}{2\sqrt{2}}\right), \qquad
T(r;\pi/4)
=
\sqrt{2}
\sin\!\left(\frac{r}{2\sqrt{2}}\right)\cos\!\left(\frac{r}{2\sqrt{2}}\right). \\
\tilde{S}(\tilde{r};\pi/4)
&=
2\sinh^2\!\left(\frac{\tilde{r}}{2\sqrt{2}}\right), \qquad
\tilde{T}(\tilde{r};\pi/4)
=
\sqrt{2}
\sinh\!\left(\frac{\tilde{r}}{2\sqrt{2}}\right)\cosh\!\left(\frac{\tilde{r}}{2\sqrt{2}}\right). 
\end{align*}
%
Substituting these expressions into Eq.~\eqref{eq:2Dfit}
and Eq.~\eqref{eq:2Dfit2}, we obtain 
\begin{equation}
\frac{\partial r}{\partial \bar u}
=
\sqrt{2}\,
\frac{\Phi'''(\bar u)}{\Phi''(\bar u)}
\cot\!\Big(\frac{r}{4\sqrt{2}}\Big),
\qquad
r(u^+;\pi/4)=0.
\label{eq:diag_fit}
\end{equation}
and 
\begin{equation}
\frac{\partial \tilde r}{\partial \bar u}
=
-\sqrt{2}\,
\frac{\Phi'''(\bar u)}{\Phi''(\bar u)}
\coth\!\Big(\frac{\tilde r}{4\sqrt{2}}\Big),
\qquad
\tilde r(u^-;\pi/4)=0.
\end{equation}
These ODEs  are identical to the ones derived in the $\theta=0$ case
after a change of variable $r\rightarrow r/\sqrt{2}$  and
$\tilde{r}\rightarrow \tilde{r}/\sqrt{2}$. Thus, we have
that 

\begin{equation} \label{eq:rmpi4}
r^-(\pi/4) := r(u^-;\pi/4)
=
4\sqrt{2}\arccos\!\left[
\left(\frac{\Phi''(u^+)}{\Phi''(u^-)}\right)^{1/4}
\right],
\end{equation}
\begin{equation} \label{eq:rppi4}
r^+(\pi/4) := \tilde{r}(u^+;\pi/4)
=
4\sqrt{2}\mathrm{arccosh}\!\left[
\left(\frac{\Phi''(u^-)}{\Phi''(u^+)}\right)^{1/4}
\right].
\end{equation}
The velocity at the trailing edge ($s^-(\theta)$)
and leading edge ($s^+(\theta)$) are
$$ s-+(\pi/4)  =\sqrt{\Phi''(u^-)}\cos\left(\frac{r^-}{2\sqrt{2}}\right) $$

$$ s^+(\pi/4)  =2 \sqrt{2} \sqrt{\Phi''(u^+)}\frac{\sinh\left(\frac{r^+(\pi/4)}{2\sqrt{2}}\right)}{r^+(\pi/4)} $$

Substituting the expression from Eq.~\eqref{eq:rmpi4} and Eq.~\eqref{eq:rppi4} into
the above expression we see
that the factors of $\sqrt{2}$ cancel, leading to the conclusion that 

$$ s^+(\pi/4)  = s^+(0) $$ 

$$ s^-(\pi/4)  = s^-(0) $$ 

In other words, the DSW fitting method also predicts the leading and trailing edge
speeds are independent of the angle $\theta$.  The dashed lines of 
Fig.~\ref{fig:speeds}(a) show the prediction of the leading and trailing edge
speeds, comparing favorably to the speeds extracted from the full 2D simulations.
In the figure, one can see that the KdV prediction (solid lines) and
DSW fitting prediction (dashed lines) agree as $\delta\rightarrow 0$.
For both the leading and trailing edge speed, the DSW fitting prediction
is more accurate than the KdV prediction.

For the trailing edge wavenumber, we have that

$$ r^-(\pi/4)  = \sqrt{2} \, r^-(0) $$ 
In other words, the trailing edge wavenumber $r^-$ is indeed dependent on $\theta$.
The expansion of $r^-(\pi/4)$ is
$$ r^-(\pi/4) \approx  4\sqrt{2}  \sqrt{\frac{\delta}{\sqrt{a_1}}} $$ 
which once again coincides with the KdV approximation of the trailing edge wavenumber for $\theta = \pi/4$.

\subsubsection{The general $\theta$ case}

The above calculations suggest that the prediction of the leading and trailing 
edge speeds are independent of $\theta$ (consistent with the KdV prediction, but also numerical observations). The trailing edge wavenumber, on the other hand, was dependent
on $\theta$.  For general values of $\theta$ there is no explicit solution to Eq.~\eqref{eq:2Dfit}, and so to predict the trailing edge wavenumber from the DSW
fitting approach, we must solve Eq.~\eqref{eq:2Dfit} numerically. We did this
using the \url{ode45} routine in Matlab. Recall that the initial value is
$r(u^+) = 0$ and for the simulations we have $u^+=0$. However, the
right-hand side of the ODE is undefined at $r=0$. Thus, for the simulation
we use $r(\bar{u}=0;\theta) = 10^{-6}$ { (using
$r(\bar{u}=0;\theta) = 10^{-5}$ and $r(\bar{u}=0;\theta) = 10^{-7}$
produced similar results)}.
We simulated until $\bar{u} = u^- = \delta\sqrt{a_1}/a_2$,
and extracted the final $r$ value as the prediction for the trailing edge
wavenumber magnitude. The simulations produced the correct result
for the $\theta=0$ and $\theta = \pi/4$ cases, and so we used $r(\bar{u}=0;\theta) = 10^{-6}$
for simulations with other values of $\theta$ as well. 

A plot of
$r(\bar{u}=\delta;\theta)$ is shown in Fig.~\ref{fig:wavenumbers} (dashed line)
for various $\theta$. For comparison, we included the KdV prediction as well:
$$4\sqrt{\frac{\delta}{\sqrt{a_1}(\cos^4(\theta) + \sin^4(\theta))}} $$
shown as the solid line. The trailing edge wavenumber extracted from the 2D
simulation is shown as markers.

To numerically estimate the wavenumber magnitude $r$, we used the following procedure. We first
estimated the trailing edge location (in the same way detailed in
the previous section). Suppose the location of the trailing edge
was for $n=n_0$ and $m=m_0$.  Call this data
$$u_0(t) = u_{n_0,m_0}(t).$$
We then found the distance (in time) between the last two local maxima
of $u_0(t)$, which serves as an estimate for the period $T$.
Note that the local maxima were found by first identifying the maximum
values of the signal, and then performing a quadratic fit for three neighboring
points. We used that quadratic fit to then obtain a better estimate of the local
maximum. Then we computed $\omega = 2\pi/T$. If we assume
that in a small spatial-temporal window  the DSW is exactly a periodic wave,
 we have that 
$$ u_{n_0,m_0}(t) = U(r\cos(\theta) n_0 +r\sin(\theta) m_0 - \omega t)$$
where we just discussed how to compute $\omega$ and we seek an estimate
of $r$. To do this, we then extract data at the neighboring point, namely
$$u_1(t) = u_{n_0+1,m_0+1}(t).$$
We then compute the distance (in time) of the last local max of $u_0(t)$ and $u_1(t)$,
call this number $\ell$. Under the assumption of the DSW being a perfect periodic
wave (in the small space-time window), we have that 
\begin{eqnarray*}
 u_{n_0+1,m_0+1}(t) &=& U(r\cos(\theta) (n_0+1) +r\sin(\theta)(m_0+1) - \omega t)   \\
 &=&U(r\cos(\theta)n_0 +r\sin(\theta)m_0 -  \omega t + r(\cos(\theta)+ \sin(\theta)) ) 
\end{eqnarray*}
An observed phase shift between $u_0$ and $u_1$ will be of the form $ r(\cos(\theta)+ \sin(\theta))/\omega$.
Thus, we have that
$$r = \frac{\ell \omega}{\cos(\theta) + \sin(\theta)} $$
where $\ell$ is the apparent (numerically observed) phase shift between $u_1$ and $u_0$ and
$\omega$ is the angular frequency of $u_0$. This is how
$r$ was estimated numerically for Fig.~\ref{fig:wavenumbers}.
{Note, we chose the second site $u_1=u_{n_0+1,m_0+1}$ to estimate
 $\ell$. For a perfect periodic wave the choice of $u_1$ does not matter, but for slowly modulated waves, the choice
could lead to small differences. We repeated the above
calculation with $u_1 =u_{n_0,m_0+1}$ and $u_1=u_{n_0+1,m_0}$ (and adjusting
the formula for $\ell$ accordingly)
which led to only small differences (on the order of $10^{-3}$).}

\begin{figure} 
    \centerline{
 \includegraphics[height=7cm]{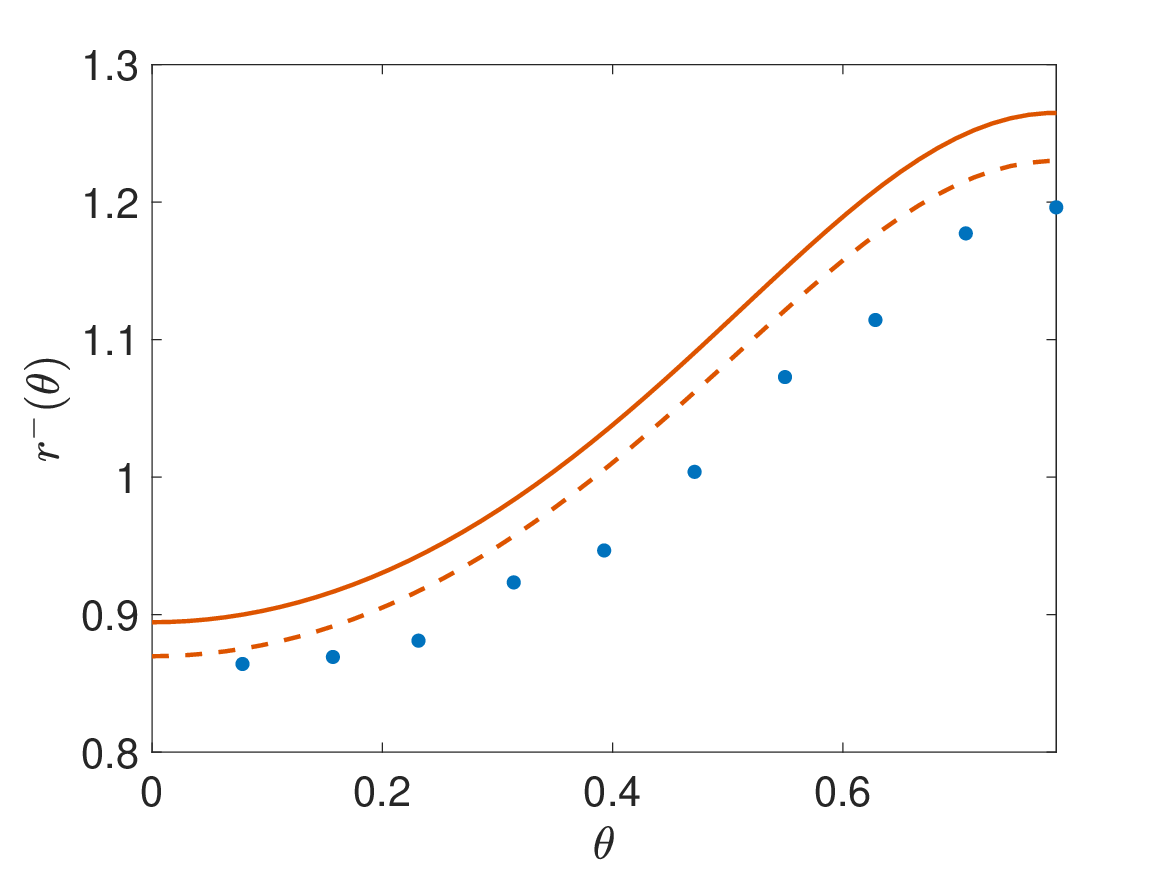}  
  }
   \caption{Prediction of the trailing edge wavenumber $r^-$ as a function
   of $\theta$ from the DSW fitting method (dashed line), KdV approximation
   (solid line) and from the full 2D simulation (markers) for $\delta = 0.05$.
   }
   \label{fig:wavenumbers}
\end{figure}

\section{Conclusions}
\label{s:conclusions}

In this work, we have investigated traveling waves and dispersive shock waves in a two-dimensional FPUT lattice. By employing a variational framework, we established the existence of both periodic and solitary traveling waves for convex interaction potentials and demonstrated that the convexity assumption can be removed for unimodal profiles. This approach also yielded an effective numerical algorithm for computing traveling wave solutions, which enabled a systematic exploration of both solitary and periodic waves. The resulting numerical computations were compared with analytical approximations derived from the Korteweg–de Vries (KdV) equation, showing consistency between the two
approaches.

We then focused on the formation and evolution of dispersive shock waves. In particular, we examined line DSWs generated by jump initial data, characterized by invariance in one spatial direction and propagation in the orthogonal direction. Our results indicate that while the detailed shape of these waves depends on their direction of propagation, key quantities such as speed and amplitude remain independent of direction. A comprehensive numerical study of line DSWs was carried out and systematically compared with KdV-based predictions over a range of jump heights. In the small-amplitude regime, our numerical results exhibit good agreement with KdV theory. To further refine these predictions, we applied the DSW fitting method to characterize the leading and trailing edges, obtaining close agreement between the theoretical predictions and the numerical simulations. Both the KdV approximation and the DSW fitting approach were shown to converge in the limit of vanishing jump height.

We close this section by briefly discussing several natural directions for further work suggested by the present study.

\textit{Stability of traveling waves and line DSWs.}
The variational existence results of Section~\ref{sec:TWs} are not accompanied by stability results, and the line DSWs of Section~\ref{sec:DSWs} were analyzed only along the propagation direction. In closely related continuum settings, line solutions are well known to be susceptible to transverse modulational instabilities.
For example, this is reflected in the KP-I/KP-II dichotomy and in the transverse stability analyses based on multi-dimensional Whitham theory carried out for the KP~\cite{Ablowitz2017}, 2D Benjamin--Ono~\cite{PRE2017v96p032225}, 2D NLS~\cite{2DNLS_mod2022,2DNLS_mod2023}, and 2D Zakharov--Kuznetsov~\cite{2DZK_mod2023} equations. Whether the line traveling waves and line DSWs constructed here are spectrally and dynamically stable to transverse perturbations of the 2D lattice equation~\eqref{eq:model} is an open and, in our view, central question. The variational machinery developed in Section~\ref{sec:TWs} appears to be a natural starting point for such a stability analysis.

\textit{Genuinely 2D DSW geometries.}
By construction, the line DSWs considered here are essentially one-dimensional objects embedded in the 2D lattice. The most immediate genuinely-2D extension is the case of radially symmetric (cylindrical) jump initial data, the continuum analog of which is naturally tied to the cylindrical KdV reduction of the KP equation~\cite{PHYSD333p84} and the modified KdV variant~\cite{2DmKdV_mod2023}. A radial DSW in the present 2D FPUT lattice would test whether the $\theta$-dependent features identified here (e.g.~the trailing-edge wavenumber in Fig.~\ref{fig:wavenumbers}) integrate correctly as the propagation direction varies continuously around an expanding front.  Another interesting direction would be the study of wedge-type initial conditions similarly to what was recently done for the KP equation in \cite{PRL2025v135p067201}.

\textit{Oblique DSW interactions, Mach reflection, and corner data.}
Beyond radial geometries, a rich set of multidimensional phenomena---oblique soliton interactions, Mach expansion of bent solitons, and reflections of DSWs at corners---has been studied in the KP setting using reductions of the 2D Whitham equations~\cite{JFM2021v909pA24,NLTY2021v34p3583,KPII_Mach2022}. Each of these has a natural analog in the 2D FPUT lattice: two line DSWs propagating at different angles, a line DSW initialized with a kink or corner in its interface, and the interaction of an oblique line soliton (cf.~Fig.~\ref{fig:solitary}) with a line DSW. The numerical and asymptotic methods of Section~\ref{sec:DSWs} should extend to these settings essentially without modification.

\textit{Full 2D Whitham modulation theory for 2D lattices.}
The KdV-based approximation employed in Section~\ref{sec:DSWs} captures the small-amplitude regime accurately but, as noted around Fig.~\ref{fig:compareDSW}, misses an $O(1)$ phase term that first-order Whitham theory cannot supply. A natural sharpening of the present work would be the derivation of the \emph{full} Whitham modulation system for the 2D lattice~\eqref{eq:model}, in the spirit of what was done for KP in~\cite{Ablowitz2017} and for the 1D FPUT and Toda lattices in~\cite{Venakides99,DHM06,blochkodama,Gino2024}. Such a system would in principle yield angle-dependent characteristic speeds rather than the degenerate $\theta$-independent leading-order speeds obtained here, and could be tested against the numerical edge speeds reported in Fig.~\ref{fig:speeds}.

\textit{Integrable discrete models in 2D.}
In the 1D setting, integrable approximations based on the Toda lattice have proven to be powerful quantitative tools for describing DSWs, capturing discreteness effects that elude continuum reductions such as KdV~\cite{Ari2024,Gino2024}. A natural 2D analog is the 2D Toda lattice, whose soliton structure has been investigated in~\cite{jpa2010}.  A quantitative study of 2D DSWs in the 2D Toda lattice---paralleling the 1D programs of~\cite{Ari2024,Gino2024} and providing a non-perturbative point of comparison for the present results---would be a worthwhile project. The discrete KP hierarchy provides a further integrable structure whose relevance to DSWs in lattices of FPUT type appears, to the best of our knowledge, not to have been explored.

\textit{Higher-order and quasi-continuum models.}
A complementary route to capturing discreteness effects that elude the leading-order KdV description is the construction of higher-order continuum or quasi-continuum models, as recently pursued in the 1D context in~\cite{Sprenger2024,yang2024regularizedcontinuummodeltraveling,CHONG2022133533}. Extending such reductions to the 2D lattice is a natural next step and is expected to improve agreement at moderate jump heights, where Fig.~\ref{fig:speeds} shows the KdV prediction deteriorating relative to the DSW fitting prediction.

\textit{Other lattice geometries and vector lattices.}
The square geometry of~\eqref{eq:model} is a natural starting point, but is by no means generic. Triangular and hexagonal lattices, with their differing coordination numbers and anisotropic dispersion relations, are expected to give rise to qualitatively different angular dependences of the DSW characteristics. More broadly, many physical 2D lattices---mechanical, granular, photonic---are vector-valued, with multiple degrees of freedom per site. The variational framework of Section~\ref{sec:TWs} and the DSW fitting framework of Section~\ref{sec:DSWfit} should both extend, with appropriate modifications, to these settings.

\textit{Irrational propagation directions.}
Finally, our quantitative comparisons between simulation and theory were carried out for rational slopes $k_2/k_1$, which permit clean extraction of one-dimensional profiles along the propagation direction. The simulations at irrational slopes shown in Fig.~\ref{fig:irat} are qualitatively similar to their rational counterparts, but a careful analysis of how the lattice's quasi-periodic sampling along irrational directions affects the small-amplitude tails of the DSW, the structure of the trailing oscillations, and the validity of the long-wave approximation has not been attempted here. Such an analysis, possibly with number-theoretic flavor, may reveal subtle effects that are invisible in the qualitative picture of Fig.~\ref{fig:irat}.

 In conclusion, we hope that the present study will provide a useful  analytical and numerical framework for understanding traveling waves and dispersive shock dynamics in scalar two-dimensional FPUT lattices. Our results highlight the effectiveness of combining variational methods, asymptotic analysis, and numerical simulations, and they lay the groundwork for future investigations of more general nonlinear wave phenomena.

\section*{Acknowledgments}
This paper is dedicated to the late Michael Herrmann, whose insightful discussions on periodic waves and dispersive shock waves were invaluable to the authors. Michael, our esteemed collaborator and dear friend, continues to be missed.

C.C. gratefully acknowledges financial support for this project by the Fulbright U.S. Scholars Program, which is sponsored by the U.S. Department of State and German-American Fulbright Commission. Its contents are solely the responsibility of the authors and do not necessarily represent the official views of the Fulbright Program, the Government of the United States, or the German-American Fulbright Commission.

This material is also based upon work supported by the US National Science Foundation under Grant DMS-2107945 (C.C.) and  PHY-2408988 (PGK). 

W.R. acknowledges funding by the Deutsche Forschungsgemeinschaft (DFG, German Research Foundation) – Project-ID 258734477 – SFB 1173.

This research was partly conducted while P.G.K. was  visiting the Okinawa Institute of Science and
Technology (OIST) through the Theoretical Sciences Visiting Program (TSVP), the University of
Sydney through the visitor program of the Sydney Mathematical Research Institute (SMRI) and the Department of Mechanical Engineering at Seoul National 
University through a Fulbright Fellowship. Their support is gratefully acknowledged.
Finally, this work was also  supported by a grant from the Simons Foundation [SFI-MPS-SFM-00011048, P.G.K]. 

\section*{Data Availability}
The data and code that support the findings of this study are available from the corresponding author upon reasonable request. All code was written exclusively by Christopher Chong.

\section*{Appendix} \label{sec:appendix}
{ 
In order to handle the existence of solitary waves in the case where $a_1>0$ and $I=\R$ we rely on some arguments from \cite{herrmann_peridynamics} and adapt them to our case. If we denote by 
$$
\Phi_0(u) = \frac{a_1}{2} u^2
$$
the quadratic part of $\Phi$ then next to the functional 
$$
J(z)= \int_\R \Phi(-Az)\,d\rho
$$
we can also consider the purely quadratic functional 
$$
J_0(z) = \int_\R \Phi_0(-Az)\,d\rho.
$$
Further let 
$$
j(R) = \sup_{C} J(z), \quad  j_0(R) = \sup_{C} J_0(z) \quad \mbox{ where } \quad C:= \{z\in L^2(I)\times L^2(I): \|z\|_2 \leq R\}.
$$

The first lemma concerns the relation between $j(R)$ and $j_0(R)$. It is a consequence of the potential being strictly superquadratic.

\begin{lemma}
    Let $a_1, a_2>0$ and $a_3 \geq 0$. Then for every $R>0$ we have $j_0(R)<j(R)$.
\end{lemma}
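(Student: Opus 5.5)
The plan is to exhibit a single competitor $z\in C$ with $J(z)>j_0(R)$. Since $J(z)-J_0(z)=\int_\R \frac{a_2}{3}(-Az)^3+\frac{a_4}{4}\ldots$, more precisely
$$
J(z)=J_0(z)+\frac{a_2}{3}\int_\R (-Az)^3\,d\rho+\frac{a_3}{4}\int_\R(-Az)^4\,d\rho ,
$$
it suffices to find an $L^2$-bounded family of nonnegative competitors whose quadratic part approaches $j_0(R)$ while the cubic part stays bounded below by a positive amount. The first step is to identify $j_0(R)$ explicitly. On $I=\R$, $-Az=\chi_{k_1}\ast z_1+\chi_{k_2}\ast z_2$, and Plancherel turns $J_0$ into a quadratic form with Fourier symbol built from $\widehat{\chi_{k_i}}(\xi)=2\sin(k_i\xi/2)/\xi$. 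Pointwise in $\xi$, by Cauchy--Schwarz, $|\widehat{\chi_{k_1}}\hat z_1+\widehat{\chi_{k_2}}\hat z_2|^2\le(|\widehat{\chi_{k_1}}|^2+|\widehat{\chi_{k_2}}|^2)(|\hat z_1|^2+|\hat z_2|^2)$. Hence $j_0(R)=\frac{a_1}{2}m^2R^2$ with $m^2=\sup_\xi(|\widehat{\chi_{k_1}}(\xi)|^2+|\widehat{\chi_{k_2}}(\xi)|^2)=k_1^2+k_2^2$. This supremum is attained only at $\xi=0$ (since $|\sin x/x|<1$ for $x\ne0$), so $j_0(R)$ is \emph{not} attained and is approached only by long-wave profiles. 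This non-attainment is exactly why a separate argument is needed at all.

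The second step is to build such competitors. I would take $z^\varepsilon=(z^\varepsilon_1,z^\varepsilon_2)$ with $z_i^\varepsilon(\rho)=c_i\,\varepsilon^{1/2}\psi(\varepsilon\rho)$, where $\psi\ge0$ is a fixed smooth unimodal bump with $\|\psi\|_2=1$. The choice $(c_1,c_2)=R(k_1,k_2)/\sqrt{k_1^2+k_2^2}$ aligns with the Cauchy--Schwarz equality direction at $\xi=0$. Then $\|z^\varepsilon\|_2=R$ and $J_0(z^\varepsilon)=j_0(R)-O(\varepsilon^2)$, since the symbol is $m^2-O(\xi^2)$ near $0$ and $\hat z^\varepsilon$ concentrates at $|\xi|\lesssim\varepsilon$. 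Meanwhile $-Az^\varepsilon\approx mR\,\varepsilon^{1/2}\psi(\varepsilon\rho)\ge0$, so the cubic term is of size $\frac{a_2}{3}(mR)^3\varepsilon^{3/2}\varepsilon^{-1}\int\psi^3=c\,\varepsilon^{1/2}>0$. The quartic term is nonnegative because $a_3\ge0$. Thus
$$
J(z^\varepsilon)\ge j_0(R)-C\varepsilon^2+c\,\varepsilon^{1/2}>j_0(R)
$$
for $\varepsilon$ small, and therefore $j(R)\ge J(z^\varepsilon)>j_0(R)$. Here $a_2>0$ together with nonnegativity of $-Az^\varepsilon$ is crucial.

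The main obstacle is making the error estimate $j_0(R)-J_0(z^\varepsilon)=O(\varepsilon^2)$ rigorous and checking that it is genuinely of lower order than the $\varepsilon^{1/2}$ cubic gain. I would do this by writing
$$
j_0(R)-J_0(z^\varepsilon)=\frac{a_1}{2}\int\bigl(m^2-|\widehat{\chi_{k_1}}c_1+\widehat{\chi_{k_2}}c_2|^2/R^2\bigr)R^2\,|\hat\psi_\varepsilon|^2\,d\xi/(2\pi),
$$
bounding the bracket by $C\xi^2$, and using $\int\xi^2|\hat\psi_\varepsilon|^2=\varepsilon^2\|\psi'\|_2^2$. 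For the cubic term I would use $\chi_{k_i}\ast\psi_\varepsilon=k_i\psi_\varepsilon+O(\varepsilon^{2}\varepsilon^{1/2})$ uniformly, giving the stated lower bound.
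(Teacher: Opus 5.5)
Your proof is correct and follows the paper's argument. The paper also bounds $j_0(R)$ from above via Plancherel, using $|\sinc|\le 1$ and a weighted AM--GM step that is equivalent to your pointwise Cauchy--Schwarz. It then tests $J$ on the long-wave competitor $z=\frac{R}{\sqrt{L}}(\cos\theta,\sin\theta)\chi_L$, which is your family with $\psi=\chi_{[-1/2,1/2]}$ and $\varepsilon=1/L$, so that the cubic gain of order $L^{-1/2}$ beats the quadratic deficit. The only difference is that your smooth $\psi$ gives an $O(\varepsilon^2)$ deficit where the paper's indicator gives $O(L^{-1})$; both are dominated by the cubic gain. Keeping the factor $\frac{a_1}{2}$ in $j_0(R)$ also makes your bookkeeping slightly cleaner than the paper's, which drops it.
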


\begin{proof}
    Let us first give an upper bound for $j_0(R)$. Consider the interval $I_L := [-\frac{L}{2}, \frac{L}{2}]$ and its characteristic function $\chi_L$. Recall that the Fourier transform of $\chi_L$ is given by $L\sinc(L\xi)$ with $\sinc(s)=\frac{\sin \pi s}{\pi s}$ and where $|\sinc(s)|\leq 1$. Then, using $\chi_{k_i} = \chi_{[-\frac{k_i}{2},\frac{k_i}{2}]}$ for $i=1,2$ together with Plancherel's theorem and $(k_1,k_2)=r(\cos\theta,\sin\theta)\in (0,\pi)\times (0,\pi)$, we get 
    \begin{align*}
        \int_\R (-Az)^2\,dx & = \int_\R (\chi_{k_1}\ast z_1+\chi_{k_2}\ast z_2)^2\,dx \\
        &= \int_\R (\hat\chi_{k_1} \hat z_1 + \hat\chi_{k_2}\hat z_2)^2\,d\xi \\
        &= \int_\R (k_1 \sinc(k_1\xi)\hat z_1(\xi) + k_2 \sinc(k_2\xi) \hat z_2(\xi))^2 \,d\xi \\
        & \leq \int_\R (k_1 |\hat z_1| + k_2 |\hat z_2)^2 \,d\xi \\
        & = \int_\R k_1^2 |\hat z_1|^2 +k_2^2 |\hat z_2|^2 + 2k_1k_2 |\hat z_1| |\hat z_2|\,d\xi \\
        & \leq \int_\R k_1^2\left(1+ \frac{\sin^2\theta}{\cos^2\theta}\right) |\hat z_1|^2 + k_2^2\left(1+\frac{\cos^2\theta}{\sin^2\theta}\right) |\hat z_2|^2 \,d\xi \\
        & = r^2 R^2.
    \end{align*}
    The fact that this upper bound is actually sharp is a consequence of the following computation. Let $z_1=\frac{R\cos\theta}{\sqrt{L}} \chi_L$ and $z_2 = \frac{R\sin\theta}{\sqrt{L}} \chi_L$. Then 
    $$
    \int_{\rho-\frac{k_i}{2}}^{\rho+\frac{k_i}{2}} \chi_L(s)\,ds = \left\{ \begin{array}{ll}  k_i & \mbox{ if } |\rho|<\frac{L-k_i}{2}, \\ 
    0 & \mbox{ if } |\rho|>\frac{L+k_i}{2}.\end{array} \right.
    $$
    Therefore, with $k=\max\{k_1,k_2\}$,
    \begin{align*}
        J_0(z)=\int_\R (-Az)^2\,dx & \geq \left(\frac{k_1R\cos\theta}{\sqrt{L}}+\frac{k_2R\cos\theta}{\sqrt{L}}\right)^2(L-k) = r^2R^2 - O(L^{-1})
    \end{align*}
     as $L\to \infty$ and hence $j_0(R)=r^2R^2$. Similarly for $l\in \N, l\geq 3$, $\int_\R (-A z)^l\,dx \geq O( L^{1-\frac{l}{2}})$ as $L\to \infty$. This leads to the conclusion that 
     $$
     j(R)-j_0(R) = j(R)-J_0(z)-O(L^{-1}) \geq J(z)-J_0(z)-O(L^{-1}) \geq a_2 O(L^{-1/2}) + a_3 O(L^{-1})- O(L^{-1})>0 
     $$
     for large enough $L$ since $a_2>0$. Note that the estimate (in its current form) also shows that $a_2=0$ cannot easily be compensated for by assuming $a_3>0$, unless further refined estimates are produced.
\end{proof}

Based on this lemma we can now show the main result.

\begin{lemma}
     Let $a_1, a_2>0$ and $a_3 \geq 0$. Then $j(R)$ is attained by a pair of unimodal functions.
\end{lemma}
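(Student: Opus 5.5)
We take a maximizing sequence $(z_j)$ of $J$ on $C$ and, as in the proof of Lemma~\ref{lem:ex_2}, replace it by $w_j:=|z_j|^\ast$. By Lemma~\ref{lem:rearrange} (which applies since $a_1,a_2,a_3\geq 0$) this is still a maximizing sequence in $C$, now consisting of pairs of unimodal functions. After passing to a subsequence, $w_j\rightharpoonup w\in C$, and $w$ is a pair of unimodal functions because this set is weakly closed. It remains to show $J(w)=j(R)$. We split $J=J_0+J_{\mathrm{hi}}$ with
\[
J_{\mathrm{hi}}(z)=\int_\R \tfrac{a_2}{3}(-Az)^3+\tfrac{a_3}{4}(-Az)^4\,d\rho .
\]
By Lemma~\ref{lem:prop_op_2}(iv), $-A$ is compact from unimodal pairs into $L^3(\R)$ and $L^4(\R)$. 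Hence $J_{\mathrm{hi}}(w_j)\to J_{\mathrm{hi}}(w)$. The quadratic part is weakly lower semicontinuous only in the wrong direction, so we need a concentration–compactness style argument in the spirit of \cite{herrmann_peridynamics}.

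The first key step is to rule out vanishing via the previous lemma. Since $J_0(w_j)\leq j_0(R)=r^2R^2$ (the Plancherel bound is valid for every $z\in C$), we get
\[
J_{\mathrm{hi}}(w)=\lim J_{\mathrm{hi}}(w_j)\geq \lim J(w_j)-j_0(R)=j(R)-j_0(R)>0,
\]
so $w\neq 0$. The second step quantifies the loss in the quadratic part. Write $w_j=w+v_j$ with $v_j\rightharpoonup 0$. Since $-A$ is a bounded linear operator, weak convergence gives the Brezis–Lieb type splittings
\[
J_0(w_j)=J_0(w)+J_0(v_j)+o(1), \qquad \|w_j\|_2^2=\|w\|_2^2+\|v_j\|_2^2+o(1).
\]
Moreover $J_0(v_j)\leq \frac{a_1}{2}r^2\|v_j\|_2^2$ by the Plancherel estimate, up to the normalization $j_0(R)=\frac{a_1}{2}r^2R^2$ once $a_1$ is reinstated. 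Setting $S^2:=\|w\|_2^2\leq R^2$ and $\lambda^2:=\limsup\|v_j\|_2^2\leq R^2-S^2$, this yields
\[
j(R)\leq J(w)+\tfrac{a_1}{2}r^2\lambda^2 .
\]

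The third step is a scaling (strict superadditivity) argument. For $t\geq1$ the map $t\mapsto J(tz)$ satisfies $J(tz)\geq t^2J(z)$ whenever $J_{\mathrm{hi}}(z)\geq0$, which holds for unimodal $z$ since $a_2,a_3\geq0$. The inequality is strict when $J_{\mathrm{hi}}(z)>0$. Consequently $j(\tau)/\tau^2$ is strictly increasing, and in particular $J(w)\leq j(S)\leq \frac{S^2}{R^2}\,j(R)$, with strict inequality if $S<R$. Combining this with the second step gives
\[
j(R)\leq \tfrac{S^2}{R^2}\,j(R)+\tfrac{R^2-S^2}{R^2}\,j_0(R).
\]
Since $j_0(R)<j(R)$, this forces $S=R$. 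Then $v_j\to0$ strongly, so $J_0(w_j)\to J_0(w)$ and therefore $J(w)=j(R)$.

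The main obstacle I expect is making the second and third steps rigorous. The splitting of $J_0$ requires checking that the cross term $\int (-Aw)(-Av_j)\,d\rho\to0$; this follows from weak convergence after writing the cross term as $\langle A^\ast A w, v_j\rangle$. The strict monotonicity of $j(\tau)/\tau^2$ requires $J_{\mathrm{hi}}>0$ along near-maximizers, which I would derive from the previous lemma applied at radius $\tau$. The exact constant $a_1/2$ in $j_0$ versus the paper's normalization $r^2R^2$ also has to be tracked consistently.
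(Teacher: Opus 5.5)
Your proof is correct, but it handles the loss of compactness differently from the paper.

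The paper cuts each $z_j$ off in physical space, $z_j=\chi_{[-X,X]}z_j+\chi_{\R\setminus[-X,X]}z_j$. It then uses the uniform decay $0\le z^i_j(x)\le R/\sqrt{|x|}$ on three regions to get $J(z_j)\le J(\widetilde z_j)+J_0(\overline z_j)+o_X(1)$:
the inner part converges because $A\widetilde z_j$ has compact support and converges uniformly; the transition layer costs only $o_X(1)$; and in the far field $-A\overline z_j$ is uniformly small, so only the quadratic part of $\Phi$ survives. One then lets $j\to\infty$ and afterwards $X\to\infty$.

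You instead split in function space, $w_j=w+v_j$ with $v_j\rightharpoonup 0$. The cubic and quartic terms converge outright by the $L^3$/$L^4$ compactness of $-A$ on unimodal pairs (Lemma~\ref{lem:prop_op_2}(iv)). The quadratic part splits exactly because $J_0$ is a bounded quadratic form, so the cross term $a_1\langle A^\ast A w, v_j\rangle$ tends to $0$. Finally, $J_0(v_j)\le \frac{\|v_j\|_2^2}{R^2}j_0(R)$ holds by homogeneity even though $v_j$ is not unimodal.

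Your route removes the cut-off, the region-by-region estimates and the double limit; the decay of unimodal functions enters only through Lemma~\ref{lem:prop_op_2}(iv). The paper's version is the classical concentration--compactness dichotomy and makes explicit that mass escaping to infinity only sees $\Phi_0$. Both arguments close with the same two ingredients: the scaling bound for $J$ on nonnegative profiles, and the strict inequality $j_0(R)<j(R)$ from the previous lemma.

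Your nonvanishing step and the strict monotonicity of $j(\tau)/\tau^2$ are not needed. The non-strict bound $J(w)\le\frac{S^2}{R^2}j(R)$ suffices. It follows from $J(\frac{R}{S}w)\ge\frac{R^2}{S^2}J(w)$, which holds because $-Aw\ge 0$, and it is trivial if $S=0$. All the strictness comes from $j_0(R)<j(R)$.

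Your bookkeeping $j_0(R)=\frac{a_1}{2}r^2R^2$ is the correct one for $J_0=\int\Phi_0(-Az)$. The paper's proof of the previous lemma drops the factor $a_1/2$, which is harmless for both arguments.
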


\begin{proof} For a maximizing sequence $(z_j)_{j\in \N}$ we can assume by Lemma~\ref{lem:prop_op_2} and Lemma~\ref{lem:rearrange} that both $z_j^1$ and $z_j^2$ are unimodal, that $z_j\stackrel{n\to \infty}{\rightharpoonup} z_\infty$ weakly in $L^2(\R)\times L^2(\R)$ and $Az_j\stackrel{j\to\infty}{\to} Az_\infty$ strongly in $L^p(\R)\times L^p(\R)$ for $p\in (2,\infty]$. In order to show that $J(z_j)\to  J(z_\infty)=\max_C J$ is suffices to show that $z_j\to z_\infty$ strongly in $L^2(\R)\times L^2(\R)$. Since $z_j^1, z_j^2$ are both unimodal they satisfy 
\begin{equation}  \label{eq:pointwise}
0\leq z_j^i(x)\leq \frac{R}{\sqrt{|x|}} \mbox{ for } i=1,2 \mbox{ and all } j\in\N.    
\end{equation} 
Furthermore $\|z_j\|_{L^2}, \|z_\infty\|_{L^2}\leq R$. The proof will be finished once we have shown that $\|z_\infty\|_{L^2}=R$. Next, for any $X>0$, we define the splitting 
$$
z_j= \widetilde z_j + \overline{z}_j \quad \widetilde z_j := \chi_{[-X,X]}z_j, \quad \overline{z}_j := \chi_{\R\setminus [-X,X]} z_j.
$$
Due to the splitting we have that $\widetilde z_j\stackrel{j\to \infty}{\rightharpoonup} \widetilde z_\infty:= \chi_{[-X,X]} z_\infty$ and that $A\widetilde z_j \to A\widetilde z_\infty$ uniformly as $j\to \infty$, cf. Lemma~\ref{lem:prop_op_2}(iv). Since $A\tilde z_j$ is supported inside $[-X-k,X+k]$ we get that $A\widetilde z_j\to A\widetilde z$ in $L^p(\R)$ as $j\to \infty$ for any $p\in [1,\infty]$. We can therefore write 
$$
J(\widetilde z_\infty)=J(\widetilde z_j)+ o_j(1)
$$
where $o_j(1)\to 0$ as $j\to \infty$ when $X$ is kept fixed. Next we split the real line into three sets 
$$
\R = \bigcup_{i=1}^3 I_i(X), \quad \mbox{ with } I_1(X)=\left\{x\in\R: |x| \leq X-\frac{k}{2}\right\}, \quad I_2(X) = \left\{x\in \R: \bigl||x|-X\bigr| \leq \frac{k}{2}\right\}, \quad I_3(X)=\left\{x\in \R: |x|\geq  X+\frac{k}{2}\right\}.
$$
We will now determine estimates for $\int_{I_i(X)} \Phi(-Az_j)\,dx$ for $i=1, 2, 3$. First, using \eqref{eq:pointwise} we get that 
$$
\int_{I_2(X)} \Phi(-Az_j)\,dx \leq o_X(1) \mbox{ uniformly in $j\in \N$ as $X\to \infty$.}
$$
For $x\in I_1(X)$ we have $-X\leq x-\frac{k}{2} \leq x+ \frac{k}{2}\leq X$, i.e., $Az_j = A\widetilde z_j$ since $A\overline{z}_j=0$. Therefore 
$$
\int_{I_1(X)} \Phi(-Az_j)\,dx = \int_{I_1(X)} \Phi(-A\widetilde z_j)\,dx \leq \int_\R \Phi(-A\widetilde z_j)\,dx
$$
Finally, for $x\in I_3(X)$ we have $x-\frac{k}{2}>X$ if $x>0$ and $x+\frac{k}{2}<-X$ if $x<0$. In any case we have $A z_j= A\overline{z}_j$ since this time $A\widetilde z_j=0$. Therefore
$$
\int_{I_3(X)} \Phi(-Az_j)\,dx = \int_{I_3(X)} \Phi(-A\overline{z}_j)\,dx\leq \int_{\R} \Phi(-A\overline{z}_j)\,dx = \int_\R \Phi_0(-A\overline{z}_j)\,dx + o_X(1)
$$
uniformly in $j$ since $|A\overline{z}_j|\leq \frac{2kR}{\sqrt{|X|-k/2}}$. Altogether this shows that 
\begin{equation} \label{eq:altogether}
J(z_j) \leq J(\widetilde z_j)+J_0(\overline{z}_j)+ o_X(1) = J(\widetilde z_\infty)+ J_0(\overline{z}_j) + o_j(1)+o_X(1).
\end{equation}
Furthermore, since $\|\widetilde z_\infty\|_2\leq R$ we see that 
$$
\frac{R^2}{\|\widetilde z_\infty\|_2^2}J(\widetilde z_\infty) \leq J\left(\frac{R}{\|\widetilde z_\infty\|_2}\widetilde z_\infty\right) \leq j(R)
$$
and 
$$
\frac{R^2}{\|\overline{z}_j\|_2^2} J_0(\overline{z}_j) = J_0\left(\frac{R}{\|\overline{z}_j\|_2}\overline{z}_j\right) \leq j_0(R)
$$
so that 
$$
J(\widetilde z_\infty) \leq \frac{\|\widetilde z_\infty\|_2^2}{R^2}j(R), \quad J_0(\overline{z}_j) \leq \frac{\|\overline{z}_j\|_2^2}{R^2} j_0(R) 
$$
where these estimates also hold in case $\widetilde z_\infty=0$ or $\overline{z}_j=0$. If we insert this into \eqref{eq:altogether} we obtain 
$$
j(R)+ o_j(1)= J(z_j) \leq \frac{\|\widetilde z_\infty\|_2^2}{R^2}j(R)+\frac{\|\overline{z}_j\|_2^2}{R^2} j_0(R) + o_j(1)+o_X(1)
$$
and hence (sending $j\to \infty$)  
\begin{equation} \label{eq:final}
j(R) \leq \|\widetilde z_\infty\|_2^2\frac{j(R)}{R^2}+ \limsup_{j\in\N} \|\overline{z}_j\|_2^2 \frac{j_0(R)}{R^2}+o_X(1).
\end{equation}
Since $\|\widetilde z_\infty\|_2^2+ \limsup_{j\in \N} \|\overline{z}_j\|_2^2 \leq R^2$ this implies 
$$
\limsup_{j\in\N}\frac{\|\overline{z}_j\|_2^2}{R^2} (\underbrace{j(R)-j_0(R)}_{>0}) \leq o_X(1)
$$
and then \eqref{eq:final} implies $j(R) \leq \|\tilde z_\infty\|_2^2 \frac{j(R)}{R^2}+o_X(1)$, i.e., $\|z_\infty\|_2\geq R$ after having sent $X\to\infty$. This shows that $\|z_j\|_2\stackrel{j\to\infty}{\rightarrow}R=\|z_\infty\|_2$ which together with $z_j\stackrel{j\to\infty}{\rightharpoonup}z_\infty$ finishes the proof that $z_j\stackrel{j\to\infty}{\to} z$ strongly in $L^2(\R)\times L^2(\R)$ and thus we have completed the proof of the lemma.
\end{proof}
}

\bibliographystyle{unsrt}
\bibliography{Chong2025}

\end{document}